\DeclarePairedDelimiter\bra{\langle}{\rvert}
\DeclarePairedDelimiter\ket{\lvert}{\rangle}
\newenvironment{proof}{\textbf{Proof:}}{\hfill$\Box$\newline}
\tikzstyle{env}=[copoint,regular polygon rotate=0,minimum width=0.2cm, fill=black]
\tikzstyle{every picture}=[baseline=-0.25em]
\tikzstyle{dotpic}=[scale=0.5]
\tikzstyle{diredges}=[every to/.style={diredge}]
\tikzstyle{dot graph}=[shorten <=-0.1mm,shorten >=-0.1mm,scale=0.6]
\tikzstyle{plot point}=[circle,fill=black,minimum width=2mm,inner sep=0]
\tikzstyle{braceedge}=[decorate,decoration={brace,amplitude=2mm,raise=-1mm}]
\tikzstyle{small braceedge}=[decorate,decoration={brace,amplitude=1mm,raise=-1mm}]
\tikzstyle{left hook arrow}=[left hook-latex]
\tikzstyle{right hook arrow}=[right hook-latex]
\tikzstyle{dtriangle}=[fill=yellow,draw=black,shape=isosceles triangle,shape border rotate=-90,isosceles triangle stretches=true,inner sep=0.8pt,minimum width=0.25cm,minimum height=2mm]
\tikzstyle{vtriang}=[fill=yellow,draw=black,shape=isosceles triangle,shape border rotate=180,isosceles triangle stretches=true,inner sep=0.8pt,minimum width=0.25cm,minimum height=2mm]
\tikzstyle{trigmc}=[fill=green,draw=black,shape=isosceles triangle,shape border rotate=90,isosceles triangle stretches=true,inner sep=0.8pt,minimum width=0.3cm,minimum height=2mm]
\tikzstyle{vrt}=[fill=yellow,draw=black,shape=isosceles triangle,shape border rotate=0,isosceles triangle stretches=true,inner sep=0.8pt,minimum width=0.25cm,minimum height=2mm]
\tikzstyle{H box}=[rectangle,fill=yellow,draw=black,xscale=0.8,yscale=0.8, inner sep=0.6pt]
\tikzstyle{gbox}=[rectangle,fill=green,draw=black,xscale=1.0,yscale=1.0, inner sep=1.pt]
\tikzstyle{rbox}=[rectangle,fill=red,draw=black,xscale=1.0,yscale=1.0, inner sep=1.pt]
\tikzstyle{zhbx}=[rectangle,fill=white,draw=black,xscale=1.0,yscale=1.0, inner sep=1.6pt]
\tikzstyle{newh}=[rectangle,fill=yellow,draw=black,xscale=2.0,yscale=2.0, inner sep=1.6pt]
\tikzstyle{triangle}=[fill=yellow,draw=black,shape=isosceles triangle,shape border rotate=90,isosceles triangle stretches=true,inner sep=0.8pt,minimum width=0.25cm,minimum height=2mm]
\tikzstyle{bn}=[circle,fill=black,draw=black,scale=.4]
\tikzstyle{wn}=[circle,fill=white,draw=black,scale=.6]
\tikzstyle{dn}=[circle,fill=none,draw=gray]
\tikzstyle{Z dot}=[inner sep=0mm, minimum size=2mm, shape=circle, draw=black, fill={rgb,255: red,221; green,255; blue,221}]
\tikzstyle{Z phase dot}=[minimum size=5mm, font={\footnotesize\boldmath}, shape=rectangle, rounded corners=2mm, inner sep=0.2mm, outer sep=-2mm, scale=0.8, draw=black, fill={rgb,255: red,221; green,255; blue,221}]
\tikzstyle{X dot}=[Z dot, shape=circle, draw=black, fill={rgb,255: red,255; green,136; blue,136}]
\tikzstyle{X phase dot}=[Z phase dot, fill={rgb,255: red,255; green,136; blue,136}, font={\footnotesize\boldmath}]
\tikzstyle{hadamard edge}=[-, dashed, dash pattern=on 2pt off 0.5pt, thick, draw={rgb,255: red,68; green,136; blue,255}]
\tikzstyle{black dot}=[inner sep=0.7mm,minimum width=0pt,minimum height=0pt,fill=black,draw=black,shape=circle]
\tikzstyle{dot}=[black dot]
\tikzstyle{smalldot}=[inner sep=0.4mm,minimum width=0pt,minimum height=0pt,fill=black,draw=black,shape=circle]
\tikzstyle{white dot}=[dot,fill=white]
\tikzstyle{antipode}=[white dot,inner sep=0.3mm,font=\footnotesize]
\tikzstyle{smallwhitedot}=[smalldot,fill=white]
\tikzstyle{alt white dot}=[white dot,label={[xshift=3.07mm,yshift=-0.05mm,font=\footnotesize]left:$*$}]
\tikzstyle{gray dot}=[dot,fill=gray!40!white]
\tikzstyle{smallgraydot}=[smalldot,fill=gray!40!white]
\tikzstyle{box vertex}=[draw=black,rectangle]
\tikzstyle{small box}=[box vertex,fill=white]
\tikzstyle{whitebg}=[fill=white,inner sep=2pt]
\tikzstyle{graph state vertex}=[sg vertex,fill=black]
\tikzstyle{wide copoint}=[fill=white,draw=black,shape=isosceles triangle,shape border rotate=90,isosceles triangle stretches=true,inner sep=1pt,minimum width=1.5cm,minimum height=5mm]
\tikzstyle{wide point}=[fill=white,draw=black,shape=isosceles triangle,shape border rotate=-90,isosceles triangle stretches=true,inner sep=1pt,minimum width=1.5cm,minimum height=4mm]
\tikzstyle{very wide copoint}=[fill=white,draw=black,shape=isosceles triangle,shape border rotate=-90,isosceles triangle stretches=true,inner sep=1pt,minimum width=2.5cm,minimum height=4mm]
\tikzstyle{very wide empty copoint}=[draw=black,shape=isosceles triangle,shape border rotate=-90,isosceles triangle stretches=true,inner sep=1pt,minimum width=2.5cm,minimum height=4mm]
\tikzstyle{symm}=[ultra thick,shorten <=-1mm,shorten >=-1mm]
\tikzstyle{square box}=[rectangle,fill=white,draw=black,minimum height=5mm,minimum width=5mm,font=\small]
\tikzstyle{square gray box}=[rectangle,fill=gray!30,draw=black,minimum height=6mm,minimum width=6mm]
\tikzstyle{copoint}=[regular polygon,regular polygon sides=3,draw=black,scale=0.75,inner sep=-0.5pt,minimum width=7mm,fill=white]
\tikzstyle{point}=[regular polygon,regular polygon sides=3,draw=black,scale=0.75,inner sep=-0.5pt,minimum width=7mm,fill=white,regular polygon rotate=180]
\tikzstyle{gray point}=[point,fill=gray!40!white]
\tikzstyle{gray copoint}=[copoint,fill=gray!40!white]
\newcommand{\edgearrow}{{\arrow[black]{>}}}
\newcommand{\edgetick}{{\arrow[black,scale=0.7,very thick]{|}}}
\tikzstyle{diredge}=[->]
\tikzstyle{rdiredge}=[<-]
\tikzstyle{medium diredge}=[->]
\tikzstyle{short diredge}=[->]
\tikzstyle{halfedge}=[-)]
\tikzstyle{other halfedge}=[(-]
\tikzstyle{freeedge}=[(-)]
\tikzstyle{white edge}=[line width=5pt,white]
\tikzstyle{tick}=[postaction=decorate,decoration={markings, mark=at position 0.5 with \edgetick}]
\tikzstyle{small map edge}=[|-latex, gray!60!blue, shorten <=0.9mm, shorten >=0.5mm]
\tikzstyle{thick dashed edge}=[very thick,dashed,gray!40]
\tikzstyle{map edge}=[|-latex,very thick, gray!40, shorten <=1mm, shorten >=0.5mm]
\tikzstyle{tickedge}=[postaction=decorate,
\tikzstyle{dirtickedge}=[postaction=decorate,
\tikzstyle{dirdoubletickedge}=[postaction=decorate,
\newcommand{\boxshape}[3]{%
\pgfdeclareshape{#1}{
\inheritsavedanchors[from=rectangle] 
\inheritanchorborder[from=rectangle]
\inheritanchor[from=rectangle]{center}
\inheritanchor[from=rectangle]{north}
\inheritanchor[from=rectangle]{south}
\inheritanchor[from=rectangle]{west}
\inheritanchor[from=rectangle]{east}
\backgroundpath{
\southwest \pgf@xa=\pgf@x \pgf@ya=\pgf@y
\northeast \pgf@xb=\pgf@x \pgf@yb=\pgf@y

\@tempdima=#2
\@tempdimb=#3

\pgfpathmoveto{\pgfpoint{\pgf@xa - 5pt + \@tempdima}{\pgf@ya}}
\pgfpathlineto{\pgfpoint{\pgf@xa - 5pt - \@tempdima}{\pgf@yb}}
\pgfpathlineto{\pgfpoint{\pgf@xb + 5pt + \@tempdimb}{\pgf@yb}}
\pgfpathlineto{\pgfpoint{\pgf@xb + 5pt - \@tempdimb}{\pgf@ya}}
\pgfpathlineto{\pgfpoint{\pgf@xa - 5pt + \@tempdima}{\pgf@ya}}
\pgfpathclose
}
}}
\tikzstyle{map}=[draw,shape=NEbox,inner sep=7pt]
\tikzstyle{mapdag}=[draw,shape=SEbox,inner sep=7pt]
\tikzstyle{maptrans}=[draw,shape=SWbox,inner sep=7pt]
\tikzstyle{mapconj}=[draw,shape=NWbox,inner sep=7pt]
\tikzstyle{probs}=[shape=semicircle,fill=gray!40!white,draw=black,shape border rotate=180,minimum width=1.2cm]
\tikzstyle{arrs}=[-latex,font=\small,auto]
\tikzstyle{arrow plain}=[arrs]
\tikzstyle{arrow dashed}=[dashed,arrs]
\tikzstyle{arrow bold}=[very thick,arrs]
\tikzstyle{arrow hide}=[draw=white!0,-]
\tikzstyle{arrow reverse}=[latex-]
\tikzstyle{cdnode}=[]
\tikzstyle{gn}=[dot,fill=green,minimum width=0.25cm,inner sep=0pt]
\tikzstyle{rno}=[dot,fill=red,inner sep=0pt,minimum width=0.25cm]
\tikzstyle{rn}=[dot,fill=pink,inner sep=0pt,minimum width=0.25cm]
\tikzstyle{rc}=[dot,thick,fill=white,draw = red,minimum width=0.3cm,inner sep=0pt]
\tikzstyle{gc}=[dot,thick,fill=white,draw= green,inner sep=0pt,minimum width=0.3cm]
\tikzstyle{bc}=[dot,thick,fill=white,draw= blue,minimum width=0.3cm]
\tikzstyle{label}=[circle,fill=white,minimum width=0.3cm]
\tikzstyle{clocklabel}=[dot,fill=yellow,draw=black,font=\tiny,inner sep=0.75pt]
\tikzstyle{rsn}=[circle split,draw,fill=red,font=\tiny,inner sep=0.75pt]
\tikzstyle{gsn}=[circle split,draw,fill=green,font=\tiny,inner sep=0.75pt]
\tikzstyle{bsn}=[circle split,draw,fill=blue,font=\tiny,inner sep=0.75pt]
\tikzstyle{rsc}=[circle split,thick,draw= red,draw,fill=white,font=\tiny,inner sep=0.75pt]
\tikzstyle{gsc}=[circle split,thick,draw= green,draw,fill=white,font=\tiny,inner sep=0.75pt]
\tikzstyle{bsc}=[circle split,thick,draw= blue,draw,fill=white,font=\tiny,inner sep=0.75pt]
\tikzstyle{cnot}=[fill=white,shape=circle,inner sep=-1.4pt]
\tikzstyle{wire label}=[font=\tiny, auto]
\tikzstyle{cdiag}=[matrix of math nodes, row sep=3em, column sep=3em, text height=1.5ex, text depth=0.25ex,inner sep=0.5em]
\tikzstyle{arrow above}=[transform canvas={yshift=0.5ex}]
\tikzstyle{arrow below}=[transform canvas={yshift=-0.5ex}]
\newtheorem{Th}{Theorem}[section]
\newtheorem{theorem}[Th]{Theorem}
\newtheorem{proposition}[Th]{Proposition} 
\newtheorem{lemma}[Th]{Lemma}
\newtheorem{corollary}[Th]{Corollary}
\newtheorem{remark}[Th]{Remark}
\newcommand{\vast}{\bBigg@{6.5}}
\newcommand{\vertrule}[1][1ex]{\rule{.4pt}{#1}}
\title{ Completeness of algebraic ZX-calculus over arbitrary commutative rings and semirings}
\author{Quanlong Wang\\
Department of Computer Science,
University of Oxford\\
Cambridge Quantum Computing Ltd.}
\begin{document}
\date{}\maketitle
\begin{abstract}
ZX-calculus is a strict mathematical formalism for graphical quantum computing which is based on the field of complex numbers. In this paper, we extend its power by generalising ZX-calculus to such an extent that it is universal both in an arbitrary commutative ring and in an arbitrary commutative semiring. Furthermore, we follow the framework of \cite{wangalgnorm2020} to prove respectively that the proposed ZX-calculus over an arbitrary commutative ring (semiring) is complete for matrices over the same ring (semiring), via a normal form inspired from matrix elementary operations such as row addition and row multiplication. This work could lead to various applications including doing elementary number theory in string diagrams. 

\end{abstract}


  \section{ Introduction}
 The ZX-calculus was introduced by Coecke and Duncan \cite{CoeckeDuncan} as a graphical language for quantum computing, especially for quantum circuits \cite{coeckewang}. The core part of ZX-calculus is a pair of spiders (based on the quantum Z observable and X observable) with strong complementarity \cite{CDKW}.  As a graphical language with string diagrams, the ZX-calculus is quite intuitive. Moreover, it is also mathematically strict: the ZX-calculus dwells in a compact closed category  as well as being a PROP \cite{maclane1965}, thus it is usually presented in terms of generators and rewriting rules.

Until now, the ZX-calculus has been focused on the particular algebraic object $\mathbb C$: rewriting of ZX-calculus diagrams corresponds to algebraic operations on matrices over the field of complex numbers. On the other hand, the ZW-calculus, another graphical language for quantum computing \cite{amar1}, has been generalised to arbitrary commutative rings by  Amar Hadzihasanovic \cite{amar} \cite{amarngwanglics}. This generalisation has brought forth applications in the proof of completeness of ZX-calculus for some fragments of quantum computing \cite{jpvbeyondlics}  \cite{ngwang2}. Therefore, it is natural to ask whether ZX-calculus can also be generalised to arbitrary commutative rings, or even more broadly, to arbitrary commutative semirings.

In the case of the  ZX-calculus over an arbitrary commutative ring $\mathcal{R}$, we can not have the same Hadamard node such as used in \cite{CoeckeDuncan}, due to lack of element like $\frac{1}{\sqrt{2}}$ in a general commutative ring. Similarly, for arbitrary commutative rings, we can not have the exactly same red spiders as usual. In the case of the  ZX-calculus over an arbitrary commutative semiring $\mathcal{S}$,  we even can not have a Hadamard node, since there is no negative element now. For the same reason, we do not have an inverse of the triangle node. The red spider for the semiring case is the same as that of the ring case.

In this paper, based on the framework as given in \cite{wangalg2020},  we describe  generalisations of ZX-calculus over arbitrary commutative rings and arbitrary commutative semirings respectively, by giving the corresponding generators and rewriting rules. Furthermore, following \cite{wangalgnorm2020}, we give the proof of completeness for ZX-calculus over arbitrary commutative rings and semirings. The key idea here is to use a normal form presented in \cite{wangalgnorm2020} based on elementary matrix operations. For the sake of simplicity, we only give details of proofs which don't hold for general commutative rings. As one can imagine, this work paves the way for various applications including doing elementary number theory in string diagrams. 

\section{ ZX-calculus over commutative rings}
The ZX-calculus is based on a compact closed PROP \cite{maclane1965}, which  is a strict symmetric monoidal category whose objects are generated  by one object, with a compact structure  \cite{Coeckebk} as well. Each PROP can be described as a presentation in terms of  generators and relations \cite{BaezCR}. 

Since now we are working over an arbitrary commutative ring $\mathcal{R}$, we won't expect to have the same Hadamard node such as used in \cite{CoeckeDuncan}. Instead, we work in the framework as presented in \cite{wangalg2020}, and use a Hadamard node whose corresponding matrix  is scalar-free (each element is either $1$ or $-1$ in the Hadamard matrix). For the same reason, we use a scalar-free red spider as a generator, with all the coefficients of the terms being $1$ in the summation which represents the corresponding map of the red spider. Already having this,
 we can give the generators of ZX-calculus over  $\mathcal{R}$ in the following table. Note that through out this abstract all the diagrams  should be read from top to bottom. 
 
 \begin{table}[!h]
\begin{center} 
\begin{tabular}{|r@{~}r@{~}c@{~}c|r@{~}r@{~}c@{~}c|}
\hline
$R_{Z,a}^{(n,m)}$&$:$&$n\to m$ & %
	\beginpgfgraphicnamed{TikZit//generalgreenspider}
	\InputIfFileExists{TikZit//generalgreenspider.tikz}{}{\input{./figures/TikZit//generalgreenspider.tikz}}%
	\endpgfgraphicnamed
  & $R_{X}^{(n,m)}$&$:$&$n\to m$ & %
	\beginpgfgraphicnamed{TikZit//redspider0p}
	\InputIfFileExists{TikZit//redspider0p.tikz}{}{\input{./figures/TikZit//redspider0p.tikz}}%
	\endpgfgraphicnamed
\\\hline
$H$&$:$&$1\to 1$ &%
	\beginpgfgraphicnamed{TikZit//newhadamard}
	\InputIfFileExists{TikZit//newhadamard.tikz}{}{\input{./figures/TikZit//newhadamard.tikz}}%
	\endpgfgraphicnamed

 &  $\sigma$&$:$&$ 2\to 2$& %
	\beginpgfgraphicnamed{TikZit//swap}
	\InputIfFileExists{TikZit//swap.tikz}{}{\input{./figures/TikZit//swap.tikz}}%
	\endpgfgraphicnamed
\\\hline
   $\mathbb I$&$:$&$1\to 1$&%
	\beginpgfgraphicnamed{TikZit//Id}
	\InputIfFileExists{TikZit//Id.tikz}{}{\input{./figures/TikZit//Id.tikz}}%
	\endpgfgraphicnamed
 &  $P$&$:$&$1\to 1$&%
	\beginpgfgraphicnamed{TikZit//redpigate}
	\InputIfFileExists{TikZit//redpigate.tikz}{}{\input{./figures/TikZit//redpigate.tikz}}%
	\endpgfgraphicnamed
  \\\hline
   $C_a$&$:$&$ 0\to 2$& %
	\beginpgfgraphicnamed{TikZit//cap}
	\InputIfFileExists{TikZit//cap.tikz}{}{\input{./figures/TikZit//cap.tikz}}%
	\endpgfgraphicnamed
 &$ C_u$&$:$&$ 2\to 0$&%
	\beginpgfgraphicnamed{TikZit//cup}
	\InputIfFileExists{TikZit//cup.tikz}{}{\input{./figures/TikZit//cup.tikz}}%
	\endpgfgraphicnamed
 \\\hline
  $T$&$:$&$1\to 1$&%
	\beginpgfgraphicnamed{TikZit//triangle}
	\InputIfFileExists{TikZit//triangle.tikz}{}{\input{./figures/TikZit//triangle.tikz}}%
	\endpgfgraphicnamed
  & $T^{-1}$&$:$&$1\to 1$&%
	\beginpgfgraphicnamed{TikZit//triangleinv}
	\InputIfFileExists{TikZit//triangleinv.tikz}{}{\input{./figures/TikZit//triangleinv.tikz}}%
	\endpgfgraphicnamed
 \\\hline
\end{tabular}\caption{Generators of ZX-calculus, where $m,n\in \mathbb N$, $ a  \in \mathcal{R}$.} \label{qbzxgenerator}
\end{center}
\end{table}
\FloatBarrier

There is a standard interpretation $\left\llbracket \cdot \right\rrbracket$ for the ZX diagrams over  $\mathcal{R}$:
\[
\left\llbracket %
	\beginpgfgraphicnamed{TikZit//generalgreenspider}
	\InputIfFileExists{TikZit//generalgreenspider.tikz}{}{\input{./figures/TikZit//generalgreenspider.tikz}}%
	\endpgfgraphicnamed
 \right\rrbracket=\ket{0}^{\otimes m}\bra{0}^{\otimes n}+a\ket{1}^{\otimes m}\bra{1}^{\otimes n},
\left\llbracket %
	\beginpgfgraphicnamed{TikZit//redspider0p}
	\InputIfFileExists{TikZit//redspider0p.tikz}{}{\input{./figures/TikZit//redspider0p.tikz}}%
	\endpgfgraphicnamed
 \right\rrbracket=\sum_{\substack{ i_1, \cdots, i_m,  j_1, \cdots, j_n\in\{0, 1\} \\ i_1+\cdots+ i_m\equiv  j_1+\cdots +j_n(mod~ 2)}}\ket{i_1, \cdots, i_m}\bra{j_1, \cdots, j_n},
\]

\[
\left\llbracket%
	\beginpgfgraphicnamed{TikZit//newhadamard}
	\begin{tikzpicture}
	\begin{pgfonlayer}{nodelayer}
		\node [style=newh] (0) at (0, 0) {};
		\node [style=none] (1) at (0, 0.5) {};
		\node [style=none] (2) at (0, -0.5) {};
	\end{pgfonlayer}
	\begin{pgfonlayer}{edgelayer}
		\draw (1.center) to (2.center);
	\end{pgfonlayer}
\end{tikzpicture}}%
	\endpgfgraphicnamed
\right\rrbracket=\begin{pmatrix}
        1 & 1 \\
        1 & -1
 \end{pmatrix}, \quad \left\llbracket%
	\beginpgfgraphicnamed{TikZit//triangle}
	\begin{tikzpicture}
	\begin{pgfonlayer}{nodelayer}
		\node [style=none] (0) at (0, 0.5) {};
		\node [style=triangle] (1) at (0, 0) {};
		\node [style=none] (2) at (0, -0.5) {};
	\end{pgfonlayer}
	\begin{pgfonlayer}{edgelayer}
		\draw (0.center) to (2.center);
	\end{pgfonlayer}
\end{tikzpicture}}%
	\endpgfgraphicnamed
\right\rrbracket=\begin{pmatrix}
        1 & 1 \\
        0 & 1
 \end{pmatrix}, \quad \quad
  \left\llbracket%
	\beginpgfgraphicnamed{TikZit//triangleinv}
	\begin{tikzpicture}
	\begin{pgfonlayer}{nodelayer}
		\node [style=none] (0) at (0.25, 0.25) {-{\scriptsize1}};
		\node [style=triangle] (1) at (0, 0) {};
		\node [style=none] (2) at (0, -0.5) {};
		\node [style=none] (3) at (0, 0.5) {};
	\end{pgfonlayer}
	\begin{pgfonlayer}{edgelayer}
		\draw (3.center) to (2.center);
	\end{pgfonlayer}
\end{tikzpicture}}%
	\endpgfgraphicnamed
\right\rrbracket=\begin{pmatrix}
        1 & -1 \\
        0 & 1
 \end{pmatrix}, \quad
\left\llbracket%
	\beginpgfgraphicnamed{TikZit//singleredpi}
	\begin{tikzpicture}
	\begin{pgfonlayer}{nodelayer}
		\node [style=none] (0) at (0, 0.5) {};
		\node [style=rn] (1) at (0, 0) {$\pi$};
		\node [style=none] (2) at (0, -0.5) {};
	\end{pgfonlayer}
	\begin{pgfonlayer}{edgelayer}
		\draw (0.center) to (2.center);
	\end{pgfonlayer}
\end{tikzpicture}}%
	\endpgfgraphicnamed
\right\rrbracket=\begin{pmatrix}
        0 & 1 \\
        1 & 0
 \end{pmatrix}, \quad
\left\llbracket%
	\beginpgfgraphicnamed{TikZit//Id}
	\begin{tikzpicture}
	\begin{pgfonlayer}{nodelayer}
		\node [style=none] (1) at (0.5, 0.3) {};
		\node [style=none] (2) at (0.5, -0.3) {};
		\node [style=none] (3) at (0.5, -0.5) {};
		\node [style=none] (4) at (0.5, 0.5) {};
	\end{pgfonlayer}
	\begin{pgfonlayer}{edgelayer}
		\draw (1.center) to (2.center);
	\end{pgfonlayer}
\end{tikzpicture}}%
	\endpgfgraphicnamed
\right\rrbracket=\begin{pmatrix}
        1 & 0 \\
        0 & 1
 \end{pmatrix}, 
  \]

\[
 \left\llbracket%
	\beginpgfgraphicnamed{TikZit//swap}
	\InputIfFileExists{TikZit//swap.tikz}{}{\input{./figures/TikZit//swap.tikz}}%
	\endpgfgraphicnamed
\right\rrbracket=\begin{pmatrix}
        1 & 0 & 0 & 0 \\
        0 & 0 & 1 & 0 \\
        0 & 1 & 0 & 0 \\
        0 & 0 & 0 & 1 
 \end{pmatrix}, \quad
  \left\llbracket%
	\beginpgfgraphicnamed{TikZit//cap}
	\begin{tikzpicture}
	\begin{pgfonlayer}{nodelayer}
		\node [style=none] (0) at (0, -0) {};
		\node [style=none] (1) at (1, -0) {};
	\end{pgfonlayer}
	\begin{pgfonlayer}{edgelayer}
		\draw [bend left=90, looseness=1.50] (0.center) to (1.center);
	\end{pgfonlayer}
\end{tikzpicture}}%
	\endpgfgraphicnamed
\right\rrbracket=\begin{pmatrix}
        1  \\
        0  \\
        0  \\
        1  \\
 \end{pmatrix}, \quad
   \left\llbracket%
	\beginpgfgraphicnamed{TikZit//cup}
	\begin{tikzpicture}
	\begin{pgfonlayer}{nodelayer}
		\node [style=none] (0) at (0, 0.5) {};
		\node [style=none] (1) at (1, 0.5) {};
	\end{pgfonlayer}
	\begin{pgfonlayer}{edgelayer}
		\draw [bend right=90, looseness=1.50] (0.center) to (1.center);
	\end{pgfonlayer}
\end{tikzpicture}}%
	\endpgfgraphicnamed
\right\rrbracket=\begin{pmatrix}
        1 & 0 & 0 & 1 
         \end{pmatrix}, 
 \quad
  \left\llbracket%
	\beginpgfgraphicnamed{TikZit//emptysquare}
	\InputIfFileExists{TikZit//emptysquare.tikz}{}{\input{./figures/TikZit//emptysquare.tikz}}%
	\endpgfgraphicnamed
\right\rrbracket=1,  
   \]

\[  \llbracket D_1\otimes D_2  \rrbracket =  \llbracket D_1  \rrbracket \otimes  \llbracket  D_2  \rrbracket, \quad 
 \llbracket D_1\circ D_2  \rrbracket =  \llbracket D_1  \rrbracket \circ  \llbracket  D_2  \rrbracket,
  \]
where 
$$ a  \in \mathcal{R}, \quad \ket{0}= \begin{pmatrix}
        1  \\
        0  \\
 \end{pmatrix}, \quad 
 \bra{0}=\begin{pmatrix}
        1 & 0 
         \end{pmatrix},
 \quad  \ket{1}= \begin{pmatrix}
        0  \\
        1  \\
 \end{pmatrix}, \quad 
  \bra{1}=\begin{pmatrix}
     0 & 1 
         \end{pmatrix}, \quad %
	\beginpgfgraphicnamed{TikZit//emptysquare}
	\InputIfFileExists{TikZit//emptysquare.tikz}{}{\input{./figures/TikZit//emptysquare.tikz}}%
	\endpgfgraphicnamed

 $$ denotes the empty diagram.
 \begin{remark}
If $ \mathcal{R}=\mathbb C$, then the interpretation of the red spider we defined here is just the normal red spider \cite{CoeckeDuncan}  written in terms of computational basis with all the coefficients being $1$. To see this, one just need to notice that the red spider can be generated by the monoid pair (and its flipped version) %
	\beginpgfgraphicnamed{TikZit//comonoid}
	\InputIfFileExists{TikZit//comonoid.tikz}{}{\input{./figures/TikZit//comonoid.tikz}}%
	\endpgfgraphicnamed
 corresponding to matrices
 $\begin{pmatrix}
    1 & 0  &   0 & 1 \\
       0 & 1&    1 & 0
 \end{pmatrix}$ and 
$ \begin{pmatrix}
         1 \\
         0
 \end{pmatrix}$ respectively, which means the red spider defined in this way is the same as the normal red spider (see e.g. \cite{CoeckeDuncan} ) up to a scalar depending on the number of inputs and outputs of the spider.
 \end{remark}

For simplicity, we make the following conventions: 
\[
	\beginpgfgraphicnamed{TikZit//spider0denote2ring}
	\InputIfFileExists{TikZit//spider0denote2ring.tikz}{}{\input{./figures/TikZit//spider0denote2ring.tikz}}%
	\endpgfgraphicnamed
 
\]

Now we give  rules for ZX-calculus over  $\mathcal{R}$.
   \begin{figure}[!h]
\begin{center} 
\[
\quad \qquad\begin{array}{|cccc|}
\hline
	\beginpgfgraphicnamed{TikZit//generalgreenspiderfusesym}
	\InputIfFileExists{TikZit//generalgreenspiderfusesym.tikz}{}{\input{./figures/TikZit//generalgreenspiderfusesym.tikz}}%
	\endpgfgraphicnamed
&(S1) &%
	\beginpgfgraphicnamed{TikZit//s2new2}
	\InputIfFileExists{TikZit//s2new2.tikz}{}{\input{./figures/TikZit//s2new2.tikz}}%
	\endpgfgraphicnamed
 &(S2)\\
	\beginpgfgraphicnamed{TikZit//induced_compact_structure}
	\InputIfFileExists{TikZit//induced_compact_structure.tikz}{}{\input{./figures/TikZit//induced_compact_structure.tikz}}%
	\endpgfgraphicnamed
&(S3) & %
	\beginpgfgraphicnamed{TikZit//redpispiderfusionring}
	\InputIfFileExists{TikZit//redpispiderfusionring.tikz}{}{\input{./figures/TikZit//redpispiderfusionring.tikz}}%
	\endpgfgraphicnamed
  &(S4) \\
  & &&\\ 
	\beginpgfgraphicnamed{TikZit//b1ring}
	\InputIfFileExists{TikZit//b1ring.tikz}{}{\input{./figures/TikZit//b1ring.tikz}}%
	\endpgfgraphicnamed
&(B1)  & %
	\beginpgfgraphicnamed{TikZit//b2ring}
	\InputIfFileExists{TikZit//b2ring.tikz}{}{\input{./figures/TikZit//b2ring.tikz}}%
	\endpgfgraphicnamed
&(B2)\\ 
    & &&\\ 
	\beginpgfgraphicnamed{TikZit//rpicopyns}
	\InputIfFileExists{TikZit//rpicopyns.tikz}{}{\input{./figures/TikZit//rpicopyns.tikz}}%
	\endpgfgraphicnamed
  &(B3 )& %
	\beginpgfgraphicnamed{TikZit//rdotaempty}
	\InputIfFileExists{TikZit//rdotaempty.tikz}{}{\input{./figures/TikZit//rdotaempty.tikz}}%
	\endpgfgraphicnamed
&(Ept) \\
 & &&\\ 
	\beginpgfgraphicnamed{TikZit//eunoscalar}
	\InputIfFileExists{TikZit//eunoscalar.tikz}{}{\input{./figures/TikZit//eunoscalar.tikz}}%
	\endpgfgraphicnamed
&(EU) &%
	\beginpgfgraphicnamed{TikZit//hcopy2}
	\InputIfFileExists{TikZit//hcopy2.tikz}{}{\input{./figures/TikZit//hcopy2.tikz}}%
	\endpgfgraphicnamed
 &(H)\\
     & &&\\ 
  		  		\hline  
  		\end{array}\]      
  	\end{center}
  	\caption{ZX rules I, over an arbitrary commutative ring  $\mathcal{R}, a, b \in  \mathcal{R}$,   $ \sigma, \tau \in \{0,~\pi\}$, $+$ is a modulo $2\pi$ addition in (S4). The upside-down flipped versions of the rules are assumed to hold as well.}\label{figurealgebra1}
  \end{figure}
 \FloatBarrier
  \begin{figure}[!h]
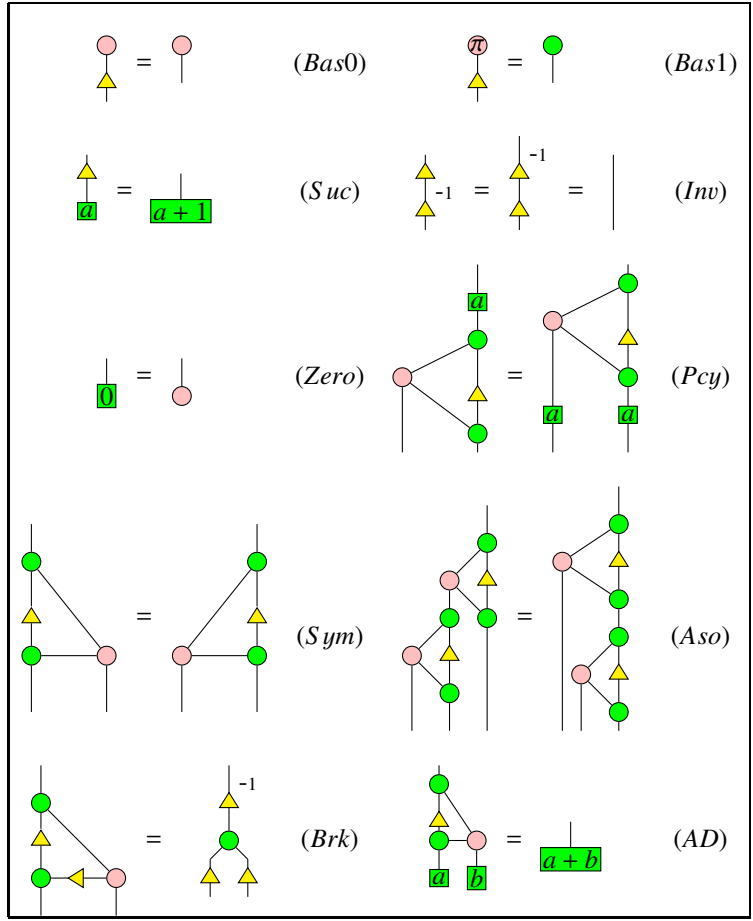

\begin{center}
\[
\quad \qquad\begin{array}{|cccc|}
\hline
  & &&\\ 
	\beginpgfgraphicnamed{TikZit//triangleocopy}
	\InputIfFileExists{TikZit//triangleocopy.tikz}{}{\input{./figures/TikZit//triangleocopy.tikz}}%
	\endpgfgraphicnamed
 &(Bas0) &%
	\beginpgfgraphicnamed{TikZit//trianglepicopyring}
	\InputIfFileExists{TikZit//trianglepicopyring.tikz}{}{\input{./figures/TikZit//trianglepicopyring.tikz}}%
	\endpgfgraphicnamed
&(Bas1)\\
    & &&\\ 
	\beginpgfgraphicnamed{TikZit//plus1}
	\InputIfFileExists{TikZit//plus1.tikz}{}{\input{./figures/TikZit//plus1.tikz}}%
	\endpgfgraphicnamed
&(Suc)& %
	\beginpgfgraphicnamed{TikZit//triangleinvers}
	\InputIfFileExists{TikZit//triangleinvers.tikz}{}{\input{./figures/TikZit//triangleinvers.tikz}}%
	\endpgfgraphicnamed
  & (Inv) \\
    & &&\\ 
	\beginpgfgraphicnamed{TikZit//zerotoredring}
	\InputIfFileExists{TikZit//zerotoredring.tikz}{}{\input{./figures/TikZit//zerotoredring.tikz}}%
	\endpgfgraphicnamed
&(Zero)&%
	\beginpgfgraphicnamed{TikZit//TR1314combine2}
	\InputIfFileExists{TikZit//TR1314combine2.tikz}{}{\input{./figures/TikZit//TR1314combine2.tikz}}%
	\endpgfgraphicnamed
 &(Pcy)\\
  & &&\\ 
	\beginpgfgraphicnamed{TikZit//lemma4}
	\InputIfFileExists{TikZit//lemma4.tikz}{}{\input{./figures/TikZit//lemma4.tikz}}%
	\endpgfgraphicnamed
&(Sym) &  %
	\beginpgfgraphicnamed{TikZit//associate}
	\InputIfFileExists{TikZit//associate.tikz}{}{\input{./figures/TikZit//associate.tikz}}%
	\endpgfgraphicnamed
 &(Aso)\\ 
  & &&\\ 
	\beginpgfgraphicnamed{TikZit//anddflipring}
	\InputIfFileExists{TikZit//anddflipring.tikz}{}{\input{./figures/TikZit//anddflipring.tikz}}%
	\endpgfgraphicnamed
&(Brk) & %
	\beginpgfgraphicnamed{TikZit//additiongbx}
	\InputIfFileExists{TikZit//additiongbx.tikz}{}{\input{./figures/TikZit//additiongbx.tikz}}%
	\endpgfgraphicnamed
&(AD) \\ 
  		  		\hline
  		\end{array}\]      
  	\end{center}
  	\caption{ZX rules II, over an arbitrary commutative ring  $\mathcal{R},  a, b \in  \mathcal{R}$, the upside-down flipped versions of the rules are assumed to hold as well.}\label{figurealgebra2}
  \end{figure}
 \FloatBarrier

It is a routine check that these rules are sound in the sense that they still hold under the standard interpretation $\left\llbracket \cdot \right\rrbracket$.

 \section{Simple derivable equalities for commutative rings}
 In this section, we list equalities from algebraic rules in Figure \ref{figurealgebra1}.  Some equalities have been essentially derived  in \cite{wangalgnorm2020}, we just list them without giving the proof if it still holds in the case of rings. 
  For simplicity, we give two denotations as follows:
 \begin{equation}    \label{andshortnotationeq}
	\beginpgfgraphicnamed{TikZit//andshortnote}
	\InputIfFileExists{TikZit//andshortnote.tikz}{}{\input{./figures/TikZit//andshortnote.tikz}}%
	\endpgfgraphicnamed
 
      \end{equation}

  Clearly, they have the  following relation
  \begin{equation}    \label{andshortnoterelat}
	\beginpgfgraphicnamed{TikZit//andshortnoterelation}
	\InputIfFileExists{TikZit//andshortnoterelation.tikz}{}{\input{./figures/TikZit//andshortnoterelation.tikz}}%
	\endpgfgraphicnamed
 
      \end{equation}




      




   \begin{lemma}\cite{wangalg2020}\label{triangleonreddotlm}
	\beginpgfgraphicnamed{TikZit//triangleonreddot}
	\InputIfFileExists{TikZit//triangleonreddot.tikz}{}{\input{./figures/TikZit//triangleonreddot.tikz}}%
	\endpgfgraphicnamed
 
 \end{lemma}  
 
     \begin{lemma}\label{gdothredlm}
	\beginpgfgraphicnamed{TikZit//gdothred}
	\InputIfFileExists{TikZit//gdothred.tikz}{}{\input{./figures/TikZit//gdothred.tikz}}%
	\endpgfgraphicnamed
 
   \end{lemma} 
    \begin{proof}
	\beginpgfgraphicnamed{TikZit//gdothredprf}
	\InputIfFileExists{TikZit//gdothredprf.tikz}{}{\input{./figures/TikZit//gdothredprf.tikz}}%
	\endpgfgraphicnamed
 
    \end{proof}    
    
       \begin{lemma}\label{gpidotredpilm}
	\beginpgfgraphicnamed{TikZit//gpidotredpi}
	\InputIfFileExists{TikZit//gpidotredpi.tikz}{}{\input{./figures/TikZit//gpidotredpi.tikz}}%
	\endpgfgraphicnamed
 
   \end{lemma} 
   \begin{proof}
	\beginpgfgraphicnamed{TikZit//gpidotredpiprf}
	\InputIfFileExists{TikZit//gpidotredpiprf.tikz}{}{\input{./figures/TikZit//gpidotredpiprf.tikz}}%
	\endpgfgraphicnamed
 
    \end{proof}   
 
     \begin{lemma}\label{b3ringlm}
	\beginpgfgraphicnamed{TikZit//b3ring}
	\InputIfFileExists{TikZit//b3ring.tikz}{}{\input{./figures/TikZit//b3ring.tikz}}%
	\endpgfgraphicnamed
 
   \end{lemma} 
    \begin{proof}
	\beginpgfgraphicnamed{TikZit//b3ringprf}
	\InputIfFileExists{TikZit//b3ringprf.tikz}{}{\input{./figures/TikZit//b3ringprf.tikz}}%
	\endpgfgraphicnamed
 
    \end{proof}    
    
    \begin{lemma}\cite{wangalgnorm2020}\label{hopfnslm}
	\beginpgfgraphicnamed{TikZit//hopfns2}
	\InputIfFileExists{TikZit//hopfns2.tikz}{}{\input{./figures/TikZit//hopfns2.tikz}}%
	\endpgfgraphicnamed
 (Hopf)
\end{lemma}
The first equality is proved in \cite{wangalgnorm2020} which is also valid here.
 The second equality follows from the first one by taking transpose on both sides.

     \begin{lemma}\cite{wangalgnorm2020}
	\beginpgfgraphicnamed{TikZit//redpitogreen2}
	\InputIfFileExists{TikZit//redpitogreen2.tikz}{}{\input{./figures/TikZit//redpitogreen2.tikz}}%
	\endpgfgraphicnamed
 (Bas1')
  \end{lemma} 

    \begin{lemma}\cite{wangalgnorm2020}\label{2eprf}
	\beginpgfgraphicnamed{TikZit//zx2e}
	\InputIfFileExists{TikZit//zx2e.tikz}{}{\input{./figures/TikZit//zx2e.tikz}}%
	\endpgfgraphicnamed
 
    \end{lemma}   

   \begin{lemma}\cite{wangalgnorm2020}
\[ %
	\beginpgfgraphicnamed{TikZit//brkvariant}
	\InputIfFileExists{TikZit//brkvariant.tikz}{}{\input{./figures/TikZit//brkvariant.tikz}}%
	\endpgfgraphicnamed
   (Brk) \]
    \end{lemma}

  \begin{lemma}\cite{wangalgnorm2020}\label{trianglehopflm}
\[ %
	\beginpgfgraphicnamed{TikZit//trianglehopfns}
	\InputIfFileExists{TikZit//trianglehopfns.tikz}{}{\input{./figures/TikZit//trianglehopfns.tikz}}%
	\endpgfgraphicnamed
  \]
    \end{lemma}

  \begin{lemma}\cite{wangalgnorm2020}\label{2mprf}
	\beginpgfgraphicnamed{TikZit//2triangledeloopnopi2}
	\InputIfFileExists{TikZit//2triangledeloopnopi2.tikz}{}{\input{./figures/TikZit//2triangledeloopnopi2.tikz}}%
	\endpgfgraphicnamed
 
 \end{lemma}  
 

   \begin{lemma}\cite{wangalgnorm2020}\label{2trianglebw2gnlmdmlm}
\[ %
	\beginpgfgraphicnamed{TikZit//2trianglebw2gnlmdm}
	\InputIfFileExists{TikZit//2trianglebw2gnlmdm.tikz}{}{\input{./figures/TikZit//2trianglebw2gnlmdm.tikz}}%
	\endpgfgraphicnamed
  \]
 \end{lemma}
 
 \begin{corollary} \cite{wangalgnorm2020}\label{andcopy}
	\beginpgfgraphicnamed{TikZit//andcopymet}
	\InputIfFileExists{TikZit//andcopymet.tikz}{}{\input{./figures/TikZit//andcopymet.tikz}}%
	\endpgfgraphicnamed

  \end{corollary}

   \begin{lemma}\cite{wangalgnorm2020}\label{TR4g}
  \begin{equation}\label{TR4geq}
	\beginpgfgraphicnamed{TikZit//tr4g2}
	\InputIfFileExists{TikZit//tr4g2.tikz}{}{\input{./figures/TikZit//tr4g2.tikz}}%
	\endpgfgraphicnamed
 
   \end{equation}
    \end{lemma}

    \begin{lemma}\cite{wangalgnorm2020}\label{Hopfgtr}
  \begin{equation}\label{Hopfgtreq}
	\beginpgfgraphicnamed{TikZit//trianglehopfgreen2}
	\InputIfFileExists{TikZit//trianglehopfgreen2.tikz}{}{\input{./figures/TikZit//trianglehopfgreen2.tikz}}%
	\endpgfgraphicnamed

  \end{equation}
    \end{lemma}

  \begin{lemma}\cite{msw2017}\label{gpiinhadalm}
	\beginpgfgraphicnamed{TikZit//gpiinhada}
	\InputIfFileExists{TikZit//gpiinhada.tikz}{}{\input{./figures/TikZit//gpiinhada.tikz}}%
	\endpgfgraphicnamed

\end{lemma}

 
   \begin{lemma}\label{gpidotcopylm}
	\beginpgfgraphicnamed{TikZit//gpidotcopy}
	\InputIfFileExists{TikZit//gpidotcopy.tikz}{}{\input{./figures/TikZit//gpidotcopy.tikz}}%
	\endpgfgraphicnamed
  (B3)
\end{lemma}
   \begin{proof}
  $$%
	\beginpgfgraphicnamed{TikZit//gpidotcopyprf}
	\InputIfFileExists{TikZit//gpidotcopyprf.tikz}{}{\input{./figures/TikZit//gpidotcopyprf.tikz}}%
	\endpgfgraphicnamed
  $$
  We also call this equality (B3).
  \end{proof}

\begin{lemma}\cite{wangalgnorm2020}\label{pimultiplecplm}
Suppose $m \geq 0$. Then
	\beginpgfgraphicnamed{TikZit//pigrcopy}
	\InputIfFileExists{TikZit//pigrcopy.tikz}{}{\input{./figures/TikZit//pigrcopy.tikz}}%
	\endpgfgraphicnamed
 (Pic)
\end{lemma}
   \begin{proof}
\[  %
	\beginpgfgraphicnamed{TikZit//pigrcopyprf}
	\InputIfFileExists{TikZit//pigrcopyprf.tikz}{}{\input{./figures/TikZit//pigrcopyprf.tikz}}%
	\endpgfgraphicnamed
  \]
The general case follows directly from the above two special cases.
   \end{proof}  

  \begin{lemma}\label{gpihrpilm}
	\beginpgfgraphicnamed{TikZit//gpihrpi}
	\InputIfFileExists{TikZit//gpihrpi.tikz}{}{\input{./figures/TikZit//gpihrpi.tikz}}%
	\endpgfgraphicnamed
 
\end{lemma}
  \begin{proof}
	\beginpgfgraphicnamed{TikZit//gpihrpiprf}
	\InputIfFileExists{TikZit//gpihrpiprf.tikz}{}{\input{./figures/TikZit//gpihrpiprf.tikz}}%
	\endpgfgraphicnamed
  
   \end{proof}  
   
     \begin{corollary}\label{gpihrpitransposelm}
	\beginpgfgraphicnamed{TikZit//gpihrpitranspose}
	\InputIfFileExists{TikZit//gpihrpitranspose.tikz}{}{\input{./figures/TikZit//gpihrpitranspose.tikz}}%
	\endpgfgraphicnamed
 
         \end{corollary}  
\begin{proof}
This can be obtained from lemma \ref{gpihrpilm} by transpose on both sides.
   \end{proof}  

\begin{lemma}\cite{wangalgnorm2020}\label{pimultiplecplm}
Suppose $m \geq 0$. Then
	\beginpgfgraphicnamed{TikZit//pimultiplecp}
	\InputIfFileExists{TikZit//pimultiplecp.tikz}{}{\input{./figures/TikZit//pimultiplecp.tikz}}%
	\endpgfgraphicnamed
  (Pic)
\end{lemma}
 \begin{proof}
\[  %
	\beginpgfgraphicnamed{TikZit//pimultiplecpprf}
	\InputIfFileExists{TikZit//pimultiplecpprf.tikz}{}{\input{./figures/TikZit//pimultiplecpprf.tikz}}%
	\endpgfgraphicnamed
  \]
The general case follows directly from the above two special cases. We also call this equality (Pic) for convenience. 
   \end{proof}

  \begin{corollary}\cite{wangalgnorm2020}\label{2triangledeloopnopiflipnslm}
  \begin{equation}
	\beginpgfgraphicnamed{TikZit//2triangledeloopnopiflipns}
	\InputIfFileExists{TikZit//2triangledeloopnopiflipns.tikz}{}{\input{./figures/TikZit//2triangledeloopnopiflipns.tikz}}%
	\endpgfgraphicnamed
 \quad (Brk1') 
   \end{equation}  
    \end{corollary}  

    \begin{lemma}\cite{wangalgnorm2020}\label{trianglerpidotlm}
	\beginpgfgraphicnamed{TikZit//trianglerpidot}
	\InputIfFileExists{TikZit//trianglerpidot.tikz}{}{\input{./figures/TikZit//trianglerpidot.tikz}}%
	\endpgfgraphicnamed
 
    \end{lemma}     

 \begin{lemma}\cite{wangalgnorm2020}\label{zeroprime}
	\beginpgfgraphicnamed{TikZit//zerodecom2}
	\InputIfFileExists{TikZit//zerodecom2.tikz}{}{\input{./figures/TikZit//zerodecom2.tikz}}%
	\endpgfgraphicnamed
 (Zero') 
  \end{lemma}

\begin{lemma}\cite{wangalgnorm2020}\label{tr5primelm}
\[%
	\beginpgfgraphicnamed{TikZit//tr5prime2}
	\InputIfFileExists{TikZit//tr5prime2.tikz}{}{\input{./figures/TikZit//tr5prime2.tikz}}%
	\endpgfgraphicnamed
\]
\end{lemma}

\begin{lemma}\cite{wangalgnorm2020}\label{1triangle1pibw2gnlm} 
\[ %
	\beginpgfgraphicnamed{TikZit//1triangle1pibw2gndm}
	\InputIfFileExists{TikZit//1triangle1pibw2gndm.tikz}{}{\input{./figures/TikZit//1triangle1pibw2gndm.tikz}}%
	\endpgfgraphicnamed
  \]
 \end{lemma}

 \begin{lemma}\cite{wangalgnorm2020}\label{1tricpto2redlm} 
\[ %
	\beginpgfgraphicnamed{TikZit//1tricpto2red}
	\InputIfFileExists{TikZit//1tricpto2red.tikz}{}{\input{./figures/TikZit//1tricpto2red.tikz}}%
	\endpgfgraphicnamed
  \]
 \end{lemma}

   \begin{lemma}\cite{wangalgnorm2020}\label{trianglecopylrlm}
\[  %
	\beginpgfgraphicnamed{TikZit//trianglecopylr}
	\InputIfFileExists{TikZit//trianglecopylr.tikz}{}{\input{./figures/TikZit//trianglecopylr.tikz}}%
	\endpgfgraphicnamed
 \]
 \end{lemma}    

  \begin{lemma}\cite{wangalgnorm2020}
   \begin{equation*}
	\beginpgfgraphicnamed{TikZit//equivalentaddrulens}
	\InputIfFileExists{TikZit//equivalentaddrulens.tikz}{}{\input{./figures/TikZit//equivalentaddrulens.tikz}}%
	\endpgfgraphicnamed
 \quad (AD') 
   \end{equation*}    
     \end{lemma}
    \begin{proof}
$$%
	\beginpgfgraphicnamed{TikZit//equivalentaddruleprfring}
	\InputIfFileExists{TikZit//equivalentaddruleprfring.tikz}{}{\input{./figures/TikZit//equivalentaddruleprfring.tikz}}%
	\endpgfgraphicnamed
  $$
   \end{proof}


 \begin{lemma}\cite{wangalgnorm2020}
	\beginpgfgraphicnamed{TikZit//definitionTriangleInverse2}
	\InputIfFileExists{TikZit//definitionTriangleInverse2.tikz}{}{\input{./figures/TikZit//definitionTriangleInverse2.tikz}}%
	\endpgfgraphicnamed
 (Ivt)
   \end{lemma}

\begin{lemma}\cite{wangalgnorm2020}\label{1iprf}
 %
	\beginpgfgraphicnamed{TikZit//k2ring}
	\InputIfFileExists{TikZit//k2ring.tikz}{}{\input{./figures/TikZit//k2ring.tikz}}%
	\endpgfgraphicnamed
 
   \end{lemma} 

  \begin{lemma}\cite{wangalgnorm2020}\label{gpiintriangleslm}
	\beginpgfgraphicnamed{TikZit//gpiintriangles}
	\InputIfFileExists{TikZit//gpiintriangles.tikz}{}{\input{./figures/TikZit//gpiintriangles.tikz}}%
	\endpgfgraphicnamed
 
     \end{lemma}

 \begin{corollary}\cite{wangalgnorm2020}\label{pitinvcomut}
   \begin{equation}\label{pitinvcomuteq}
	\beginpgfgraphicnamed{TikZit/rpitrinverse}
	\InputIfFileExists{TikZit/rpitrinverse.tikz}{}{\input{./figures/TikZit/rpitrinverse.tikz}}%
	\endpgfgraphicnamed
 
     \end{equation}
 \end{corollary}

  \begin{lemma}\cite{wangalgnorm2020}\label{andgate2v}
	\beginpgfgraphicnamed{TikZit//andgate2vs}
	\InputIfFileExists{TikZit//andgate2vs.tikz}{}{\input{./figures/TikZit//andgate2vs.tikz}}%
	\endpgfgraphicnamed
 
 \end{lemma}

  \begin{lemma}\cite{wangalgnorm2020}\label{trianglehopflip}
 \begin{equation}\label{trianglehopflipeq}
	\beginpgfgraphicnamed{TikZit//trianglehopfflip}
	\InputIfFileExists{TikZit//trianglehopfflip.tikz}{}{\input{./figures/TikZit//trianglehopfflip.tikz}}%
	\endpgfgraphicnamed
 
   \end{equation} 
    \end{lemma}

 \begin{lemma}\cite{wangalgnorm2020}\label{2kprf}
	\beginpgfgraphicnamed{TikZit//2triangleup}
	\InputIfFileExists{TikZit//2triangleup.tikz}{}{\input{./figures/TikZit//2triangleup.tikz}}%
	\endpgfgraphicnamed
 
   \end{lemma}  
   

      \begin{lemma}\cite{wangalgnorm2020}
	\beginpgfgraphicnamed{TikZit//anddflipwitha2}
	\InputIfFileExists{TikZit//anddflipwitha2.tikz}{}{\input{./figures/TikZit//anddflipwitha2.tikz}}%
	\endpgfgraphicnamed
 (Brkp)      
   \end{lemma} 

  \begin{lemma}\cite{wangalgnorm2020}\label{andbial}
	\beginpgfgraphicnamed{TikZit//ruleA3}
	\InputIfFileExists{TikZit//ruleA3.tikz}{}{\input{./figures/TikZit//ruleA3.tikz}}%
	\endpgfgraphicnamed
 (BiA)
 \end{lemma}
  

    \begin{corollary}\cite{wangalgnorm2020}\label{generalbialgebra}
	\beginpgfgraphicnamed{TikZit//generalBiA}
	\InputIfFileExists{TikZit//generalBiA.tikz}{}{\input{./figures/TikZit//generalBiA.tikz}}%
	\endpgfgraphicnamed
 
 \end{corollary}
 
 \begin{corollary}\cite{wangalgnorm2020}\label{generalbialgebra}
   For any $k\geq 0$, we have   
$$%
	\beginpgfgraphicnamed{TikZit//generalBiAvariant}
	\InputIfFileExists{TikZit//generalBiAvariant.tikz}{}{\input{./figures/TikZit//generalBiAvariant.tikz}}%
	\endpgfgraphicnamed
 $$
 or equivalently,
$$%
	\beginpgfgraphicnamed{TikZit//appendixL32eqv}
	\InputIfFileExists{TikZit//appendixL32eqv.tikz}{}{\input{./figures/TikZit//appendixL32eqv.tikz}}%
	\endpgfgraphicnamed
 $$
where
$$%
	\beginpgfgraphicnamed{TikZit//appendixL32a}
	\InputIfFileExists{TikZit//appendixL32a.tikz}{}{\input{./figures/TikZit//appendixL32a.tikz}}%
	\endpgfgraphicnamed
, \quad\quad\quad %
	\beginpgfgraphicnamed{TikZit//appendixL32b}
	\InputIfFileExists{TikZit//appendixL32b.tikz}{}{\input{./figures/TikZit//appendixL32b.tikz}}%
	\endpgfgraphicnamed
 $$
 \end{corollary}

\begin{corollary}\cite{wangalgnorm2020}  \label{andadditionco}
$$%
	\beginpgfgraphicnamed{TikZit//andadditioncor}
	\InputIfFileExists{TikZit//andadditioncor.tikz}{}{\input{./figures/TikZit//andadditioncor.tikz}}%
	\endpgfgraphicnamed
 $$ 
 \end{corollary}

  
 \begin{lemma}\cite{wangalgnorm2020}\label{distribute}
	\beginpgfgraphicnamed{TikZit//ruleA1_Draft}
	\InputIfFileExists{TikZit//ruleA1_Draft.tikz}{}{\input{./figures/TikZit//ruleA1_Draft.tikz}}%
	\endpgfgraphicnamed
 (Dis)
 \end{lemma}
 \begin{proof}
 $$  %
	\beginpgfgraphicnamed{TikZit//distributionprfring}
	\InputIfFileExists{TikZit//distributionprfring.tikz}{}{\input{./figures/TikZit//distributionprfring.tikz}}%
	\endpgfgraphicnamed
 $$
  \end{proof}

   \begin{corollary}\cite{wangalgnorm2020}\label{distribute2}
	\beginpgfgraphicnamed{TikZit//distribute2genral}
	\InputIfFileExists{TikZit//distribute2genral.tikz}{}{\input{./figures/TikZit//distribute2genral.tikz}}%
	\endpgfgraphicnamed
  (Dis)
 \end{corollary}
 This follows directly from Corollary \ref{generalbialgebra} and Lemma \ref{distribute}. We also call this equality (Dis) for convenience. 

\section{More complicated derivable equalities for commutative rings}
In this section, we derive more complicated derivable equalities for commutative rings


  \begin{proposition}\cite{wangalgnorm2020}\label{picntcommut}
  Let $i, j_1, \cdots, j_t, \cdots,  j_s \in \{1, \cdots, m\}, i \notin \{j_1, \cdots, j_t, \cdots,  j_s\}$. Then we have
 $$%
	\beginpgfgraphicnamed{TikZit//picntcommute}
	\InputIfFileExists{TikZit//picntcommute.tikz}{}{\input{./figures/TikZit//picntcommute.tikz}}%
	\endpgfgraphicnamed
$$
 where the  node $a$ is connected to $ j_1, \cdots, j_t, \cdots,  j_s $ via pink nodes. 
 \end{proposition}   
  
   \begin{corollary}\cite{wangalgnorm2020}\label{picntcommutcro}
 $$ %
	\beginpgfgraphicnamed{TikZit//picntcommutecro}
	\InputIfFileExists{TikZit//picntcommutecro.tikz}{}{\input{./figures/TikZit//picntcommutecro.tikz}}%
	\endpgfgraphicnamed
$$
 \end{corollary}

 \begin{proposition}\cite{wangalgnorm2020}\label{picntcommutesam}
  Let $i, j_1, \cdots, j_t, \cdots,  j_s \in \{1, \cdots, m\}, i \notin \{j_1, \cdots, j_t, \cdots,  j_s\}$. Then we have
 $$%
	\beginpgfgraphicnamed{TikZit//picntcommutesame}
	\InputIfFileExists{TikZit//picntcommutesame.tikz}{}{\input{./figures/TikZit//picntcommutesame.tikz}}%
	\endpgfgraphicnamed
$$
 where the  node $a$ is connected to $ j_1, \cdots, j_t, \cdots,  j_s $ via pink nodes. 
 \end{proposition}   

Similarly, we have 
\begin{proposition}\cite{wangalgnorm2020}\label{picntcommutesamgrn}
  Let $i, k, j_1, \cdots,  j_s \in \{1, \cdots, m\}, i, k \notin \{j_1, \cdots,   j_s\}$. Then we have
 $$%
	\beginpgfgraphicnamed{TikZit//picntcommutesamegr}
	\InputIfFileExists{TikZit//picntcommutesamegr.tikz}{}{\input{./figures/TikZit//picntcommutesamegr.tikz}}%
	\endpgfgraphicnamed
$$
 where the  node $a$ is connected to $ j_1, \cdots,  j_s $ via pink nodes. 
 \end{proposition}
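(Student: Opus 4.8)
The plan is to reduce this statement to the already–established single–leg version, Proposition~\ref{picntcommutesam}, by peeling off the two legs $i$ and $k$ one at a time with green spider fusion~(S1). The key observation is purely locational: since $i,k\notin\{j_1,\dots,j_s\}$, the gadget being commuted (the node $a$ wired through pink nodes to the legs $j_1,\dots,j_s$) never touches the legs $i$ or $k$, so a spider acting on $i$ and $k$ and a copy of this gadget act on ``disjoint'' parts of the green spider and hence must commute — but this has to be extracted from the rules rather than asserted.

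Concretely, I would first use~(S1) to split the green spider carrying legs $\{i,k,j_1,\dots,j_s,\dots\}$ into a two–legged green spider on leg $i$ and one fresh internal wire, composed with a green spider carrying $\{k,j_1,\dots,j_s,\dots\}$ plus that internal wire; because the $a$–gadget lies entirely on the second factor, Proposition~\ref{picntcommutesam} applies verbatim to slide the gadget across the $i$–leg factor. I would then repeat the manoeuvre for leg $k$: split off a two–legged green spider on $k$, note once more that the gadget meets only $j_1,\dots,j_s$, and invoke Proposition~\ref{picntcommutesam} a second time. Refusing the green spiders with~(S1) and deleting the auxiliary wires recovers the right–hand side; if sliding through each of the two legs leaves behind a duplicated $a$–node, the two copies are merged using the addition rule~(AD) together with Corollary~\ref{andadditionco}, exactly as in the proof of Proposition~\ref{picntcommutesam}. (Alternatively, one can run the argument ``in one shot'' by literally mimicking the proof of Proposition~\ref{picntcommutesam}, with $i$ replaced by the pair $\{i,k\}$; the ``Similarly'' in the statement reflects this.)

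The main obstacle is bookkeeping rather than mathematics: one must take care that at every intermediate step all of $j_1,\dots,j_s$ stay on the same side of each freshly introduced green spider as the $a$–gadget, and that the extra internal wires created by~(S1) are eliminated cleanly at the end; the colour and controlled/target roles of the legs $i$ and $k$ must also be tracked so that the correct instance of Proposition~\ref{picntcommutesam} (and, if needed, Proposition~\ref{picntcommut} or Corollary~\ref{picntcommutcro}) is applied at each leg. Beyond this combinatorial care the derivation introduces no new rule: (S1), (AD), the copy/bialgebra lemmas already recorded, and Proposition~\ref{picntcommutesam} suffice.
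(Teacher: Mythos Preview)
Your proposal is correct and matches the paper's own treatment: the paper gives no explicit proof here but simply prefaces the proposition with ``Similarly, we have'', pointing back to Proposition~\ref{picntcommutesam}. Your alternative of ``running the argument in one shot by literally mimicking the proof of Proposition~\ref{picntcommutesam} with $i$ replaced by the pair $\{i,k\}$'' is exactly what that ``Similarly'' intends, and your more detailed leg-by-leg reduction via~(S1) is a perfectly acceptable unpacking of the same idea.
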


   \begin{corollary}\cite{wangalgnorm2020}\label{picntcommutcro2}
 $$ %
	\beginpgfgraphicnamed{TikZit//picntcommutecro2}
	\InputIfFileExists{TikZit//picntcommutecro2.tikz}{}{\input{./figures/TikZit//picntcommutecro2.tikz}}%
	\endpgfgraphicnamed
$$
 \end{corollary}

In the same way, we have 
    \begin{proposition}\cite{wangalgnorm2020}\label{picntcommuteand}
     $$%
	\beginpgfgraphicnamed{TikZit//picntcommuteandgt}
	\InputIfFileExists{TikZit//picntcommuteandgt.tikz}{}{\input{./figures/TikZit//picntcommuteandgt.tikz}}%
	\endpgfgraphicnamed
$$
   \end{proposition}  
  
    \begin{corollary}\cite{wangalgnorm2020}\label{picntcommuteandcr1}
     $$%
	\beginpgfgraphicnamed{TikZit//picntcommuteandgtcr1}
	\InputIfFileExists{TikZit//picntcommuteandgtcr1.tikz}{}{\input{./figures/TikZit//picntcommuteandgtcr1.tikz}}%
	\endpgfgraphicnamed
$$
   \end{corollary}

    \begin{lemma}\cite{wangalgnorm2020}\label{hopfvar2}
$$ %
	\beginpgfgraphicnamed{TikZit//hopfvariant2}
	\InputIfFileExists{TikZit//hopfvariant2.tikz}{}{\input{./figures/TikZit//hopfvariant2.tikz}}%
	\endpgfgraphicnamed
$$
 \end{lemma}

  \begin{lemma}\cite{wangalgnorm2020}\label{ruletensorad}
$$%
	\beginpgfgraphicnamed{TikZit//ruletensoradd}
	\InputIfFileExists{TikZit//ruletensoradd.tikz}{}{\input{./figures/TikZit//ruletensoradd.tikz}}%
	\endpgfgraphicnamed
 $$
 \end{lemma}

\begin{proposition}\cite{wangalgnorm2020}\label{prop1}
 For any $k\geq 1$, we have
    \begin{equation}\label{prop1eq}
	\beginpgfgraphicnamed{TikZit//propo1}
	\InputIfFileExists{TikZit//propo1.tikz}{}{\input{./figures/TikZit//propo1.tikz}}%
	\endpgfgraphicnamed

 \end{equation}
  \end{proposition}

 \begin{corollary}\cite{wangalgnorm2020}\label{propo1cro1}
 $$ %
	\beginpgfgraphicnamed{TikZit//propo1cr1}
	\InputIfFileExists{TikZit//propo1cr1.tikz}{}{\input{./figures/TikZit//propo1cr1.tikz}}%
	\endpgfgraphicnamed
$$
 \end{corollary}
This can be immediately obtained by plugging pink $\pi$ phase gates from the top and the bottom of the left-most line of diagrams on both sides of  (\ref{prop1eq}).

 \begin{corollary}\cite{wangalgnorm2020}\label{propo1cro2}
 $$ %
	\beginpgfgraphicnamed{TikZit//propo1cr2}
	\InputIfFileExists{TikZit//propo1cr2.tikz}{}{\input{./figures/TikZit//propo1cr2.tikz}}%
	\endpgfgraphicnamed
$$
 \end{corollary}
 This can be obtained by swapping the $1$-th and the $j$-th lines of diagrams on both side of  (\ref{prop1eq}).

 \begin{corollary}\cite{wangalgnorm2020}
 $$ %
	\beginpgfgraphicnamed{TikZit//propo1cr3}
	\InputIfFileExists{TikZit//propo1cr3.tikz}{}{\input{./figures/TikZit//propo1cr3.tikz}}%
	\endpgfgraphicnamed
$$
 \end{corollary} 
  This follows directly from Proposition \ref{prop1} and Corollary \ref{picntcommutcro}.
  
   \begin{corollary}\cite{wangalgnorm2020}\label{nlinestensornormalform}
 \begin{equation}\label{nlinestensornormalformeq}
	\beginpgfgraphicnamed{TikZit//nlinetensornormalform}
	\InputIfFileExists{TikZit//nlinetensornormalform.tikz}{}{\input{./figures/TikZit//nlinetensornormalform.tikz}}%
	\endpgfgraphicnamed
$$
  \end{equation}
 where on the RHD of (\ref{nlinestensornormalformeq}), there are $2^n$ green triangles labeled by $a$ on the right-most $m$ wires, each green triangle  is connected to the left-most $n$ wires via green dots which are surrounded by $k$ pairs of red $\pi$s with $0 \leq k  \leq n$, and different green triangles have different  distribution of pairs of red $\pi$s, that's why there are $\binom{n}{0}+\binom{n}{1} +\cdots + \binom{n}{n}=2^n$ green triangles labeled by $a$.
 \end{corollary}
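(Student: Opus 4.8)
The plan is to prove the statement by induction on $n$, using Proposition~\ref{prop1} (and its corollaries) as the engine that peels off one of the left-most wires at each step while doubling the number of green triangles. First I would fix the convention that a green triangle labelled $a$ is ``controlled'' by a left wire either through a bare green dot or through a green dot flanked by a pair of red $\pi$s, and record that a red-$\pi$ pattern on $n$ wires is determined by the subset of wires carrying such a pair, so there are $\sum_{k=0}^{n}\binom{n}{k}=2^n$ patterns, sorted by the number $k$ of pairs.

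For the base case $n=1$ there are $2^1=2$ green triangles on the $m$ right wires: one connected to the single left wire via a bare green dot ($k=0$) and one via a green dot surrounded by one pair of red $\pi$s ($k=1$). This is exactly Proposition~\ref{prop1} in the case $k=1$ after splitting the green dot feeding the $m$ right wires with the distribution/bialgebra rules (Corollary~\ref{generalbialgebra}, rule (Dis)). For the inductive step, assume the normal form for $n$ left wires and take a diagram with $n+1$ left wires. Using the general bialgebra and distribution rules, separate the $(n+1)$-th left wire from the block feeding the $m$ right wires, producing two copies of the relevant sub-block — one controlled by the first $n$ wires, one by wire $n+1$. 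Apply the induction hypothesis to the $n$-wire block to replace it by $2^n$ green triangles carrying all red-$\pi$ patterns over those wires; then the extra wire contributes, independently, either a bare green connection or a red-$\pi$-flanked one. Distributing this binary choice over each of the $2^n$ existing triangles, and using the commutation results (Proposition~\ref{picntcommutesam}, Corollary~\ref{picntcommutcro}, Proposition~\ref{picntcommuteand} and companions) to slide the new pair of red $\pi$s into the canonical position, yields $2\cdot 2^n = 2^{n+1}$ green triangles realising all red-$\pi$ patterns over the $n+1$ wires; counting them by $k$ gives $\sum_{k=0}^{n+1}\binom{n+1}{k}=2^{n+1}$, as claimed.

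The main obstacle is the bookkeeping rather than any single hard identity. Two checks require care: first, one must verify that the bialgebra/copy rules genuinely apply when the green dots already carry red $\pi$s, i.e.\ that the non-membership hypotheses $i\notin\{j_1,\dots,j_s\}$ of the commutation propositions hold at every step of the peeling; second, when two green triangles labelled $a$ become stacked on the same right wire after a split, one must confirm they recombine without producing a spurious scalar or an extra triangle, which is handled by the triangle-copy lemma (Lemma~\ref{trianglecopylrlm}) together with the addition rule (AD). Since every rule invoked here — bialgebra, distribution, the red-$\pi$ commutation propositions, triangle-copy, and addition — has already been established over an arbitrary commutative ring $\mathcal R$ in the preceding sections, the induction goes through verbatim over $\mathcal R$; only these two combinatorial verifications need to be spelled out.
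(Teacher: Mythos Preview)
The paper does not spell out a proof of this corollary; it is simply listed (with citation to \cite{wangalgnorm2020}) immediately after Proposition~\ref{prop1} and Corollaries~\ref{propo1cro1}, \ref{propo1cro2} and the unlabelled corollary that ``follows directly from Proposition~\ref{prop1} and Corollary~\ref{picntcommutcro}'', so the intended argument is precisely the iterated application of Proposition~\ref{prop1} that you describe. Your induction on $n$ is therefore the same route the paper has in mind.

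One minor remark: your second ``obstacle'' is not really there. After peeling off the $(n{+}1)$-th wire the $2\cdot 2^n$ green triangles are meant to remain distinct (each indexed by a different subset of $\{1,\dots,n{+}1\}$ carrying red-$\pi$ pairs), so no recombination via (AD) or Lemma~\ref{trianglecopylrlm} is needed; the only genuine bookkeeping is your first point, namely checking the side conditions of the commutation results when sliding the new $\pi$-pair into position, and that is exactly what Corollaries~\ref{propo1cro1}--\ref{propo1cro2} and Corollary~\ref{picntcommutcro} provide.
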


 \begin{corollary}\cite{wangalgnorm2020}\label{normalformtensornlines}
 \begin{equation}\label{normalformtensornlineseq}
	\beginpgfgraphicnamed{TikZit//normalformtensorsnlines}
	\InputIfFileExists{TikZit//normalformtensorsnlines.tikz}{}{\input{./figures/TikZit//normalformtensorsnlines.tikz}}%
	\endpgfgraphicnamed
$$
  \end{equation}
 where on the RHD of (\ref{nlinestensornormalformeq}), there are $2^n$ green triangles labeled by $a$ on the left-most $m$ wires, each green triangle  is connected to the right-most $n$ wires via green dots which are surrounded by $k$ pairs of red $\pi$s with $0 \leq k  \leq n$, and different green triangles have different  distribution of pairs of red $\pi$s, that's why there are $\binom{n}{0}+\binom{n}{1} +\cdots + \binom{n}{n}=2^n$ green triangles labeled by $a$.
 \end{corollary}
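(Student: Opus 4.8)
The plan is to obtain (\ref{normalformtensornlineseq}) as the left-to-right mirror image of the normal form (\ref{nlinestensornormalformeq}) already established in Corollary \ref{nlinestensornormalform}. The two statements differ only in that the $n$ ``control'' wires and the $m$ wires carrying the green triangles have been interchanged between the left and the right of the tensor product; the wiring pattern (green dots surrounded by $k$ pairs of red $\pi$s) and the fact that one ranges over all $2^n$ distributions of such pairs are identical in both. Hence it suffices to show that reversing the order of all wires sends a derivable equation of the algebraic ZX-calculus over $\mathcal R$ to another derivable equation, and then to read off what the two sides become.

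Concretely, for a diagram $D\colon X_1\otimes\cdots\otimes X_p \to Y_1\otimes\cdots\otimes Y_q$ I would set $D^{\mathrm r} := \rho_Y \circ D \circ \rho_X^{-1}$, where $\rho_X\colon X_1\otimes\cdots\otimes X_p \to X_p\otimes\cdots\otimes X_1$ is the canonical wire-reversal permutation assembled from the swap $\sigma$ (available since we work in a symmetric monoidal PROP). Using naturality of $\sigma$ and symmetric monoidal coherence one checks $(D\circ E)^{\mathrm r} = D^{\mathrm r}\circ E^{\mathrm r}$ and $(D\otimes E)^{\mathrm r} = E^{\mathrm r}\otimes D^{\mathrm r}$, so $(-)^{\mathrm r}$ is determined on all diagrams by its action on the generators, and it respects the equational theory provided each generator is carried to a derivably equal diagram. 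This is immediate: $H$, $P$, $T$, $T^{-1}$, $\mathbb I$ live on single wires and are fixed; $\sigma^{\mathrm r}=\sigma$; $C_a^{\mathrm r}=C_a$ and $C_u^{\mathrm r}=C_u$ since the cap and cup are symmetric; and $(R_{Z,a}^{(n,m)})^{\mathrm r}=R_{Z,a}^{(n,m)}$, $(R_{X}^{(n,m)})^{\mathrm r}=R_{X}^{(n,m)}$ because the spiders are invariant under arbitrary permutations of their inputs and of their outputs --- for the green spider by (S1), and for the red spider because it is generated by the commutative (co)monoid of the Remark, hence permutation-invariant via (S4) and spider fusion. Therefore $(-)^{\mathrm r}$ maps any equation derivable from the rules of Figures \ref{figurealgebra1}--\ref{figurealgebra2} to a derivable equation.

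Applying $(-)^{\mathrm r}$ to both sides of (\ref{nlinestensornormalformeq}) then produces a valid equality, and it remains to identify its two sides with those of (\ref{normalformtensornlineseq}). On the left one checks directly that reversing the wire order of the LHS of (\ref{nlinestensornormalformeq}) is the LHS of (\ref{normalformtensornlineseq}). On the right, the $m$ wires carrying the green triangles move from the far right to the far left and the $n$ control wires move from the far left to the far right; since each green triangle in (\ref{nlinestensornormalformeq}) is attached to the control wires through green dots carrying $k$ pairs of red $\pi$s, the mirror merely relabels on which of the (now right-most) control wires a given pair of $\pi$s sits. Because the RHS of (\ref{nlinestensornormalformeq}) already contains one green triangle for \emph{every} distribution of pairs of red $\pi$s over the $n$ control wires, this relabelling is a bijection of the index set of $\binom{n}{0}+\binom{n}{1}+\cdots+\binom{n}{n}=2^n$ distributions onto itself, so the mirrored RHS is again the sum over all $2^n$ distributions --- exactly the RHS of (\ref{normalformtensornlineseq}).

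The main obstacle I expect is purely bookkeeping rather than conceptual: one must carefully verify, after pushing all the reversal swaps through the normal-form diagram (using naturality of $\sigma$ and the permutation-invariance of the spiders), that the $2^n$ triangles end up precisely on the left-most $m$ wires, connected in exactly the claimed pattern to the right-most $n$ wires, and that no stray scalars or extra $\pi$ phases are introduced in the process; the bijection-on-distributions observation is the point that makes the two ``sum over $2^n$ triangles'' expressions literally coincide.
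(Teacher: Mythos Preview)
Your proposal is correct. The paper gives no explicit proof for this corollary (it is cited from \cite{wangalgnorm2020}), but your wire-reversal symmetry argument is sound and is precisely the natural way to pass from (\ref{nlinestensornormalformeq}) to (\ref{normalformtensornlineseq}); it is also in the same spirit as the wire-swapping justifications the paper gives for the neighbouring corollaries (e.g.\ Corollary~\ref{propo1cro2}).
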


   \begin{proposition}\cite{wangalgnorm2020}\label{propadprime}
 For any $k\geq 1$, we have
 $$%
	\beginpgfgraphicnamed{TikZit//propaddprime}
	\InputIfFileExists{TikZit//propaddprime.tikz}{}{\input{./figures/TikZit//propaddprime.tikz}}%
	\endpgfgraphicnamed
$$
 \end{proposition}

   
 \begin{corollary}\cite{wangalgnorm2020}\label{propadprimecro}
 $$ %
	\beginpgfgraphicnamed{TikZit//propaddprimecro}
	\InputIfFileExists{TikZit//propaddprimecro.tikz}{}{\input{./figures/TikZit//propaddprimecro.tikz}}%
	\endpgfgraphicnamed
$$
 \end{corollary} 
This can be directly obtained from Proposition \ref{propadprime} and Corollary \ref{picntcommutcro}.

   \begin{lemma}\cite{wangalgnorm2020}\label{ruletensorLsimpler}
	\beginpgfgraphicnamed{TikZit//ruletensorLsim}
	\InputIfFileExists{TikZit//ruletensorLsim.tikz}{}{\input{./figures/TikZit//ruletensorLsim.tikz}}%
	\endpgfgraphicnamed

 \end{lemma}

     \begin{lemma}\cite{wangalgnorm2020}\label{ruletensor}
	\beginpgfgraphicnamed{TikZit//ruletensorL}
	\InputIfFileExists{TikZit//ruletensorL.tikz}{}{\input{./figures/TikZit//ruletensorL.tikz}}%
	\endpgfgraphicnamed
 
 \end{lemma}
  

 \begin{proposition}\cite{wangalgnorm2020}\label{itensorand}
 For any $k\geq 1$, we have
 $$%
	\beginpgfgraphicnamed{TikZit//itensorandgt}
	\InputIfFileExists{TikZit//itensorandgt.tikz}{}{\input{./figures/TikZit//itensorandgt.tikz}}%
	\endpgfgraphicnamed
$$
 \end{proposition}   
  \begin{corollary}\cite{wangalgnorm2020}\label{nlinestensornormalformadd}
 \begin{equation}\label{nlinestensornormalformaddeq}
	\beginpgfgraphicnamed{TikZit//nlinetensornormalformadd}
	\InputIfFileExists{TikZit//nlinetensornormalformadd.tikz}{}{\input{./figures/TikZit//nlinetensornormalformadd.tikz}}%
	\endpgfgraphicnamed
$$
  \end{equation}
 where on the RHD of (\ref{nlinestensornormalformaddeq}), there are $2^n$ AND gates on the left-most $m$ wires, each AND gate  is accompanied by $k$ pairs of red $\pi$s on  the left-most $n$ wires with $0 \leq k  \leq n$, and different AND gates have different  distribution of pairs of red $\pi$s, that's why there are $\binom{n}{0}+\binom{n}{1} +\cdots + \binom{n}{n}=2^n$ AND gates.
 \end{corollary}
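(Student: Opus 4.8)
The plan is to prove this by induction on $n$ (the number of input wires), in exact parallel with the argument behind Corollary~\ref{nlinestensornormalform}, but with the multiplicative gadget there — a green dot feeding a green triangle labelled $a$ — replaced throughout by the additive gadget, i.e.\ an AND gate. The first step is purely bookkeeping: index the $2^n$ AND gates appearing on the right-hand side by the subsets $S\subseteq\{1,\dots,n\}$, where $S$ is the set of input wires on which that AND gate carries a pair of red $\pi$s; the assertion that there are exactly $2^n$ of them is then the identity $\sum_{k=0}^n\binom{n}{k}=2^n$, as stated.

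For the base case $n=1$ the right-hand side has the two AND gates indexed by $\varnothing$ and $\{1\}$, and the identity is just the single-line normal form written in additive (AND-gate) form, which is supplied by Lemma~\ref{ruletensor} (cf.\ Lemma~\ref{ruletensorLsimpler}).

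For the inductive step, suppose the identity holds with $n$ input wires and tensor on one further wire. The main move is Proposition~\ref{itensorand}, together with Corollary~\ref{propadprimecro} (and Proposition~\ref{propadprime} behind it): applied to each of the $2^n$ AND gates furnished by the induction hypothesis, it duplicates that gate into two, one whose leg on the new $(n+1)$-st wire is bare and one whose leg carries a pair of red $\pi$s, so the count doubles to $2^{n+1}$ and the index set grows from subsets of $\{1,\dots,n\}$ to subsets of $\{1,\dots,n+1\}$. It then remains to slide the red $\pi$s into a canonical position and to re-fuse the AND gates sitting on the $m$ output wires; the first is handled by the commutation results already available — Corollary~\ref{picntcommutcro}, Proposition~\ref{picntcommuteand}, Corollary~\ref{picntcommuteandcr1} — which allow a red $\pi$ to be moved past an AND gate, and the second by the bialgebra and copy rules, Corollary~\ref{generalbialgebra} and Corollary~\ref{andadditionco}.

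The part I expect to be delicate is not any single rewrite but the combinatorial bookkeeping: one has to verify that after the duplication step the $2^{n+1}$ AND gates realise all $2^{n+1}$ distinct red-$\pi$ distributions, with none repeated and none omitted, and that the commutations used to tidy up — which in general permute the legs of an AND gate — leave the denotation unchanged. This is a straightforward but fiddly induction-on-$n$ argument woven through the diagrammatic manipulations, and it is where essentially all the care is needed, since the individual rewrite steps are all instances of the lemmas and corollaries listed above.
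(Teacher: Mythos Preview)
Your approach is correct and matches the paper's own treatment: the corollary is stated immediately after Proposition~\ref{itensorand} and is meant to follow from it by induction on $n$, in exact parallel with how Corollary~\ref{nlinestensornormalform} follows from Proposition~\ref{prop1}. The paper gives no explicit proof here (it simply cites \cite{wangalgnorm2020}), so your write-up is in fact more detailed than what the paper provides; the only minor simplification is that the base case can be taken as $n=0$ (trivial) with Proposition~\ref{itensorand} supplying the entire inductive step, so Lemma~\ref{ruletensor} is not separately needed.
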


 \begin{corollary}\cite{wangalgnorm2020}\label{nlinestensormmultiply}
 \begin{equation}\label{nlinestensormmultiplywq}
	\beginpgfgraphicnamed{TikZit//nlinestensormmultiplydm}
	\InputIfFileExists{TikZit//nlinestensormmultiplydm.tikz}{}{\input{./figures/TikZit//nlinestensormmultiplydm.tikz}}%
	\endpgfgraphicnamed
$$
  \end{equation}
 where on the RHD of (\ref{nlinestensormmultiplywq}), there are $2^n$ AND gates on the right-most $m$ wires,each AND gate  is accompanied by $k$ pairs of red $\pi$s on  the left-most $n$ wires with $0 \leq k  \leq n$, and different AND gates have different  distribution of pairs of red $\pi$s, that's why there are $\binom{n}{0}+\binom{n}{1} +\cdots + \binom{n}{n}=2^n$ AND gates.
 \end{corollary}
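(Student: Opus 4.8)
The plan is to read this corollary off from Corollary \ref{nlinestensornormalformadd}, in the same way that Corollary \ref{normalformtensornlines} relates to Corollary \ref{nlinestensornormalform}: the two statements are mirror images, differing only in whether the $2^n$ AND gates sit next to the $m$-wire block on one side or on the other. So I would start from the already-established equation (\ref{nlinestensornormalformaddeq}) and transport it across the relevant symmetry of the calculus, namely the wire permutation that interchanges the $n$-line block with the $m$-wire block, realised diagrammatically by pre- and post-composing both sides with a cascade of swaps $\sigma$ (equivalently, using the compact structure $C_a, C_u$).

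On the left-hand side this is immediate: the left-hand side of (\ref{nlinestensornormalformaddeq}) is a tensor of an $n$-line block with a normal-form block, and conjugating by the block transposition simply swaps the two tensor factors, producing the left-hand side of (\ref{nlinestensormmultiplywq}). On the right-hand side one must push the swaps through, and the key point is that every primitive appearing there is invariant, up to relabelling its legs, under permuting its wires: the green and red spiders by spider fusion (rule (S1)); the boxes $H$ and $P$ because they are single-wire; and the AND gate because AND is symmetric in its two inputs — this last fact being packaged in Lemma \ref{andgate2v}, the bialgebra corollaries \ref{generalbialgebra}, and the copy lemma for AND gates (Corollary \ref{andcopy}). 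Hence the $2^n$ AND gates, each still decorated with its pattern of $k$ red-$\pi$ pairs, come out attached to the $m$-wire block on the opposite side, and the $2^n$ patterns of red-$\pi$ pairs are merely re-indexed by the evident bijection on subsets of the $n$-wire block.

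I expect the only genuinely non-formal step to be the bookkeeping of triangle orientations: the AND gate is built from triangles $T$, and after commuting it past swaps its triangles may appear flipped, so restoring the standard AND-gate shape requires the flipped-triangle identities (Lemmas \ref{trianglehopflm} and \ref{trianglehopflip}). Once that is done, matching the two sides is routine. As an alternative that bypasses Corollary \ref{nlinestensornormalformadd} altogether, one can instead re-run its derivation — iterating Proposition \ref{itensorand}, but adjoining each new line on the opposite side — and then invoke the $\pi$-gate commutation results (e.g. Corollary \ref{picntcommutcro}) exactly as in the analogous corollaries above; the inductive step is then Proposition \ref{itensorand} in mirror image, and the same flipped-triangle lemmas settle the orientations. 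Either route yields (\ref{nlinestensormmultiplywq}).
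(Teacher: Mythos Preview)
Your proposal is correct. The paper itself supplies no proof for this corollary --- it is simply imported from \cite{wangalgnorm2020} --- so there is no in-paper argument to compare against; but your symmetry strategy, conjugating both sides of (\ref{nlinestensornormalformaddeq}) by the block permutation that exchanges the $n$-wire bundle with the $m$-wire bundle, is exactly the natural way to pass from Corollary~\ref{nlinestensornormalformadd} to its mirror, in complete analogy with the pair Corollary~\ref{nlinestensornormalform}/Corollary~\ref{normalformtensornlines}.

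One small over-worry: conjugation by a cascade of $\sigma$'s merely permutes wires and carries each single-wire box (in particular the triangles $T$ inside the AND gadget) along unchanged; no orientations are reversed, so Lemmas~\ref{trianglehopflm} and~\ref{trianglehopflip} are not actually needed for this route. Orientation-flipping would only arise if you used the compact structure $C_a,C_u$ to take a transpose rather than a swap, which is a different symmetry. Your alternative --- rerunning the induction behind Proposition~\ref{itensorand} with the fresh wire adjoined on the opposite side and then invoking the $\pi$-commutation corollaries --- is equally valid and is likely closer in spirit to how the cited reference derives it.
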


     \begin{lemma}\cite{wangalgnorm2020}\label{raddcomplex}
 %
	\beginpgfgraphicnamed{TikZit//addcommutation}
	\InputIfFileExists{TikZit//addcommutation.tikz}{}{\input{./figures/TikZit//addcommutation.tikz}}%
	\endpgfgraphicnamed

 \end{lemma}
  \begin{proof}
   $$  %
	\beginpgfgraphicnamed{TikZit//raddcomutecomplexprfring}
	\InputIfFileExists{TikZit//raddcomutecomplexprfring.tikz}{}{\input{./figures/TikZit//raddcomutecomplexprfring.tikz}}%
	\endpgfgraphicnamed
 $$
   $$  %
	\beginpgfgraphicnamed{TikZit//raddcomutecomplexprfring4}
	\InputIfFileExists{TikZit//raddcomutecomplexprfring4.tikz}{}{\input{./figures/TikZit//raddcomutecomplexprfring4.tikz}}%
	\endpgfgraphicnamed
 $$
  \end{proof}


  

     \begin{proposition}\cite{wangalgnorm2020}\label{addcommutatgen}
     Let $n \geq 0$. Then we have
$$%
	\beginpgfgraphicnamed{TikZit//addcommutationgen}
	\InputIfFileExists{TikZit//addcommutationgen.tikz}{}{\input{./figures/TikZit//addcommutationgen.tikz}}%
	\endpgfgraphicnamed
 $$
 \end{proposition}

       \begin{proposition}\cite{wangalgnorm2020}\label{addcommutatgencont}
     Assume that  node $a$ is connected to $ j_1, \cdots,  j_s $ via pink nodes and node $b$  is connected to $ i_1, \cdots,  i_t $ via pink nodes, where $i_1, \cdots,  i_t,  j_1, \cdots,  j_s \in \{1, \cdots, m\}$, and $\{i_1, \cdots,  i_t\} \neq \emptyset, \{ j_1, \cdots,  j_s\}\neq \emptyset$. Then we have
       \begin{equation}\label{addcommutatgenconteq}
	\beginpgfgraphicnamed{TikZit//addcommutationgenconct}
	\InputIfFileExists{TikZit//addcommutationgenconct.tikz}{}{\input{./figures/TikZit//addcommutationgenconct.tikz}}%
	\endpgfgraphicnamed
 
\end{equation}
 \end{proposition}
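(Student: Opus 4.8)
The plan is to reduce the statement to Proposition~\ref{addcommutatgen}, which already establishes the commutation of the two addition structures in the ``fully connected'' situation, by using the pink-$\pi$ commutation machinery to adjust exactly which of the $m$ output wires the nodes $a$ and $b$ are attached to. First I would use the naturality of swaps (which slide freely past green spiders, AND gates and pink nodes) to reorder the $m$ output wires so that they fall into four consecutive blocks: $I\cap J$, $I\setminus J$, $J\setminus I$ and the wires touched by neither node, where $I=\{i_1,\dots,i_t\}$ and $J=\{j_1,\dots,j_s\}$. On the last block both structures act as identities, so those wires can be deleted from both sides of~(\ref{addcommutatgenconteq}); hence it is enough to treat the case $I\cup J=\{1,\dots,m\}$.

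Next I would push the pink fan-out of $a$ toward the wires of $J$, and the pink fan-out of $b$ toward the wires of $I$, using Proposition~\ref{picntcommuteand} and Corollary~\ref{picntcommuteandcr1} to move pink nodes through AND gates, and Proposition~\ref{picntcommutesamgrn} together with Corollary~\ref{picntcommutcro2} to move pink nodes through green dots. After this, on the blocks $I\setminus J$ and $J\setminus I$ the two structures act on disjoint wires and commute for trivial (interchange-law) reasons, so the only genuine interaction is confined to the overlap block $I\cap J$. On that block I would invoke Proposition~\ref{addcommutatgen} with $n=|I\cap J|$, combined with the generalised bialgebra and distribution corollaries (Lemma~\ref{andbial}, Corollary~\ref{distribute2}) and the AND-gate manipulation results (Lemma~\ref{andgate2v}, Corollary~\ref{andcopy}), together with the ``normal form'' Corollaries~\ref{nlinestensornormalformadd} and~\ref{nlinestensormmultiply}, to split each of the AND gates labelled $a$ and $b$ into the copies required over the shared wires and to exchange their order. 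One probably runs an induction on $s+t$, with base case $s=t=1$ handled by Lemma~\ref{raddcomplex}.

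Finally I would undo the rewiring of the first step, i.e. push the pink fan-outs back out to the full index sets and reverse the swaps, arriving at the right-hand side of~(\ref{addcommutatgenconteq}). I expect the main obstacle to be the bookkeeping on the overlap block $I\cap J$: one must track carefully how the (possibly deeply nested) pink fan-out of $a$ interleaves with that of $b$ on a wire shared by both, and show, using only bialgebra- and Hopf-type rewrites together with Proposition~\ref{addcommutatgen}, that the resulting configuration is symmetric under exchanging $a$ and $b$. Everything else — the reordering of wires, the discarding of untouched wires, and the sliding of pink $\pi$ nodes through AND gates and green dots — is routine once those commutation lemmas are in hand.
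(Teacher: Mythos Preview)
The paper does not actually supply its own proof of Proposition~\ref{addcommutatgencont}: like most of the results in Sections~3 and~4, it is quoted from \cite{wangalgnorm2020} with the proof omitted, on the grounds that the argument for complex matrices carries over unchanged to an arbitrary commutative ring. So there is no in-paper proof to compare against; the proposition sits in the logical order immediately after Lemma~\ref{raddcomplex} and Proposition~\ref{addcommutatgen}, and the intended reader is expected to take the step from the ``uniformly connected'' case \ref{addcommutatgen} to the general case \ref{addcommutatgencont} on trust.

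Your strategy --- reorder wires into the four blocks $I\cap J$, $I\setminus J$, $J\setminus I$, complement, strip the untouched block, and then reduce to Proposition~\ref{addcommutatgen} on the overlap --- is the natural one and uses only results that precede~\ref{addcommutatgencont} in the paper. The one place where your sketch is loose is the claim that on $I\setminus J$ and $J\setminus I$ the two structures ``act on disjoint wires and commute for trivial reasons'': the $a$-structure is a single connected gadget spanning all of $J=(I\cap J)\cup(J\setminus I)$, so it does not literally factor as a tensor product of a piece on $I\cap J$ and a piece on $J\setminus I$. What you really need is to slide the $b$-gadget past the pink legs of the $a$-gadget that reach into $J\setminus I$ (and symmetrically), and this is exactly what Propositions~\ref{picntcommut}--\ref{picntcommutesamgrn} and Corollary~\ref{picntcommutcro2} provide. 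Once those legs are out of the way you are genuinely in the situation of Proposition~\ref{addcommutatgen} on the $|I\cap J|$ shared wires, with the remaining legs dangling harmlessly; your induction on $s+t$ with base case Lemma~\ref{raddcomplex} then goes through. So the plan is sound, but the ``trivial interchange'' step should be replaced by an explicit appeal to the pink-commutation propositions rather than to disjointness of supports.
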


 \begin{proposition}\cite{wangalgnorm2020}\label{multiplypimulticommutg}
 Assume that $n \geq 0$. Then
$$%
	\beginpgfgraphicnamed{TikZit//multiplypimulticommuteg}
	\InputIfFileExists{TikZit//multiplypimulticommuteg.tikz}{}{\input{./figures/TikZit//multiplypimulticommuteg.tikz}}%
	\endpgfgraphicnamed
  $$
 \end{proposition}
\begin{proof}
$$  %
	\beginpgfgraphicnamed{TikZit//multiplypimulticommutegprf2}
	\InputIfFileExists{TikZit//multiplypimulticommutegprf2.tikz}{}{\input{./figures/TikZit//multiplypimulticommutegprf2.tikz}}%
	\endpgfgraphicnamed
 $$
$$  %
	\beginpgfgraphicnamed{TikZit//multiplypimulticommutegprf3}
	\InputIfFileExists{TikZit//multiplypimulticommutegprf3.tikz}{}{\input{./figures/TikZit//multiplypimulticommutegprf3.tikz}}%
	\endpgfgraphicnamed
 $$
$$  %
	\beginpgfgraphicnamed{TikZit//multiplypimulticommutegprf31}
	\InputIfFileExists{TikZit//multiplypimulticommutegprf31.tikz}{}{\input{./figures/TikZit//multiplypimulticommutegprf31.tikz}}%
	\endpgfgraphicnamed
 $$
$$  %
	\beginpgfgraphicnamed{TikZit//multiplypimulticommutegprf41}
	\InputIfFileExists{TikZit//multiplypimulticommutegprf41.tikz}{}{\input{./figures/TikZit//multiplypimulticommutegprf41.tikz}}%
	\endpgfgraphicnamed
 $$
 \end{proof}

 \begin{corollary}\cite{wangalgnorm2020}\label{multiplypimulticommutgcro}
  Assume that $n \geq 0$. Then
 $$ %
	\beginpgfgraphicnamed{TikZit//multiplypimulticommutegcro}
	\InputIfFileExists{TikZit//multiplypimulticommutegcro.tikz}{}{\input{./figures/TikZit//multiplypimulticommutegcro.tikz}}%
	\endpgfgraphicnamed
$$
 \end{corollary}

 \begin{corollary}\cite{wangalgnorm2020}\label{multiplypimulticommutgcro2}
$$ %
	\beginpgfgraphicnamed{TikZit//multiplypimulticommutegcro2}
	\InputIfFileExists{TikZit//multiplypimulticommutegcro2.tikz}{}{\input{./figures/TikZit//multiplypimulticommutegcro2.tikz}}%
	\endpgfgraphicnamed
$$
where the node $a$ and $b$ are connected to $ j_1, \cdots,  j_s $ via pink nodes,  and two red $\pi$ nodes are located on the $i$-th line, $i\notin \{j_1, \cdots,  j_s\}$ or $i\in \{j_1, \cdots,  j_s\}, |\{j_1, \cdots,  j_s\}|\geq 2$. 
 \end{corollary}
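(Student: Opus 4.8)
The plan is to obtain this as a corollary of Corollary~\ref{multiplypimulticommutgcro} by inserting the two red $\pi$ nodes onto the $i$-th wire on both sides of that already-established identity and then checking that they can be slid into place. Concretely, I would take the equation of Corollary~\ref{multiplypimulticommutgcro} --- which interchanges the two multiply boxes $a$ and $b$ when they share the pink-connected support $j_1,\dots,j_s$ --- and pre- and post-compose both sides with a red $\pi$ on line $i$ (legal since it is the same operation on each side). One copy of the red $\pi$ then has to be commuted past the $a$-box and the other past the $b$-box; the $\pi$-commutation results of the ``picntcommut'' family (Proposition~\ref{picntcommut} and Corollaries~\ref{picntcommutcro}, \ref{picntcommutcro2}), together with the copy rules (B3), (Pic) and the bialgebra Corollary~\ref{generalbialgebra}, describe exactly how a red $\pi$ passes through a green dot / pink connector: it either goes through untouched or is reflected into red $\pi$s on the support lines, which then recombine.

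The two alternatives in the hypothesis are precisely the conditions under which this sliding succeeds. If $i \notin \{j_1,\dots,j_s\}$, then line $i$ meets neither box, so the red $\pi$s commute past $a$ and $b$ for free and the identity is immediate. If $i \in \{j_1,\dots,j_s\}$ and $|\{j_1,\dots,j_s\}| \ge 2$, I would choose a partner index $k \in \{j_1,\dots,j_s\}$ with $k \ne i$ and first relocate the two red $\pi$ nodes from line $i$ onto line $k$ using the relevant ``picntcommut'' corollary, so that the configuration becomes one already covered, apply Corollary~\ref{multiplypimulticommutgcro}, and then relocate the red $\pi$s back; the hypothesis $|\{j_1,\dots,j_s\}| \ge 2$ is exactly what guarantees such a $k$ exists. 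The remaining case, $i \in \{j_1,\dots,j_s\}$ with $|\{j_1,\dots,j_s\}| = 1$, is excluded because then there is no spare support line on which to park the red $\pi$s and the sliding argument breaks.

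The step I expect to be the main obstacle is the $\pi$-pushing when $i$ lies in the support: sliding a red $\pi$ through the green-dot fan-out feeding a multiply box produces, via the bialgebra rule (Corollary~\ref{generalbialgebra}) and (Dis), a red $\pi$ on every branch, so I would need to verify that these copies cancel in pairs --- using that two red $\pi$s on a wire compose to the identity via (S4) --- rather than proliferating, and that carrying this out for the $a$-box and the $b$-box at once introduces no spurious obstruction between them. Confirming that the net effect is exactly the original two red $\pi$s straddling the commuted boxes is the bookkeeping the proof must get right; everything else is an instance of results already established above.
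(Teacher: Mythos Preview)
The paper does not supply its own proof of this corollary: it is one of the results imported from \cite{wangalgnorm2020}, and the paper's stated policy is to list such equalities without proof whenever the original derivation carries over unchanged to the ring setting. There is therefore no in-paper argument to compare against.

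Your outline is the natural one and matches how the paper justifies neighbouring corollaries. The case $i\notin\{j_1,\dots,j_s\}$ is indeed free. For $i\in\{j_1,\dots,j_s\}$ with $|\{j_1,\dots,j_s\}|\ge 2$, a cleaner route than pushing the $\pi$s through the fan-out and tracking cancellations is to conjugate by a swap of wires $i$ and some $k\in\{j_1,\dots,j_s\}\setminus\{i\}$: the support set $\{j_1,\dots,j_s\}$ is invariant under that swap, so the multiply boxes are unchanged and only the location of the red $\pi$ pair moves, reducing to an instance already covered by Corollary~\ref{multiplypimulticommutgcro}. This is exactly the device the paper uses for the analogous Corollary~\ref{propo1cro2} (``obtained by swapping the $1$-th and the $j$-th lines''), and it sidesteps the proliferation bookkeeping you flagged as the main obstacle.
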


 \begin{proposition}\cite{wangalgnorm2020}\label{andpicomt}
 $$  %
	\beginpgfgraphicnamed{TikZit//andpicomute}
	\InputIfFileExists{TikZit//andpicomute.tikz}{}{\input{./figures/TikZit//andpicomute.tikz}}%
	\endpgfgraphicnamed
 $$
\end{proposition}

  \begin{proposition}\cite{wangalgnorm2020}\label{addpidoublecom}
Suppose  the nodes $a$ and $b$ are connected to $ j_1, \cdots,  j_s $ via pink nodes,  pairs of red $\pi$ nodes separated by green nodes connected to $a$ are located on  $ i_1, \cdots,  i_t $, and pairs of red $\pi$ nodes separated by green nodes connected to $b$ are located on  $ k_1, \cdots,  k_l $, $ \{i_1, \cdots,  i_t\} \cap \{j_1, \cdots,  j_s\}= \emptyset, \{k_1, \cdots,  k_l\} \cap \{j_1, \cdots,  j_s\}= \emptyset. $ Then we have
 $$  %
	\beginpgfgraphicnamed{TikZit//addpidoublecomte}
	\InputIfFileExists{TikZit//addpidoublecomte.tikz}{}{\input{./figures/TikZit//addpidoublecomte.tikz}}%
	\endpgfgraphicnamed
 $$
\end{proposition}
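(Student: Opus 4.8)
The plan is to reduce the statement to the single-branch results already available, principally Proposition~\ref{addcommutatgencont} (addition-commutation when one branch carries red-$\pi$ pairs), together with Proposition~\ref{multiplypimulticommutg} and its Corollaries~\ref{multiplypimulticommutgcro} and~\ref{multiplypimulticommutgcro2}; the two branches built around $a$ and around $b$ are to be handled separately and then recombined.

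First I would rewrite the left-hand side with the additive rule in the form (AD$'$), expressing the configuration around the top green node as an explicit sum of an $a$-summand and a $b$-summand. The pink $\pi$ on the relevant line is copied onto both summands by (B3)/Lemma~\ref{gpidotcopylm}, so it suffices to commute a pink $\pi$ through each summand on its own; the only point of care is that the two summands are glued along the lines $j_1,\dots,j_s$, so after the commutations the copies must match up on those shared lines.

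Within each summand I would then commute the pink $\pi$ past the AND-gate structure on $j_1,\dots,j_s$ using Proposition~\ref{andpicomt} and Proposition~\ref{picntcommuteand} (with Corollary~\ref{picntcommuteandcr1}), and past the red-$\pi$ pairs on the other lines using Corollary~\ref{multiplypimulticommutgcro}, invoking Corollary~\ref{multiplypimulticommutgcro2} where a line within a branch hosts both kinds of decoration. This is exactly where the hypotheses $\{i_1,\dots,i_t\}\cap\{j_1,\dots,j_s\}=\emptyset$ and $\{k_1,\dots,k_l\}\cap\{j_1,\dots,j_s\}=\emptyset$ are used: they ensure that within each branch the ``AND-gate legs'' and the ``red-$\pi$ pairs'' lie on disjoint sets of wires, so the two kinds of commutation do not interfere and the record of which red $\pi$s survive on which wire stays unambiguous. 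Re-folding the two summands with (AD$'$) then gives the right-hand side.

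I expect the main obstacle to be precisely this bookkeeping: one must check, summand by summand and wire by wire, that after copying and commuting the pink $\pi$ the resulting arrangement of red $\pi$ pairs over $i_1,\dots,i_t$ and $k_1,\dots,k_l$ (together with whatever pink connections remain on $j_1,\dots,j_s$) matches the displayed right-hand side exactly. The disjointness hypotheses are what make this close, since they let the $a$-branch and the $b$-branch be treated as independent instances of Proposition~\ref{addcommutatgencont}/Proposition~\ref{multiplypimulticommutg} whose outputs are then simply added back together; modulo this combinatorial verification, every individual rewrite step is an instance of a lemma established earlier in this section.
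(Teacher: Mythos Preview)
The paper does not supply its own proof of this proposition; it is simply imported from \cite{wangalgnorm2020}, so there is no paper-proof to compare against directly.

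That said, your description does not match the statement. Proposition~\ref{addpidoublecom} asserts that two row-addition gadgets commute with one another: one built from the node $a$ (pink connections on $j_1,\dots,j_s$ together with $\pi$-pairs on $i_1,\dots,i_t$), the other from $b$ (pink connections on the \emph{same} $j_1,\dots,j_s$ together with $\pi$-pairs on $k_1,\dots,k_l$). There is no single ``pink $\pi$ on the relevant line'' to be copied through an $a$-summand and a $b$-summand, and no ``top green node'' carrying a combined $a+b$ to be split by (AD$'$): the $a$-gadget and the $b$-gadget are already separate diagrams, and the task is to swap their order of composition. Your opening move---invoking (AD$'$) and (B3)/Lemma~\ref{gpidotcopylm} to distribute a $\pi$ over two summands---is the template for the $\pi$-copy results (Propositions~\ref{picntcommut}--\ref{picntcommuteand}), not for this one.

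A route that does fit the statement is closer to what you mention only in passing: regard each gadget as an ``add'' core on $j_1,\dots,j_s$ decorated by its own $\pi$-pair branches, use Corollary~\ref{multiplypimulticommutgcro}/\ref{multiplypimulticommutgcro2} to slide one gadget's $\pi$-pair branches past the other gadget (this is exactly where the disjointness hypotheses $\{i_\bullet\}\cap\{j_\bullet\}=\emptyset$ and $\{k_\bullet\}\cap\{j_\bullet\}=\emptyset$ bite, since they keep the $\pi$-pairs off the wires carrying the pink connections of the other core), and then invoke Proposition~\ref{addcommutatgencont} to swap the two add-cores on $j_1,\dots,j_s$. The (AD$'$)/(B3) splitting you describe does not enter.
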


  \begin{proposition}\cite{wangalgnorm2020}\label{multipidoublecom}
Suppose  the  pairs of red $\pi$ nodes separated by green nodes connected to $a$ are located on  $ i_1, \cdots,  i_t $, and pairs of red $\pi$ nodes separated by green nodes connected to $b$ are located on  $ j_1, \cdots,  j_s $. Then we have
 $$  %
	\beginpgfgraphicnamed{TikZit//multipidoublecomte}
	\InputIfFileExists{TikZit//multipidoublecomte.tikz}{}{\input{./figures/TikZit//multipidoublecomte.tikz}}%
	\endpgfgraphicnamed
 $$
\end{proposition}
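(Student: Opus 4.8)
The plan is to follow the same strategy as in the proof of Proposition~\ref{addpidoublecom}, but to exploit that the two gadgets being exchanged are ``multiplication'' gadgets: sliding a green triangle labelled $a$ along a wire past one labelled $b$ realises the ring product, and since $ab=ba$ no disjointness hypothesis on the index sets is needed, which is why this proposition is stated unconditionally. I would prove the equality by induction on $s$, the number of wires on which the $b$-gadget carries a pair of red $\pi$ nodes (one could equally well induct on $t$, by symmetry of the two sides).

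For the base case $s=0$ the $b$-gadget is an undecorated composite of green triangles connected to the wires via green dots, and moving it past the $a$-gadget is an instance of Proposition~\ref{multiplypimulticommutg}, after using the generalized bialgebra Corollary~\ref{generalbialgebra} and the normal-form Corollary~\ref{nlinestensormmultiply} to rearrange the relevant bundles of wires. For the inductive step I would peel one red-$\pi$ pair off the $b$-gadget on some wire $k_l$ using the copy/decomposition rules --- Lemma~\ref{gpidotcopylm} (B3), Lemma~\ref{pimultiplecplm} (Pic) and Corollary~\ref{picntcommutcro2} --- so that $b$'s gadget splits into a smaller $b$-gadget with $s-1$ decorated wires and a single red-$\pi$-pair piece. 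The smaller gadget commutes past the $a$-gadget by the induction hypothesis, the single-pair piece is moved past the $a$-gadget with Corollary~\ref{multiplypimulticommutgcro2} (or Corollary~\ref{multiplypimulticommutgcro} when $k_l$ is an undecorated wire of the $a$-gadget), and the two pieces of $b$ are reassembled on the far side by the same copy rules in reverse.

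The hard part will be the bookkeeping on a wire that carries red-$\pi$ pairs coming from both $a$ and $b$: once the two contributions are brought adjacent on that wire, the two $\pi$ phases must be combined via (S4) and fused by spider fusion (S1), and one must check that the green dots separating the $\pi$s land in exactly the positions dictated by the right-hand side and that no spurious scalar, extra red $\pi$, or extra triangle is produced. This reduces to repeated, individually routine uses of Lemma~\ref{gpidotredpilm}, Lemma~\ref{gpidotcopylm} and Corollary~\ref{picntcommutcro2}, but the order of these rewrites has to be chosen carefully so that the two diagrams match literally; sequencing this argument cleanly is the main obstacle, exactly as it was for Proposition~\ref{addpidoublecom}.
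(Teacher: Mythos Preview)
The paper does not supply its own proof of this proposition; it is one of the results imported wholesale from \cite{wangalgnorm2020}, under the blanket remark that proofs which already go through for general commutative rings are omitted. There is therefore no in-paper argument to compare your proposal against, and your outline would serve as a reconstruction rather than a comparison.

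That said, your inductive plan is a reasonable route and you correctly isolate the genuine difficulty, namely the wires that carry red-$\pi$ pairs coming from both gadgets. Two cautions on the details. First, the base case $s=0$ is precisely the content of Proposition~\ref{multiplypimulticommutg} (and its Corollary~\ref{multiplypimulticommutgcro}); bringing in Corollary~\ref{generalbialgebra} and Corollary~\ref{nlinestensormmultiply} there looks like overkill and risks obscuring a one-line reduction. Second, re-check your use of Corollary~\ref{multiplypimulticommutgcro2}: as stated it concerns nodes ``connected to $j_1,\ldots,j_s$ via pink nodes'', which is the row-\emph{addition} wiring convention, so it may not be the commutation lemma you want for sliding a lone $\pi$-pair past the $a$-multiplication gadget.

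More substantively, the ``peeling off'' step as you phrase it is delicate: a red-$\pi$ pair in the $b$-gadget is wrapped around a green dot that is wired to the $b$-triangle, so it cannot be detached and moved as an independent piece. A cleaner execution of the same inductive idea is to note that on any wire carrying $\pi$-pairs from \emph{both} gadgets the two inner $\pi$ phases are adjacent and cancel by (S4); one can then commute the two remaining green dots on that wire (legitimate because the two triangles act by commutative multiplication), and finally reinsert the cancelled $\pi$'s. This removes the overlap without ever producing a free-floating $\pi$-pair and makes the bookkeeping you flag as the hard part essentially mechanical.
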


  \begin{proposition}\cite{wangalgnorm2020}\label{addpimultiplycommut}
	\beginpgfgraphicnamed{TikZit//addpimultiplycommute}
	\InputIfFileExists{TikZit//addpimultiplycommute.tikz}{}{\input{./figures/TikZit//addpimultiplycommute.tikz}}%
	\endpgfgraphicnamed
 
 \end{proposition}
   \begin{proof}
  $$  %
	\beginpgfgraphicnamed{TikZit//addpimultiplycommuteprfr1}
	\InputIfFileExists{TikZit//addpimultiplycommuteprfr1.tikz}{}{\input{./figures/TikZit//addpimultiplycommuteprfr1.tikz}}%
	\endpgfgraphicnamed
 $$
   $$  %
	\beginpgfgraphicnamed{TikZit//addpimultiplycommuteprf22}
	\InputIfFileExists{TikZit//addpimultiplycommuteprf22.tikz}{}{\input{./figures/TikZit//addpimultiplycommuteprf22.tikz}}%
	\endpgfgraphicnamed
 $$
  \end{proof}

  \begin{proposition}\cite{wangalgnorm2020}\label{addpimultiplycommutg}
 Suppose  the node $a$ is connected to $ j_1, \cdots,  j_s $ via pink nodes,  and a pair of red $\pi$ nodes separated by green nodes connected to $b$ are located on  $ k$, where $k\notin  \{j_1, \cdots,  j_s\},  |\{j_1, \cdots,  j_s\}|\geq 1$, or   $k\in  \{j_1, \cdots,  j_s\},  |\{j_1, \cdots,  j_s\}|\geq 2$. Then we have
  \begin{equation}\label{addpimultiplycommutgeq}
	\beginpgfgraphicnamed{TikZit//addpimultiplycommutegen}
	\InputIfFileExists{TikZit//addpimultiplycommutegen.tikz}{}{\input{./figures/TikZit//addpimultiplycommutegen.tikz}}%
	\endpgfgraphicnamed
 
  \end{equation}
 \end{proposition}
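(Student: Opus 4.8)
The plan is to reduce the statement to the two‑wire base case already established in Proposition~\ref{addpimultiplycommut}, by first decomposing the $a$-gadget (the node $a$ attached through pink nodes to the wires $j_1,\dots,j_s$) into a form in which only one of those wires actually interacts with the $b$-gadget (the pair of red $\pi$ nodes separated by the green node attached to $b$ on wire $k$), and then sliding the $b$-gadget through that single wire exactly as in the ungeneralised case.

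First I would treat the case $k\notin\{j_1,\dots,j_s\}$. Here the $a$-gadget and the $b$-gadget share no wire, so the only obstruction is that the $a$-gadget is spread over several wires. Using the generalised bialgebra rule of Corollary~\ref{generalbialgebra} together with the tensor normal-form corollaries (Corollary~\ref{nlinestensornormalformadd} and Corollary~\ref{nlinestensormmultiply}), I would rewrite the $a$-gadget so that it is generated from a single green dot on the wire where it meets the rest of the diagram, and then push the $b$-gadget past the green spiders using spider fusion $(S1)$, the copy rules $(B1)$, $(B2)$, $(Pic)$, and the already-proven commutations of a red $\pi$ pair with a green spider, i.e.\ Corollary~\ref{picntcommutcro} and Proposition~\ref{andpicomt}. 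Once everything is localised on a single wire this is precisely Proposition~\ref{addpimultiplycommut}.

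Next I would handle the harder case $k\in\{j_1,\dots,j_s\}$ with $|\{j_1,\dots,j_s\}|\geq 2$. The new feature is that wire $k$ carries both a pink node feeding $a$ and the red-$\pi$/green-$b$ gadget, so the two gadgets genuinely overlap. The idea is to use the AND-copy corollary (Corollary~\ref{andcopy}) and the copy rules to split the pink node on wire $k$ into two copies, one of which is routed onto a different $j$-wire so that the disjoint-wire case just established applies; the extra red $\pi$'s produced by this splitting are then re-absorbed using Corollary~\ref{multiplypimulticommutgcro2} and Proposition~\ref{multipidoublecom}. The hypothesis $|\{j_1,\dots,j_s\}|\geq 2$ is exactly what guarantees there is another $j$-wire available to carry the split copy, so that the degenerate single-wire configuration (which is false) never arises. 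Re-fusing the green spiders with $(S1)$, $(Aso)$, $(Sym)$ then yields the claimed identity~(\ref{addpimultiplycommutgeq}).

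The main obstacle I expect is precisely the shared-wire case: bookkeeping the red $\pi$'s introduced when the pink node on wire $k$ is copied, and verifying that after re-assembling the $a$-gadget the resulting distribution of $\pi$-pairs matches the right-hand side. This is a matter of applying the commutation lemmas of the previous section in the correct order, but it is where the side condition on $\{j_1,\dots,j_s\}$ is genuinely used and where essentially all of the diagrammatic work lies.
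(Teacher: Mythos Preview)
The paper does not actually contain a proof of this proposition: it is stated with the citation to \cite{wangalgnorm2020} and no argument is given, the paper's convention (stated at the start of Section~3) being to omit proofs from \cite{wangalgnorm2020} that carry over unchanged to the ring setting. So there is no proof here to compare your proposal against directly; one can only judge whether your outline is consistent with the surrounding machinery.

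On that score, your overall architecture --- split into the cases $k\notin\{j_1,\dots,j_s\}$ and $k\in\{j_1,\dots,j_s\}$, and reduce each to the two-wire base case Proposition~\ref{addpimultiplycommut} --- is the natural one, and you correctly identify that the side condition $|\{j_1,\dots,j_s\}|\geq 2$ is what supplies a spare wire in the overlapping case. Two remarks, though. First, your disjoint case is over-engineered: when $k\notin\{j_1,\dots,j_s\}$ the $b$-gadget sits on a wire carrying no pink leg of the $a$-gadget, so the commutation is essentially Proposition~\ref{picntcommutesam}/\ref{picntcommutesamgrn} (sliding a red-$\pi$ pair past green spiders on unrelated wires) together with the base case, and does not require the tensor normal-form corollaries \ref{nlinestensornormalformadd}, \ref{nlinestensormmultiply}. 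Second, in the overlapping case your ``split the pink node on wire $k$ using Corollary~\ref{andcopy}'' step is not quite right as stated: Corollary~\ref{andcopy} copies the AND gate through a green comultiplication, not a pink node, so you still need to say which rewrite actually detaches one pink leg and re-routes it --- most likely a combination of spider fusion, (B2), and Proposition~\ref{picntcommut} rather than Corollary~\ref{andcopy}. Your proposal acknowledges that this bookkeeping is the crux but does not yet pin down the sequence of rewrites, so as it stands it is a plausible plan rather than a proof.
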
 
 
 

 \begin{proposition}\cite{wangalgnorm2020}\label{addpipairmultiplycommutgp}
 Suppose  the node $a$ is connected to $ j_1, \cdots,  j_s $ via pink nodes,  pairs of red $\pi$ nodes separated by green nodes connected to $a$ are located on  $ i_1, \cdots,  i_t $, and pairs of red $\pi$ nodes separated by green nodes connected to $b$ are located on  $ k_1, \cdots,  k_l , \{i_1, \cdots,  i_t\} \neq  \{k_1, \cdots,  k_l\} $. Either 
$ \{k_1, \cdots,  k_l\} \neq \emptyset, \{k_1, \cdots,  k_l\} \cap \{j_1, \cdots,  j_s\}= \emptyset,  |\{j_1, \cdots,  j_s\}|\geq 1$; or $ \{k_1, \cdots,  k_l\} \neq \emptyset, |\{k_1, \cdots,  k_l\} \cap \{j_1, \cdots,  j_s\}|= 1,  |\{j_1, \cdots,  j_s\}|\geq 2$; or symmetrically, 
$ \{i_1, \cdots,  i_t\} \neq \emptyset, \{i_1, \cdots,  i_t\} \cap \{j_1, \cdots,  j_s\}= \emptyset,  |\{j_1, \cdots,  j_s\}|\geq 1$; or $ \{i_1, \cdots,  i_t\} \neq \emptyset, |\{i_1, \cdots,  i_t\} \cap \{j_1, \cdots,  j_s\}|= 1,  |\{j_1, \cdots,  j_s\}|\geq 2$.

  Then we have

$$%
	\beginpgfgraphicnamed{TikZit//addpipairmultiplycommutg}
	\InputIfFileExists{TikZit//addpipairmultiplycommutg.tikz}{}{\input{./figures/TikZit//addpipairmultiplycommutg.tikz}}%
	\endpgfgraphicnamed
 $$

 \end{proposition}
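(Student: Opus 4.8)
The plan is to treat this as the fully general form of the earlier ``addition past multiplication'' commutations and to derive it by combining Proposition \ref{addpimultiplycommutg} with the ``double $\pi$-pair'' reorganisation results. The hypotheses are precisely the conditions (already familiar from Corollary \ref{multiplypimulticommutgcro2} and Proposition \ref{addpimultiplycommutg}) guaranteeing the existence of a \emph{pivot wire}: a wire on which, after taking the pink connections to $j_1,\dots,j_s$ into account, a Hopf cancellation (Lemma \ref{hopfnslm}) together with a $\pi$-copy rule (Lemma \ref{pimultiplecplm}, Corollary \ref{picntcommutcro}) decouples the two structures.

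First I would expand both compound gates. Using Corollary \ref{generalbialgebra} and Corollary \ref{distribute2}, the addition structure on $a$ unfolds into green triangles labelled $a$ together with its pink connections to $j_1,\dots,j_s$ and its red $\pi$ pairs on $i_1,\dots,i_t$; the structure on $b$ unfolds similarly with red $\pi$ pairs on $k_1,\dots,k_l$. Since by hypothesis $\{i_1,\dots,i_t\}\neq\{k_1,\dots,k_l\}$, in each of the four listed cases there is a wire carrying a $\pi$-pair from exactly one of $a,b$ (or a shared $j$-wire available when $|\{j_1,\dots,j_s\}|\geq 2$); fix this wire as the pivot.

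Then comes the core diagram chase. On the pivot wire I would apply Proposition \ref{addpimultiplycommutg} to commute the single addition past a single $\pi$-pair, and Proposition \ref{multipidoublecom} (for $\pi$-pairs on both the $a$- and the $b$-side) together with Proposition \ref{addpidoublecom} to regroup the remaining double structures; Proposition \ref{picntcommut} and its corollaries are used throughout to slide the pink $\pi$ nodes on $j_1,\dots,j_s$ to whichever side is needed. The four alternatives in the hypothesis are handled uniformly: it suffices to prove one representative case (say $\{k_1,\dots,k_l\}\cap\{j_1,\dots,j_s\}=\emptyset$ with $|\{j_1,\dots,j_s\}|\geq 1$), obtaining the $|\{k\}\cap\{j\}|=1$ case by first using one shared $j$-wire to create a separating wire as in Proposition \ref{multipidoublecom}, and obtaining the two symmetric cases by exchanging the roles of $a$ and $b$ (plugging pink $\pi$ gates as in the derivations of Corollaries \ref{propo1cro1}--\ref{propo1cro2}).

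The main obstacle is the case bookkeeping: one must check that in each configuration the chosen pivot wire actually decouples the two structures, i.e.\ that the Hopf/copy cancellation leaves no spurious green dot or pink node and in particular produces no scalar, and that every intermediate rewrite stays inside the ring-valid fragment. The latter is automatic, since each rule invoked (bialgebra, distribution, Hopf, $\pi$-copy, and Propositions \ref{addpimultiplycommutg}, \ref{addpidoublecom}, \ref{multipidoublecom}) has already been established over an arbitrary commutative ring, and nothing in the argument requires $\tfrac{1}{\sqrt2}$ or division by $2$; so the derivation goes through exactly as in \cite{wangalgnorm2020}.
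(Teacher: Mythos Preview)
The paper does not supply its own proof of this proposition: it is one of the results listed from \cite{wangalgnorm2020} and simply carried over, with the surrounding text noting that proofs are omitted whenever the argument from \cite{wangalgnorm2020} still holds in the ring setting. Your proposal is therefore not in conflict with the paper; it sketches a plausible route through the already ring-validated building blocks (Propositions \ref{addpimultiplycommutg}, \ref{addpidoublecom}, \ref{multipidoublecom}, the $\pi$-copy and Hopf lemmas) and arrives at exactly the paper's implicit conclusion, namely that the derivation goes through as in \cite{wangalgnorm2020}.
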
   

    \begin{proposition}\cite{wangalgnorm2020}\label{TR15}
$$ %
	\beginpgfgraphicnamed{TikZit//pimultiplyabsorbtion}
	\InputIfFileExists{TikZit//pimultiplyabsorbtion.tikz}{}{\input{./figures/TikZit//pimultiplyabsorbtion.tikz}}%
	\endpgfgraphicnamed
 $$
 \end{proposition}
 \begin{proof}
$$  %
	\beginpgfgraphicnamed{TikZit//pimultiplyabsorbtionprfring}
	\InputIfFileExists{TikZit//pimultiplyabsorbtionprfring.tikz}{}{\input{./figures/TikZit//pimultiplyabsorbtionprfring.tikz}}%
	\endpgfgraphicnamed
 $$
 \end{proof}

    \begin{proposition}\cite{wangalgnorm2020}\label{pimultiaddcombinepro}
$$ %
	\beginpgfgraphicnamed{TikZit//pimultiaddcombine}
	\InputIfFileExists{TikZit//pimultiaddcombine.tikz}{}{\input{./figures/TikZit//pimultiaddcombine.tikz}}%
	\endpgfgraphicnamed
 $$
where the node $ab$ is connected to $ j_1, \cdots,  j_s $ via pink nodes.
 \end{proposition}

  \begin{proposition}\cite{wangalgnorm2020}\label{pitopaddpipaircommutprop}
  Suppose  the node $a$ is connected to $ j_1, \cdots,  j_s $ via pink nodes, the node $b$ is connected to $ i_1, \cdots,  i_t $ via pink nodes,  pairs of red $\pi$ nodes separated by green nodes connected to $b$ are located on  $ j_1, \cdots,  j_s $. Furthermore,  $\emptyset\neq \{i_1, \cdots,  i_t\} \subseteq  \{1, \cdots,  n\}, \emptyset\neq \{j_1, \cdots,  j_s\} \subseteq  \{n+1, \cdots,  n+m\}$.
     \begin{equation}\label{pitopaddpipaircommuteq}
	\beginpgfgraphicnamed{TikZit//pitopaddpipaircommut}
	\InputIfFileExists{TikZit//pitopaddpipaircommut.tikz}{}{\input{./figures/TikZit//pitopaddpipaircommut.tikz}}%
	\endpgfgraphicnamed
 
    \end{equation}
  \end{proposition}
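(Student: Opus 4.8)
The plan is to derive this by reducing it, through the bialgebra and distribution machinery, to the simpler commutation facts already in hand — chiefly Proposition~\ref{addpimultiplycommutg}, Proposition~\ref{addpipairmultiplycommutgp}, and Proposition~\ref{pimultiaddcombinepro} — so that the argument of~\cite{wangalgnorm2020} transfers with at most cosmetic changes.

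\textbf{Step 1 (unfold the shorthands).} First I would unfold the two gadgets appearing on the left-hand side of~(\ref{pitopaddpipaircommuteq}). The ``add $a$'' gadget over $j_1,\dots,j_s$ is a green node gathering those wires (through pink nodes) into a triangle labelled $a$; the ``multiply by $b$'' gadget is a green node gathering $i_1,\dots,i_t$ (through pink nodes) into a triangle labelled $b$, together with, on each of $j_1,\dots,j_s$, a pair of red $\pi$s separated by a green node feeding into $b$. Because $\{i_1,\dots,i_t\}\subseteq\{1,\dots,n\}$ sits wholly among the inputs and $\{j_1,\dots,j_s\}\subseteq\{n+1,\dots,n+m\}$ wholly among the outputs, the two gadgets are topologically separated except on the wires $j_1,\dots,j_s$, where a pink leg of $a$ and a red-$\pi$ pair (with its green leg to $b$) sit on the same wire; this localises all the real work to those $s$ wires.

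\textbf{Step 2 (nested induction).} I would then induct on $s=|\{j_1,\dots,j_s\}|$, using Corollary~\ref{generalbialgebra} to peel off one $j$-wire of the ``add $a$'' green node at a time. On the isolated wire, Corollary~\ref{multiplypimulticommutgcro} (and Corollary~\ref{multiplypimulticommutgcro2} in the overlapping case) slides the red-$\pi$ pair connected to $b$ past the pink leg of $a$, while Proposition~\ref{addpidoublecom} together with Proposition~\ref{picntcommutesam} reorders the two green-dotted $\pi$-pairs attached to $a$ and to $b$; the hypotheses that $\{i_1,\dots,i_t\}\cap\{j_1,\dots,j_s\}=\emptyset$ and that both index sets are nonempty are precisely the side-conditions that make these moves legal. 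Once a single wire is isolated, the base case is exactly the situation covered by Proposition~\ref{addpimultiplycommutg}; re-fusing with rule (AD) and Proposition~\ref{pimultiaddcombinepro} recombines the pieces into the right-hand side. An inner induction on $t=|\{i_1,\dots,i_t\}|$, using the distribution rule Corollary~\ref{distribute2} to split the triangle $b$ one input wire at a time, disposes of the multi-input case of $b$.

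\textbf{Step 3 (the main obstacle).} The genuine difficulty is the bookkeeping at the overlap wires $j_1,\dots,j_s$: through each application of the bialgebra and of the commutation lemmas one must keep track of which $\pi$-pairs belong to $a$ and which to $b$, and check that the index-set hypotheses survive every reduction step so that Propositions~\ref{addpidoublecom}, \ref{picntcommutesam}, and \ref{addpipairmultiplycommutgp} stay applicable at each stage. A minor but necessary point is that the whole derivation must remain inside the ring fragment, using only the rules of Figures~\ref{figurealgebra1}–\ref{figurealgebra2} and their already-established consequences; since every lemma invoked above has already been proved for an arbitrary commutative ring $\mathcal{R}$, nothing special to $\mathbb{C}$ enters, and the proof carries over unchanged from~\cite{wangalgnorm2020}.
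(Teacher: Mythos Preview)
Your high-level shape---iterate over the overlap wires $j_1,\dots,j_s$, isolating one at a time and invoking the single-wire commutation lemmas already proved---matches the paper's proof, which is a four-stage diagrammatic calculation followed by ``$RPT$'' (repeat the previous steps) to process the remaining $j$-wires. So the skeleton is right.

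There is, however, a concrete gap. Your final sentence asserts that ``the proof carries over unchanged from~\cite{wangalgnorm2020}'', but the paper's explicit policy is to spell out \emph{only} those proofs that do \emph{not} carry over to a general commutative ring, and this proposition is one of them (the proof figures carry a \texttt{ring} suffix). So something in the original $\mathbb{C}$-argument fails here, and your sketch neither identifies what it is nor shows how your chosen lemmas avoid it. Relatedly, Step~2 is too loose to stand as a proof: Corollary~\ref{generalbialgebra} concerns the AND/green bialgebra, not the triangle-labelled add gadget, so ``peeling off one $j$-wire of the add-$a$ green node'' via that corollary needs justification; and the base case you cite, Proposition~\ref{addpimultiplycommutg}, carries side-conditions ($k\notin\{j_1,\dots,j_s\}$ or $|\{j_1,\dots,j_s\}|\ge 2$) that you do not check survive each peeling step. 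The paper's argument is instead a direct diagrammatic rewrite on the outermost $j_s$, using the ring-adapted moves, then repetition---you should either reproduce that, or make your inductive reduction precise enough to see where the ring-specific adjustment enters.
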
  
  \begin{proof}
  Assume that $ \{j_s\} \neq \emptyset$.  Then
$$  %
	\beginpgfgraphicnamed{TikZit//pitopaddpipaircommutprf1ring}
	\InputIfFileExists{TikZit//pitopaddpipaircommutprf1ring.tikz}{}{\input{./figures/TikZit//pitopaddpipaircommutprf1ring.tikz}}%
	\endpgfgraphicnamed
 $$
$$  %
	\beginpgfgraphicnamed{TikZit//pitopaddpipaircommutprf2ring}
	\InputIfFileExists{TikZit//pitopaddpipaircommutprf2ring.tikz}{}{\input{./figures/TikZit//pitopaddpipaircommutprf2ring.tikz}}%
	\endpgfgraphicnamed
 $$
$$  %
	\beginpgfgraphicnamed{TikZit//pitopaddpipaircommutprf3ring}
	\InputIfFileExists{TikZit//pitopaddpipaircommutprf3ring.tikz}{}{\input{./figures/TikZit//pitopaddpipaircommutprf3ring.tikz}}%
	\endpgfgraphicnamed
 $$
$$  %
	\beginpgfgraphicnamed{TikZit//pitopaddpipaircommutprf4ring}
	\InputIfFileExists{TikZit//pitopaddpipaircommutprf4ring.tikz}{}{\input{./figures/TikZit//pitopaddpipaircommutprf4ring.tikz}}%
	\endpgfgraphicnamed
 $$
where $RPT$ means repeating the previous steps.
 \end{proof}

  \begin{proposition}\cite{wangalgnorm2020}\label{multiplypimulticommute}
	\beginpgfgraphicnamed{TikZit//multiplypimulticommute}
	\InputIfFileExists{TikZit//multiplypimulticommute.tikz}{}{\input{./figures/TikZit//multiplypimulticommute.tikz}}%
	\endpgfgraphicnamed
 
         \end{proposition}
          \begin{proof}
             $$  %
	\beginpgfgraphicnamed{TikZit//multiplypimulticommutesimpprf1ring}
	\InputIfFileExists{TikZit//multiplypimulticommutesimpprf1ring.tikz}{}{\input{./figures/TikZit//multiplypimulticommutesimpprf1ring.tikz}}%
	\endpgfgraphicnamed
 $$
             $$  %
	\beginpgfgraphicnamed{TikZit//multiplypimulticommutesimpprf2ring}
	\InputIfFileExists{TikZit//multiplypimulticommutesimpprf2ring.tikz}{}{\input{./figures/TikZit//multiplypimulticommutesimpprf2ring.tikz}}%
	\endpgfgraphicnamed
 $$
                  $$  %
	\beginpgfgraphicnamed{TikZit//multiplypimulticommutesimpprf22ring}
	\InputIfFileExists{TikZit//multiplypimulticommutesimpprf22ring.tikz}{}{\input{./figures/TikZit//multiplypimulticommutesimpprf22ring.tikz}}%
	\endpgfgraphicnamed
 $$
               $$  %
	\beginpgfgraphicnamed{TikZit//multiplypimulticommutesimpprf3ring}
	\InputIfFileExists{TikZit//multiplypimulticommutesimpprf3ring.tikz}{}{\input{./figures/TikZit//multiplypimulticommutesimpprf3ring.tikz}}%
	\endpgfgraphicnamed
 $$
                 $$  %
	\beginpgfgraphicnamed{TikZit//multiplypimulticommutesimpprf4ring}
	\InputIfFileExists{TikZit//multiplypimulticommutesimpprf4ring.tikz}{}{\input{./figures/TikZit//multiplypimulticommutesimpprf4ring.tikz}}%
	\endpgfgraphicnamed
 $$
                \end{proof}

  \begin{corollary}\cite{wangalgnorm2020}\label{multiplypimulticommutecro} 
    $$    %
	\beginpgfgraphicnamed{TikZit//multiplypimulticommutecrory}
	\InputIfFileExists{TikZit//multiplypimulticommutecrory.tikz}{}{\input{./figures/TikZit//multiplypimulticommutecrory.tikz}}%
	\endpgfgraphicnamed
 $$
    \end{corollary}

 \begin{proposition}\cite{wangalgnorm2020}\label{addpipair2sidecommutprop}
   Suppose  the node $a$ is connected to $ j_1, \cdots,  j_s $ via pink nodes, the node $b$ is connected to $ i_1, \cdots,  i_t $ via pink nodes,  pairs of red $\pi$ nodes separated by green nodes connected to $a$ are located on  $ h_1, \cdots,  h_u $,  pairs of red $\pi$ nodes separated by green nodes connected to $b$ are located on  $ k_1, \cdots,  k_l $. Furthermore,  $\emptyset\neq \{i_1, \cdots,  i_t\} \subseteq  \{1, \cdots,  n\}, \emptyset\neq \{j_1, \cdots,  j_s\} \subseteq  \{1, \cdots,  n\},   \{h_1, \cdots,  h_u\} \subseteq  \{n+1, \cdots,  n+m\},  \{k_1, \cdots,  k_l\} \subseteq  \{n+1, \cdots,  n+m\}$. Then
 $$  %
	\beginpgfgraphicnamed{TikZit//addpipair2sidecommut}
	\InputIfFileExists{TikZit//addpipair2sidecommut.tikz}{}{\input{./figures/TikZit//addpipair2sidecommut.tikz}}%
	\endpgfgraphicnamed
 $$
  \end{proposition}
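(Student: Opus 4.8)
The plan is to reduce this two-sided commutation to the one-sided versions already established. The key structural observation is that the gadget built from $a$ reads only the top wires (through its pink fan onto $\{j_1,\dots,j_s\}\subseteq\{1,\dots,n\}$) and writes only the bottom wires (through its pairs of red $\pi$s on $\{h_1,\dots,h_u\}\subseteq\{n+1,\dots,n+m\}$), and symmetrically for $b$. In particular the read-set of either gadget is disjoint from the write-set of the other, so the two gadgets are causally independent; the only obstruction to swapping them is the graphical entanglement created where the two pink fans share a top wire (on $\{i_1,\dots,i_t\}\cap\{j_1,\dots,j_s\}$), where the two families of red-$\pi$ pairs share a bottom wire (on $\{h_1,\dots,h_u\}\cap\{k_1,\dots,k_l\}$), and at the crossing of the two green-triangle bodies, which is handled for free by naturality of $\sigma$.

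Concretely I would proceed in three stages. First, disentangle the bottom interface: on a wire carrying red-$\pi$ pairs from both gadgets, slide one family past the other using the Hopf law (Lemma~\ref{hopfnslm}) together with the bialgebra and copy rules (Lemma~\ref{andbial}, Corollary~\ref{generalbialgebra}, Lemma~\ref{gpidotcopylm}); wires touched by only one gadget need nothing. Second, disentangle the top interface symmetrically, pushing the two pink fans through each other wire by wire via the commutation of pink nodes through green nodes (Proposition~\ref{picntcommutesam}, Corollary~\ref{picntcommutcro}), with one-gadget wires copying straight through by (B3)/(Pic) (Lemma~\ref{gpidotcopylm}, Lemma~\ref{pimultiplecplm}). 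Third, once the two gadgets share nothing but a swap, invoke the one-sided results --- Proposition~\ref{pitopaddpipaircommutprop}, Proposition~\ref{addpipairmultiplycommutgp}, Proposition~\ref{multiplypimulticommute} and the $n$-ary add-commutation Proposition~\ref{addcommutatgen} with Lemma~\ref{raddcomplex} --- to bring the triangles labelled $a$ and $b$ into the order of the right-hand side. The whole argument is organised as an induction on $u+l$, the base case ($u=l=0$) being the no-red-$\pi$-pair commutations of Proposition~\ref{addpimultiplycommut} and Proposition~\ref{multiplypimulticommutg}.

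I expect the real difficulty to sit in the worst overlap case, in which a top wire lies in both $\{i_1,\dots,i_t\}$ and $\{j_1,\dots,j_s\}$ \emph{and} a bottom wire lies in both $\{h_1,\dots,h_u\}$ and $\{k_1,\dots,k_l\}$, so that all four interfaces meet at once: applying distributivity (Lemma~\ref{distribute}, Corollary~\ref{distribute2}) to split the shared green dots produces spurious cross terms, and one must check --- again via the Hopf law and the add-commutation lemmas --- that these terms cancel, so that the two orders of the $a$- and $b$-gadgets yield the same family of triangles. Every rewrite used is one of the identities proved earlier in this section and holds verbatim over an arbitrary commutative ring, so no new ring-specific argument is needed beyond careful diagrammatic bookkeeping.
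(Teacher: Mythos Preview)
The paper does not actually supply its own proof of this proposition: it is one of the results carried over verbatim from \cite{wangalgnorm2020}, listed without argument because (as announced at the start of Section~3) only those proofs that \emph{fail} over a general commutative ring are redone here.  So there is no paper proof to compare against directly; one can only assess your proposal on its own merits and against the surrounding architecture.

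Your high-level plan --- peel off the red-$\pi$ pairs on the bottom $m$ wires and the pink fans on the top $n$ wires, then invoke the earlier one-sided commutation results --- is the right shape, and the propositions you cite (\ref{pitopaddpipaircommutprop}, \ref{multiplypimulticommute}, \ref{addcommutatgen}) are exactly the ones that sit immediately before this statement in the paper's development, which strongly suggests they are the intended ingredients.  Two points, though.  First, your base case is misidentified: when $u=l=0$ both gadgets are pure row-addition diagrams with pink fans only, so the relevant commutation is Proposition~\ref{addcommutatgencont}, not Proposition~\ref{addpimultiplycommut} or Proposition~\ref{multiplypimulticommutg} (which concern add-versus-multiply and multiply-with-$\pi$-pairs respectively).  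Second, your worry about ``spurious cross terms'' from distributivity that must then ``cancel'' is misplaced.  The statement is a pure commutation --- the same two gadgets appear on both sides, only in swapped order --- so no new triangle labels are created and nothing needs to cancel; you should be sliding components past one another, not expanding.  If you find yourself applying (Dis) and then needing a Hopf cancellation to kill extra terms, you have taken a detour: the intended route is to push the $\pi$-pairs of one gadget through the pink fan of the other using Propositions~\ref{picntcommut}--\ref{picntcommutesamgrn} and the $\pi$-copy lemma (Pic), which keeps the diagram in commuted-but-unexpanded form throughout.
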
   

   \begin{corollary}\cite{wangalgnorm2020}\label{addpipair2sidecommutcro} 
    Suppose  the node $a$ is connected to $ h_1, \cdots,  h_u $ via pink nodes, the node $b$ is connected to $ k_1, \cdots,  k_l $ via pink nodes,  pairs of red $\pi$ nodes separated by green nodes connected to $a$ are located on  $ j_1, \cdots,  j_s $,  pairs of red $\pi$ nodes separated by green nodes connected to $b$ are located on  $ i_1, \cdots,  i_t $. Furthermore,  $\{i_1, \cdots,  i_t\} \subseteq  \{1, \cdots,  n\},  \{j_1, \cdots,  j_s\} \subseteq  \{1, \cdots,  n\},  \emptyset\neq  \{h_1, \cdots,  h_u\} \subseteq  \{n+1, \cdots,  n+m\},  \emptyset\neq\{k_1, \cdots,  k_l\} \subseteq  \{n+1, \cdots,  n+m\}$. Then

    $$    %
	\beginpgfgraphicnamed{TikZit//addpipair2sidecommutcroy}
	\InputIfFileExists{TikZit//addpipair2sidecommutcroy.tikz}{}{\input{./figures/TikZit//addpipair2sidecommutcroy.tikz}}%
	\endpgfgraphicnamed
 $$
    \end{corollary}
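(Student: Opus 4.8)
The plan is to obtain Corollary~\ref{addpipair2sidecommutcro} from Proposition~\ref{addpipair2sidecommutprop} by taking the transpose of both sides of its defining equation, using the compact structure provided by rule~(S3) (and recalling, as the captions of Figures~\ref{figurealgebra1} and~\ref{figurealgebra2} state, that the upside-down flipped versions of all rules are available). Transposing a diagram exchanges the $n$ input wires with the $m$ output wires, so after the flip the ambient index ranges $\{1,\dots,n\}$ and $\{n+1,\dots,n+m\}$ swap roles; relabelling the wires so that the inputs are again those indexed by $\{1,\dots,n\}$ then turns the configuration of the Proposition into precisely that of the Corollary. In particular, the non-emptiness hypothesis — which in the Proposition is imposed on the pink-connected sets lying among $\{1,\dots,n\}$ — becomes non-emptiness of the pink-connected sets lying among $\{n+1,\dots,n+m\}$, exactly as the Corollary demands, while the sets carrying pairs of red $\pi$s, which have no non-emptiness condition on either side, simply change ambient range.

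For the flip to yield the statement verbatim one checks that every block occurring on the two sides of Proposition~\ref{addpipair2sidecommutprop} is invariant under transpose up to the caps $C_a$ and cups $C_u$: the green spiders used for the fan-outs, the swaps $\sigma$, the identity wires, the pink $\pi$ nodes (self-transpose, since the matrix of the red $\pi$ is symmetric), and the pairs of red $\pi$s separated by green nodes are all of this kind. The one block that is not manifestly self-transpose is the ``$a$-box'' (the green node carrying $a$ together with its triangle); I would rewrite its transpose as the $a$-box decorated by a pair of red $\pi$s on its leg, using Lemma~\ref{trianglehopflip} together with the flipped triangle identities, then copy those two red $\pi$s through the green fan-out by (B3) (Lemma~\ref{gpidotcopylm}) and absorb the copies via (S4) and Lemma~\ref{b3ringlm} — on a pink-connected wire a red $\pi$ merges with the pink $\pi$ to the identity, and on a red-$\pi$-pair wire it merges into the green node. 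After this cleanup the transposed $a$-box is again an $a$-box, and likewise for the $b$-box, so the transposed equality is the one asserted in the Corollary.

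The main obstacle I expect is bookkeeping rather than conceptual: one must verify that the red $\pi$s produced by transposing the triangles are routed onto exactly the wires already carrying pink connections or red-$\pi$-pairs, so that they are all either cancelled or merged into existing green nodes, without spuriously enlarging the support sets $\{i\},\{j\},\{h\},\{k\}$ or disturbing the disjointness and non-equality conditions inherited from the Proposition. Since each such rewrite is an instance of a copy or absorption lemma already available, no genuinely new ingredient is needed; and should the triangle bookkeeping prove delicate, the same equality can alternatively be assembled directly from Proposition~\ref{addpipair2sidecommutprop} together with the ``picntcommut'' corollaries (e.g.\ Corollary~\ref{picntcommutcro}), which let one slide red-$\pi$-pairs past the relevant blocks until the two sides are brought into the form governed by the Proposition.
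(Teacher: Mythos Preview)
Your instinct that the Corollary follows from Proposition~\ref{addpipair2sidecommutprop} by a symmetry argument is right, and the paper itself gives no proof here beyond citing \cite{wangalgnorm2020}. However, the particular symmetry you chose --- the transpose, i.e.\ the $180^\circ$ rotation via caps and cups --- is not the cleanest one, and this is exactly why you run into the triangle problem: under transpose the triangle $T$ becomes $T^T = P\,T\,P$ (with $P$ the red $\pi$), forcing the whole bookkeeping with extra red $\pi$s that you then have to copy through and cancel. That argument can be made to work, but it is doing more than is needed.

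The simpler route is a purely \emph{horizontal} symmetry: conjugate both sides of Proposition~\ref{addpipair2sidecommutprop} by the permutation that swaps the block of wires $\{1,\dots,n\}$ with the block $\{n+1,\dots,n+m\}$, using naturality of $\sigma$ (rule (S3) and the swap generator). This is just pre- and post-composing with a wire permutation, so every internal generator --- green spiders, pink nodes, red $\pi$ pairs, and in particular the triangle inside the ``$a$-box'' --- is carried along unchanged; nothing gets flipped. Applying the Proposition with the roles of $n$ and $m$ interchanged and then conjugating by this block swap lands you exactly on the configuration of the Corollary, with the non-emptiness condition migrating from the pink-connection sets in $\{1,\dots,n\}$ to those in $\{n+1,\dots,n+m\}$ as required. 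This is in the spirit of how the paper handles analogous corollaries (cf.\ the proof of Corollary~\ref{propo1cro2}, ``swapping the $1$-th and the $j$-th lines''), and it avoids your main obstacle entirely rather than working around it.
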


    \begin{lemma}\cite{wangalgnorm2020}\label{cnotscomutelm}
  $$  %
	\beginpgfgraphicnamed{TikZit//cnotscomute}
	\InputIfFileExists{TikZit//cnotscomute.tikz}{}{\input{./figures/TikZit//cnotscomute.tikz}}%
	\endpgfgraphicnamed
 $$
     \end{lemma}

 \begin{proposition}\cite{wangalgnorm2020}\label{addpipair2sidecommutprop28}
   Suppose  the node $a$ is connected to $ h_1, \cdots,  h_u $ via pink nodes, the node $b$ is connected to $ i_1, \cdots,  i_t $ via pink nodes,  pairs of red $\pi$ nodes separated by green nodes connected to $a$ are located on  $ j_1, \cdots,  j_s $,  pairs of red $\pi$ nodes separated by green nodes connected to $b$ are located on  $ k_1, \cdots,  k_l $. Furthermore,  $\emptyset\neq \{i_1, \cdots,  i_t\} \subseteq  \{1, \cdots,  n\}, \{j_1, \cdots,  j_s\} \subseteq  \{1, \cdots,  n\},  \emptyset\neq \{h_1, \cdots,  h_u\} \subseteq  \{n+1, \cdots,  n+m\},  \emptyset\neq \{k_1, \cdots,  k_l\} \subseteq  \{n+1, \cdots,  n+m\}, \{h_1, \cdots,  h_u\}\neq \{k_1, \cdots,  k_l\} $. Then
   $$  %
	\beginpgfgraphicnamed{TikZit//addpipair2sidecommutprop28}
	\InputIfFileExists{TikZit//addpipair2sidecommutprop28.tikz}{}{\input{./figures/TikZit//addpipair2sidecommutprop28.tikz}}%
	\endpgfgraphicnamed
 $$
    \end{proposition}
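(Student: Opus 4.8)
\emph{Proof sketch.} The plan is to reduce this crossed two-sided commutation to the one-sided results already established, together with a base case that the bialgebra law disposes of. First I would apply Corollary~\ref{generalbialgebra} to pull each bundle of pink connections apart into parallel single-wire copies; because the general bialgebra rule turns a many-legged green/pink connection into a product of one-legged ones, this reduces the claim to the case where each of $\{i_1,\cdots,i_t\}$, $\{h_1,\cdots,h_u\}$ and $\{k_1,\cdots,k_l\}$ is a singleton, the additional copy structure being inert for the remainder of the argument. In this reduced form the left-hand side is: gadget $a$, whose pink leg sits on an output wire $h$ and whose $\pi$-pair sits on an input wire $j$, placed before gadget $b$, whose pink leg sits on an input wire $i$ and whose $\pi$-pair sits on an output wire $k$.

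Next I would slide the two gadgets past one another in stages. On the bottom $n$ wires, the $\pi$-pair of $a$ and the pink leg of $b$ interact; reading the $\pi$-pair of $a$ in the role of the ``$b$-part'' and the pink leg of $b$ in the role of the ``$a$-part'', this interaction is an instance of a flipped reading of Corollary~\ref{addpipair2sidecommutcro}, with Propositions~\ref{addpimultiplycommutg} and \ref{addpipairmultiplycommutgp} handling the individual $\pi$-slides. On the top $m$ wires, the pink leg of $a$ and the $\pi$-pair of $b$ interact; this is the upside-down mirror of the previous step and follows from the same corollary applied to the flipped diagram. The two remaining green-triangle cores, together with the fragments that by now live on disjoint wire segments, slide past one another by plain naturality of the symmetric monoidal structure together with (S1) and (S3). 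Reassembling the bundles that were split off in the first step, again via Corollary~\ref{generalbialgebra}, then yields the right-hand side.

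The hypothesis $\{h_1,\cdots,h_u\}\neq\{k_1,\cdots,k_l\}$ is exactly what makes the final slide harmless: it guarantees that the output legs of the two triangle cores never come to rest on a common wire, so two triangles are never forced onto a single wire. Were they to coincide, one would instead have to invoke the absorption move of Proposition~\ref{TR15} and the combine rule of Proposition~\ref{pimultiaddcombinepro}, collapsing the two additions into the single product $ab$ rather than the claimed commuted diagram --- and that is precisely the situation those two statements describe.

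The step I expect to be the main obstacle is the $\pi$-pair bookkeeping on the bottom $n$ wires: when the pink leg of $b$ is copied through the green dot of $a$ by (B3) and Corollary~\ref{andcopy}, the $\pi$-pair of $a$ must be tracked as it duplicates, and one has to check that the resulting distribution of $\pi$-pairs agrees with what Corollary~\ref{addpipair2sidecommutcro} predicts. This is where the distribution lemmas, Lemma~\ref{distribute} and Corollary~\ref{distribute2}, carry the weight; everything else is routine graphical rewriting along the lines of \cite{wangalgnorm2020}.
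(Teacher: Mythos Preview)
Your overall strategy --- separate the interactions on the input block from those on the output block and reduce each to an earlier commutation result --- is the same shape as the argument in \cite{wangalgnorm2020}, which the present paper simply imports. But two concrete points in your sketch do not go through as written.

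First, the reduction step. Corollary~\ref{generalbialgebra} is the bialgebra law for the \emph{AND} gate with green copy; it does not decompose a row-addition gadget ``$a$ connected to $h_1,\ldots,h_u$ via pink nodes'' into a product of single-wire gadgets. That gadget is a single red spider feeding one triangle: the row it touches is indexed by the XOR of the chosen bits, and there is no way to factor it as independent single-wire additions. What one can do is pull the red spider out so that the $h$'s are bundled into one auxiliary wire via (S4)/(B2) and then argue on that auxiliary wire --- but that is not what you wrote, and it changes which earlier results you need to invoke.

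Second, your reading of the hypothesis $\{h_1,\ldots,h_u\}\neq\{k_1,\ldots,k_l\}$ is off. This condition does \emph{not} forbid overlap --- the two sets may share wires --- it only excludes exact coincidence. So your reduction to singletons would immediately produce instances with $h=k$, which you then rule out by fiat. The genuine role of the inequality is the same as in Proposition~\ref{addpipairmultiplycommutgp}: when the $\pi$-pair locations of $b$ are \emph{exactly} the pink-connection locations of $a$, the red $\pi$'s propagate through all of $a$'s red legs simultaneously and the triangle of $a$ sees a full conjugation, collapsing the pair into a single $ab$-gadget (Propositions~\ref{TR15} and~\ref{pimultiaddcombinepro}) rather than commuting. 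As soon as even one wire differs, the $\pi$'s cannot close up around the triangle and the commutation survives; that asymmetry, not any ``two triangles on one wire'' obstruction, is what the hypothesis buys. Your citation of Corollary~\ref{addpipair2sidecommutcro} also does not fit: that corollary treats both gadgets with pink legs on the \emph{same} side, whereas here the orientations are crossed; the relevant input is rather Proposition~\ref{addpimultiplycommutg} together with Proposition~\ref{addpipairmultiplycommutgp} applied wire by wire on the output block.
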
   


 \begin{proposition}\cite{wangalgnorm2020}\label{addpipair2sidecommutprop29}  
   Suppose  the node $a$ is connected to $ h_1, \cdots,  h_u $ on the left $m$ wires via pink nodes, and  is connected to $ j_1, \cdots,  j_s $ on the right $n$ wires via pink nodes; the node $b$ is connected to $ k_1, \cdots,  k_l $ via pink nodes,  pairs of red $\pi$ nodes separated by green nodes connected to $b$ are located on  $ i_1, \cdots,  i_t $. Furthermore,  $ \{i_1, \cdots,  i_t\} \subseteq  \{1, \cdots,  n\}, \emptyset\neq\{j_1, \cdots,  j_s\} \subseteq  \{1, \cdots,  n\},  \emptyset\neq \{h_1, \cdots,  h_u\} \subseteq  \{n+1, \cdots,  n+m\},  \emptyset\neq \{k_1, \cdots,  k_l\} \subseteq  \{n+1, \cdots,  n+m\}, \{h_1, \cdots,  h_u\}\neq \{k_1, \cdots,  k_l\} $. Then
   $$  %
	\beginpgfgraphicnamed{TikZit//addpipair2sidecommuteprop29}
	\InputIfFileExists{TikZit//addpipair2sidecommuteprop29.tikz}{}{\input{./figures/TikZit//addpipair2sidecommuteprop29.tikz}}%
	\endpgfgraphicnamed
 $$
   \end{proposition}   
  
   

   \begin{proposition}\cite{wangalgnorm2020}\label{addpipair2sidecommutprop29b}  
    Suppose  the node $a$ is connected to $ h_1, \cdots,  h_u $ on the left $m$ wires via pink nodes, and  is connected to $ j_1, \cdots,  j_s $ on the right $n$ wires via pink nodes; the node $b$ is connected to  $ i_1, \cdots,  i_t $ via pink nodes,  pairs of red $\pi$ nodes separated by green nodes connected to $b$ are located on $ k_1, \cdots,  k_l $ . Furthermore,  $\emptyset\neq \{i_1, \cdots,  i_t\} \subseteq  \{1, \cdots,  n\}, \emptyset\neq\{j_1, \cdots,  j_s\} \subseteq  \{1, \cdots,  n\},  \emptyset\neq \{h_1, \cdots,  h_u\} \subseteq  \{n+1, \cdots,  n+m\}, \{k_1, \cdots,  k_l\} \subseteq  \{n+1, \cdots,  n+m\}, \{h_1, \cdots,  h_u\}\neq \{k_1, \cdots,  k_l\} $. Then
   
 $$  %
	\beginpgfgraphicnamed{TikZit//addpipair2sidecommuteprop29b}
	\InputIfFileExists{TikZit//addpipair2sidecommuteprop29b.tikz}{}{\input{./figures/TikZit//addpipair2sidecommuteprop29b.tikz}}%
	\endpgfgraphicnamed
 $$
   \end{proposition}  


 \begin{proposition}\cite{wangalgnorm2020}\label{addpipairmulcommutprop30a}  
    Suppose  the node $a$ is connected to $ j_1, \cdots,  j_s $ on the  right $n$  wires via pink nodes,
 pairs of red $\pi$ nodes separated by green nodes connected to $a$ are located on $ k_1, \cdots,  k_l $,   pairs of red $\pi$ nodes separated by green nodes connected to $b$ are located on $ i_1, \cdots,  i_t $. Furthermore,  $ \{i_1, \cdots,  i_t\} \subseteq  \{1, \cdots,  n\}, \emptyset\neq\{j_1, \cdots,  j_s\} \subseteq  \{1, \cdots,  n\},  \emptyset\neq\{k_1, \cdots,  k_l\} \subseteq  \{n+1, \cdots,  n+m\}$. Then
    $$  %
	\beginpgfgraphicnamed{TikZit//addpipairmultcommutprop30a}
	\InputIfFileExists{TikZit//addpipairmultcommutprop30a.tikz}{}{\input{./figures/TikZit//addpipairmultcommutprop30a.tikz}}%
	\endpgfgraphicnamed
 $$
   \end{proposition}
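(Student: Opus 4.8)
The plan is to view the left-hand side as a composite of two decorated gadgets and to show that these slide past each other, reducing first to a one-wire base case handled by the commutation results already proved, and then bootstrapping to arbitrary index sets by induction. Concretely, the $a$-gadget is a green triangle labelled $a$ whose green node is fed from the right-hand $n$ wires $j_1,\dots,j_s$ through pink nodes and carries pairs of red $\pi$'s (separated by green nodes) on the wires $k_1,\dots,k_l$ of the left-hand $m$ block; the $b$-gadget carries pairs of red $\pi$'s separated by green nodes on the wires $i_1,\dots,i_t$ of the right-hand $n$ block. Since $\{k_1,\dots,k_l\}\subseteq\{n+1,\dots,n+m\}$ while $\{i_1,\dots,i_t\}\subseteq\{1,\dots,n\}$, the $\pi$-pairs of the two gadgets never share a wire, so the only interaction to analyse sits on the $j$-wires.

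First I would settle the base case $\{k_1,\dots,k_l\}=\{k\}$, $\{j_1,\dots,j_s\}=\{j\}$. Up to the notations (\ref{andshortnotationeq})--(\ref{andshortnoterelat}) the $a$-gadget is then one elementary multiply-type gate on wires $k$ and $j$, and the claim follows from Proposition~\ref{addpimultiplycommutg} and Corollary~\ref{multiplypimulticommutecro} when the $b$-gadget avoids the $j$-wire, and from Proposition~\ref{picntcommutesamgrn} and Corollary~\ref{picntcommutcro} (used first to route the red $\pi$'s of $b$ through the pink connection of $a$) when it does not. The residual diagrammatic manipulations are applications of the bialgebra law (BiA) of Lemma~\ref{andbial}, the Hopf law (Hopf) of Lemma~\ref{hopfnslm}, and the triangle--red-$\pi$ interactions of Lemmas~\ref{trianglerpidotlm} and~\ref{trianglehopflip}.

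Then I would run the induction. To increase the number of red-$\pi$ pairs on the $a$-gadget from $l$ to $l+1$, I would use the general bialgebra law (Corollary~\ref{generalbialgebra}) to split one pair off onto a fresh copy of the green node, apply the base case on that wire, and re-merge with (S1) together with the copy rules (B3) and (Pic); the hypothesis $\{k_1,\dots,k_l\}\neq\emptyset$ is precisely what guarantees a left-block wire is available to peel off. The same device applied on the $j$-wires enlarges $\{j_1,\dots,j_s\}$, and a symmetric argument using Lemma~\ref{cnotscomutelm} enlarges $\{i_1,\dots,i_t\}$ on the $b$-gadget, the stated inclusions of the index sets ensuring that the decorations stay confined to their respective blocks throughout.

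The step I expect to be the main obstacle is the case analysis according to whether and how $\{i_1,\dots,i_t\}$ meets $\{j_1,\dots,j_s\}$: when a $b$-wire coincides with a $j$-wire one must first commute the red $\pi$'s across the pink node, which rearranges the $\pi$-pair distribution, and the bookkeeping needed to keep the pair counts matched on the two sides of the claimed equation through this sequence of rewrites---rather than any one of the individual rewrites, each of which is routine---is where the real care is needed.
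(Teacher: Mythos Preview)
The paper does not actually give a proof of this proposition: it is stated with the citation \cite{wangalgnorm2020} and no proof environment follows. This is in line with the paper's stated policy that ``for the sake of simplicity, we only give details of proofs which don't hold for general commutative rings''; Proposition~\ref{addpipairmulcommutprop30a} is one of the results whose proof from \cite{wangalgnorm2020} is asserted to carry over verbatim, so there is no argument here to compare yours against.

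That said, your outline is consistent with the style of the proofs the paper \emph{does} spell out for neighbouring commutation results (e.g.\ Propositions~\ref{multiplypimulticommutg}, \ref{addpimultiplycommut}, \ref{pitopaddpipaircommutprop}, \ref{multiplypimulticommute}): a reduction to a small base case via the general bialgebra law and the $\pi$-copy rules, followed by an induction on the sizes of the index sets. One point to be careful about: your base case invokes Proposition~\ref{addpimultiplycommutg} and Corollary~\ref{multiplypimulticommutecro}, but those already require hypotheses on how the index sets meet, so the case split you defer to the ``main obstacle'' paragraph really has to be handled \emph{inside} the base case, not after it. Apart from that bookkeeping issue the plan looks sound, but a genuine comparison would require consulting \cite{wangalgnorm2020}.
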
  

 \begin{proposition}\cite{wangalgnorm2020}\label{addpipairmulcommutprop30b}  
    Suppose  the node $a$ is connected to $ j_1, \cdots,  j_s $ on the  right $n$  wires via pink nodes, and connected to $ k_1, \cdots,  k_l $ on the  left $m$  wires via pink nodes;
 pairs of red $\pi$ nodes separated by green nodes connected to $b$ are located on $ i_1, \cdots,  i_t $. Furthermore,  $ \emptyset\neq\{i_1, \cdots,  i_t\} \subseteq  \{1, \cdots,  n\}, \emptyset\neq\{j_1, \cdots,  j_s\} \subseteq  \{1, \cdots,  n\},  \emptyset\neq\{k_1, \cdots,  k_l\} \subseteq  \{n+1, \cdots,  n+m\}$. Then
    $$  %
	\beginpgfgraphicnamed{TikZit//addpipairmultcommutprop30b}
	\InputIfFileExists{TikZit//addpipairmultcommutprop30b.tikz}{}{\input{./figures/TikZit//addpipairmultcommutprop30b.tikz}}%
	\endpgfgraphicnamed
 $$
   \end{proposition}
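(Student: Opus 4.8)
The plan is to reduce Proposition~\ref{addpipairmulcommutprop30b} to Proposition~\ref{addpipairmulcommutprop30a}. The two statements have essentially the same shape: in both, a node $a$ with a pink fan on the wires $j_1,\dots,j_s$ of the $\{1,\dots,n\}$ block is to be commuted past a node $b$ carrying a pair of red $\pi$'s separated by green nodes on $i_1,\dots,i_t\subseteq\{1,\dots,n\}$. The only difference is that in Proposition~\ref{addpipairmulcommutprop30a} node $a$ sits on the wires $k_1,\dots,k_l$ (of the $\{n+1,\dots,n+m\}$ block) through pairs of red $\pi$'s separated by green nodes, whereas in Proposition~\ref{addpipairmulcommutprop30b} it sits there through plain pink nodes. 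So the strategy is: (i) rewrite the pink fan of $a$ on $k_1,\dots,k_l$ into the green/red-$\pi$ fan of Proposition~\ref{addpipairmulcommutprop30a}; (ii) apply Proposition~\ref{addpipairmulcommutprop30a}; (iii) convert back.

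For step~(i) I would use the definition of the pink-node notation~(\ref{andshortnotationeq}) and the relation~(\ref{andshortnoterelat}), together with the copy rules (B3) (Lemmas~\ref{gpidotcopylm} and~\ref{b3ringlm}), the pink-copy rules (Pic) (Lemma~\ref{pimultiplecplm}), and the AND-copy Corollary~\ref{andcopy}, so as to express the part of node $a$ touching $k_1,\dots,k_l$ in exactly the form required on the left of~(\ref{addpipairmulcommutprop30a}). Step~(ii) then commutes the $b$-gadget past $a$ verbatim. Step~(iii) is step~(i) run in reverse, using the flipped versions of the same rules (which are assumed throughout), and delivers the right-hand diagram of Proposition~\ref{addpipairmulcommutprop30b}.

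The main obstacle is the bookkeeping on the $\{1,\dots,n\}$ block, where the pink fan of $a$ (on $j_1,\dots,j_s$) and the red $\pi$ pairs of $b$ (on $i_1,\dots,i_t$) may live on overlapping wires, so that no naive slide is available; there one must absorb and reorganise the $\pi$'s using the bialgebra rule (BiA) (Corollary~\ref{generalbialgebra}) together with the $\pi$-commutation results (Propositions~\ref{picntcommut}, \ref{picntcommutesam} and~\ref{multiplypimulticommutg}), exactly in the style already used in the proofs of Propositions~\ref{addpipair2sidecommutprop29} and~\ref{addpipair2sidecommutprop29b}. One also has to check that the diagrams stay scalar-free throughout the derivation; this is automatic, since every spider and the Hadamard node occurring here is scalar-free by construction, so no $\frac{1}{\sqrt2}$-type factor can be created. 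With these ingredients in place, the derivation is a long but mechanical sequence of rewrites of the same kind as for the neighbouring propositions.
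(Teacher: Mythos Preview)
The paper itself does not give a proof of Proposition~\ref{addpipairmulcommutprop30b}: it is one of the results taken over verbatim from \cite{wangalgnorm2020}, and the surrounding text explicitly says that details are omitted when the proof from \cite{wangalgnorm2020} carries over to the ring setting unchanged. So there is no in-paper proof to compare against; I can only assess your strategy on its own merits.

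Your reduction to Proposition~\ref{addpipairmulcommutprop30a} is the natural move and it works. The crucial point, which you could state more sharply, is that the only difference between the two propositions lives on the wires $k_1,\dots,k_l\subseteq\{n+1,\dots,n+m\}$, while the $b$-gadget lives entirely on $\{1,\dots,n\}$. Hence whatever auxiliary $\pi$'s you introduce on the $k$-wires when passing from the pink-fan form to the red-$\pi$-pair form via~(\ref{andshortnoterelat}) commute \emph{trivially} past $b$ (they are on disjoint wires), and after applying Proposition~\ref{addpipairmulcommutprop30a} they cancel again on the other side. That is the whole argument.

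Your paragraph about the ``main obstacle'' on the $\{1,\dots,n\}$ block is misplaced: the possible overlap between the $j$-wires of $a$ and the $i$-wires of $b$ is exactly what Proposition~\ref{addpipairmulcommutprop30a} already handles, so once you have reduced to it there is nothing further to do there. Invoking (BiA) and Propositions~\ref{picntcommut}, \ref{picntcommutesam}, \ref{multiplypimulticommutg} again would be re-proving~\ref{addpipairmulcommutprop30a} rather than using it. Drop that paragraph and the proof becomes a short three-step rewrite: convert on the $k$-wires using~(\ref{andshortnoterelat}) and (Pic); apply Proposition~\ref{addpipairmulcommutprop30a}; convert back.
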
  

    \begin{corollary}\cite{wangalgnorm2020}\label{addpipairmulcommutprop30bcro} 
      Suppose  the node $a$ is connected to $ j_1, \cdots,  j_s $ on the  right $n$  wires via pink nodes, and connected to $ h_1, \cdots,  h_u $ on the  left $m$  wires via pink nodes;
   pairs of red $\pi$ nodes separated by green nodes connected to $b$ are located on $ k_1, \cdots,  k_l $. Furthermore,  $ \emptyset\neq\{j_1, \cdots,  j_s\} \subseteq  \{1, \cdots,  n\},  \emptyset\neq\{k_1, \cdots,  k_l\} \subseteq  \{n+1, \cdots,  n+m\}, \emptyset\neq\{h_1, \cdots,  h_u\} \subseteq  \{n+1, \cdots,  n+m\}$. Then
       $$  %
	\beginpgfgraphicnamed{TikZit//addpipairmultcommutprop30bcro}
	\InputIfFileExists{TikZit//addpipairmultcommutprop30bcro.tikz}{}{\input{./figures/TikZit//addpipairmultcommutprop30bcro.tikz}}%
	\endpgfgraphicnamed
 $$
   \end{corollary}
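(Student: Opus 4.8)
The plan is to deduce the statement from Proposition~\ref{addpipairmulcommutprop30b} by a short bookkeeping argument that transports the pair of red $\pi$ nodes attached to $b$ from one block of wires to the other, in exactly the spirit of the derivations of Corollaries~\ref{propo1cro2} and~\ref{propadprimecro} and of the way Corollary~\ref{addpipair2sidecommutcro} is obtained from Proposition~\ref{addpipair2sidecommutprop}. In Proposition~\ref{addpipairmulcommutprop30b} the pair of red $\pi$ nodes separated by a green node connected to $b$ sits on wires in $\{1,\dots,n\}$, whereas here it sits on wires in $\{n+1,\dots,n+m\}$, and the pink connections of $a$ on the left $m$ wires now point to $h_1,\dots,h_u$ rather than to $k_1,\dots,k_l$. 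So the first step is to slide that pair of red $\pi$ nodes across the AND/green structure of the left-hand diagram using Corollary~\ref{picntcommutcro} (and, where the two decorated wires coincide, its variant Corollary~\ref{picntcommutcro2}), thereby bringing the left-hand side into the exact shape assumed by Proposition~\ref{addpipairmulcommutprop30b}.

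Once the decorations are in that position, I would apply Proposition~\ref{addpipairmulcommutprop30b} to commute the two addition-type gadgets past one another, and then undo the rearrangement on the other side: push the pair of red $\pi$ nodes back onto the wires in $\{n+1,\dots,n+m\}$ again by Corollary~\ref{picntcommutcro}/\ref{picntcommutcro2}, and reorganise the pink connections of $a$ using (Pic) so that the right-hand side is displayed in the claimed form. Every move here merely duplicates or slides red $\pi$ copies and pink nodes and never touches the ring elements $a$ and $b$ themselves, so the argument goes through over an arbitrary commutative ring with no new diagrammatic identity needed.

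The main obstacle is combinatorial rather than conceptual: one must check that after transporting the red $\pi$ pair, the index-set hypotheses — nonemptiness of $\{j_1,\dots,j_s\}$, $\{k_1,\dots,k_l\}$ and $\{h_1,\dots,h_u\}$, the split between $\{1,\dots,n\}$ and $\{n+1,\dots,n+m\}$, and the relevant disjointness conditions — all land inside the ranges required to invoke Proposition~\ref{addpipairmulcommutprop30b} and Corollary~\ref{picntcommutcro} legitimately. Working through the handful of cases (for instance, according to whether $\{k_1,\dots,k_l\}$ and $\{h_1,\dots,h_u\}$ overlap) is routine and mirrors the case analysis already performed inside Proposition~\ref{addpipairmulcommutprop30b}.
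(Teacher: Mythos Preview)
Your overall strategy---deduce the corollary from Proposition~\ref{addpipairmulcommutprop30b} by a symmetry argument, just as Corollary~\ref{addpipair2sidecommutcro} is obtained from Proposition~\ref{addpipair2sidecommutprop}---is exactly what the paper intends (no proof is given here; it is simply recorded as a corollary, with the argument deferred to the companion paper). The specific mechanism you describe, however, does not do the job. Corollary~\ref{picntcommutcro} and its variant Corollary~\ref{picntcommutcro2} commute a pair of red $\pi$ nodes past an addition/CNOT-type gadget along the \emph{same} wire; they do not relocate a $\pi$ pair from a wire in $\{n+1,\dots,n+m\}$ to one in $\{1,\dots,n\}$. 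Hence ``sliding the $\pi$ nodes across the AND/green structure'' cannot bring the left-hand side into the shape required by Proposition~\ref{addpipairmulcommutprop30b}, where the $\pi$ pair attached to $b$ must lie on the right block.

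The clean reduction---and the one matching the paper's pattern for such corollaries (cf.\ the one-line derivation of Corollary~\ref{propo1cro2})---is simply to conjugate both sides by the swap permutation exchanging the left $m$ wires with the right $n$ wires. Under this global swap the $\pi$ pair of $b$ lands on the new right block and the two groups of pink connections of $a$ trade places, so the transformed equation is literally an instance of Proposition~\ref{addpipairmulcommutprop30b} with the roles of $m$ and $n$ interchanged and $(j,h,k)$ relabelled. No case analysis on whether $\{h_1,\dots,h_u\}$ and $\{k_1,\dots,k_l\}$ overlap is needed, and no appeal to (Pic) or Corollary~\ref{picntcommutcro} is required.
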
  

  \begin{proposition}\cite{wangalgnorm2020}\label{addpipairmulcommutprop30c}  
    Suppose  the node $a$ is connected to $ j_1, \cdots,  j_s $ on the  right $n$  wires via pink nodes, 
 pairs of red $\pi$ nodes separated by green nodes connected to $b$ are located on $ i_1, \cdots,  i_t $. Furthermore,  $ \emptyset\neq\{i_1, \cdots,  i_t\} \subseteq  \{1, \cdots,  n\}, \emptyset\neq\{j_1, \cdots,  j_s\} \subseteq  \{1, \cdots,  n\},  \{i_1, \cdots,  i_t\}\neq\{j_1, \cdots,  j_s\}  $. Then
    $$  %
	\beginpgfgraphicnamed{TikZit//addpipairmultcommutprop30c}
	\InputIfFileExists{TikZit//addpipairmultcommutprop30c.tikz}{}{\input{./figures/TikZit//addpipairmultcommutprop30c.tikz}}%
	\endpgfgraphicnamed
 $$
   \end{proposition}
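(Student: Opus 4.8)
The plan is to deduce the statement from the already-established Propositions \ref{addpipairmulcommutprop30a} and \ref{addpipairmulcommutprop30b} by exploiting the hypothesis $\{i_1,\dots,i_t\}\neq\{j_1,\dots,j_s\}$; this hypothesis is essential, since when the two control sets coincide the $b$-block (an ``addition'') and the $a$-block (a ``multiplication'') genuinely interact and one instead needs the distributive law, cf.\ Corollary \ref{distribute2} and Proposition \ref{pitopaddpipaircommutprop}. Because both sets are nonempty and distinct, at least one of $\{j_1,\dots,j_s\}\setminus\{i_1,\dots,i_t\}$ and $\{i_1,\dots,i_t\}\setminus\{j_1,\dots,j_s\}$ is nonempty, so I can fix a wire $p$ lying in exactly one of them. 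I would treat the case $p\in\{j_1,\dots,j_s\}\setminus\{i_1,\dots,i_t\}$ in detail; the case $p\in\{i_1,\dots,i_t\}\setminus\{j_1,\dots,j_s\}$ is entirely analogous, with the roles of the two blocks interchanged in the rewrite below (and using the transpose symmetry that passes from Proposition \ref{addpipair2sidecommutprop} to Corollary \ref{addpipair2sidecommutcro}).

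First I would bring the $b$-block into standard position: using the pink-commutation results (Propositions \ref{picntcommutesam}, \ref{picntcommutesamgrn} and \ref{picntcommuteand} together with Corollaries \ref{picntcommutcro2} and \ref{picntcommuteandcr1}) I push its green body across every wire on which it carries no pair of red $\pi$'s, so that $b$ touches only $i_1,\dots,i_t$. On the distinguished wire $p$ the $a$-block now presents a lone pink node facing an otherwise empty wire; using Lemma \ref{gpidotcopylm} (B3), the Hopf law (Lemma \ref{hopfnslm}), and the bialgebra rule (Lemma \ref{andbial}) with its general form (Corollary \ref{generalbialgebra}), I would ``open'' wire $p$ underneath that pink node, producing an auxiliary wire that plays exactly the role of the left-hand wire $k$ in Proposition \ref{addpipairmulcommutprop30b}. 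The configuration then becomes a literal instance of Proposition \ref{addpipairmulcommutprop30b} --- or, when $\{i_1,\dots,i_t\}\cap\{j_1,\dots,j_s\}=\emptyset$, of the simpler Proposition \ref{addpipairmulcommutprop30a}, and when $a$ additionally faces some of the left wires, of Proposition \ref{addpipair2sidecommutprop28}.

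Applying the relevant one of those propositions commutes the two blocks past one another; I would then undo the preparatory rewrites --- recontracting the auxiliary wire with Lemma \ref{hopfnslm} and Corollary \ref{generalbialgebra} and restoring the pink nodes with Propositions \ref{picntcommutesam}--\ref{picntcommuteand} --- to land on the right-hand side of the claimed equation.

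The hard part will be the bookkeeping on the overlap $\{i_1,\dots,i_t\}\cap\{j_1,\dots,j_s\}$: dragging the green body of $b$ across $a$'s pink nodes on those shared wires creates pairs of red $\pi$'s that have to be absorbed, which forces repeated appeals to the $\pi$-pair/multiplication commutation results (Propositions \ref{addpimultiplycommutg}, \ref{multiplypimulticommutgcro2}, \ref{addpipairmultiplycommutgp} and Corollary \ref{multiplypimulticommutecro}). One must also check that no step silently uses a scalar such as $\frac{1}{\sqrt{2}}$ or $\frac{1}{2}$, so the scalar-free Hadamard and red spider of \cite{wangalg2020} must be used throughout, exactly as in the proofs of Propositions \ref{addpimultiplycommut} and \ref{multiplypimulticommute} above; this is precisely why the argument carries over unchanged from \cite{wangalgnorm2020} to an arbitrary commutative ring.
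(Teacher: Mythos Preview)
The paper does not supply its own proof of this proposition: it is one of the results imported wholesale from \cite{wangalgnorm2020}, under the blanket remark that proofs carrying over unchanged to the ring setting are omitted. There is therefore no in-paper argument to compare your proposal against.

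On the substance of your proposal: the overall reduction strategy---pick a wire $p$ in the symmetric difference of the two index sets and use it to manufacture an auxiliary ``left'' wire so as to land in Proposition~\ref{addpipairmulcommutprop30b}---is sound, and in the case $p\in\{j_1,\dots,j_s\}\setminus\{i_1,\dots,i_t\}$ it works cleanly (indeed, bending wire $p$ with the compact structure (S3) is a tidier way to produce the auxiliary wire than the Hopf/bialgebra manoeuvre you sketch). The point that deserves more care is your dismissal of the other case as ``entirely analogous, with the roles of the two blocks interchanged'': the $a$-block (pink legs) and the $b$-block ($\pi$-pairs) are not of the same shape, so swapping them does not send Proposition~\ref{addpipairmulcommutprop30b} to itself or to its stated corollary. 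In the residual situation $\{j_1,\dots,j_s\}\subsetneq\{i_1,\dots,i_t\}$ you should instead argue directly, for instance by peeling off the $\pi$-pairs of $b$ on $\{i_1,\dots,i_t\}\setminus\{j_1,\dots,j_s\}$ one at a time via Proposition~\ref{addpimultiplycommutg} (whose hypothesis is always met here) and then invoking Proposition~\ref{addpipairmultiplycommutgp} for the remaining $\pi$-pairs supported inside $\{j_1,\dots,j_s\}$. With that adjustment the argument goes through; no scalar obstructions arise, as you note.
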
    

    \begin{corollary}\cite{wangalgnorm2020}\label{addpipairmulcommutprop30ccro}  
    Suppose  the node $a$ is connected to $ j_1, \cdots,  j_s $ on the  left $m$  wires via pink nodes, 
 pairs of red $\pi$ nodes separated by green nodes connected to $b$ are located on $ i_1, \cdots,  i_t $. Furthermore,  $ \emptyset\neq\{i_1, \cdots,  i_t\} \subseteq  \{1+n, \cdots,  m+n\}, \emptyset\neq\{j_1, \cdots,  j_s\} \subseteq  \{1+n, \cdots,  m+n\},  \{i_1, \cdots,  i_t\}\neq\{j_1, \cdots,  j_s\}  $. Then
    $$  %
	\beginpgfgraphicnamed{TikZit//addpipairmultcommutprop30ccro}
	\InputIfFileExists{TikZit//addpipairmultcommutprop30ccro.tikz}{}{\input{./figures/TikZit//addpipairmultcommutprop30ccro.tikz}}%
	\endpgfgraphicnamed
 $$
   \end{corollary}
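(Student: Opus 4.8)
The plan is to derive Corollary~\ref{addpipairmulcommutprop30ccro} from Proposition~\ref{addpipairmulcommutprop30c} by a left--right block swap, exchanging the role of the $n$-wire block with that of the $m$-wire block. Observe first that the two statements are identical apart from the position of the relevant data: in Proposition~\ref{addpipairmulcommutprop30c} the pink connections of $a$ and the red-$\pi$ pairs (separated by green nodes) attached to $b$ all sit on the wires indexed $\{1,\dots,n\}$ (the ``right $n$ wires''), whereas in the corollary the same configuration sits on the wires indexed $\{n+1,\dots,n+m\}$ (the ``left $m$ wires''), and the side conditions $\emptyset\neq\{i_1,\dots,i_t\}$, $\emptyset\neq\{j_1,\dots,j_s\}$, $\{i_1,\dots,i_t\}\neq\{j_1,\dots,j_s\}$ are unchanged. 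So it suffices to transport the equation of Proposition~\ref{addpipairmulcommutprop30c} along the permutation of wires that interchanges the two blocks.

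Concretely, let $\sigma_{n,m}$ denote the permutation built from the swap generator $\sigma$ and identity wires that carries the block of $n$ wires past the block of $m$ wires; it is a morphism of the underlying PROP and is invertible, with inverse of the same shape. I would conjugate both sides of the equality in Proposition~\ref{addpipairmulcommutprop30c} by $\sigma_{n,m}$ --- that is, compose with $\sigma_{n,m}$, and with its inverse, on whichever of the input and output sides the two blocks of wires occur. Since the two diagrams are equal, so are their conjugates, and this step invokes nothing beyond the symmetric-monoidal (PROP) axioms, i.e.\ naturality of $\sigma$. Sliding the swaps through and using that naturality, the pink nodes of $a$ and the red-$\pi$ pairs of $b$ get moved from the wires $1,\dots,n$ onto the wires $n+1,\dots,n+m$, and after relabelling the wire indices the resulting equality is exactly the displayed equation of Corollary~\ref{addpipairmulcommutprop30ccro}.

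The only genuine work is the index bookkeeping: one must check that conjugation by $\sigma_{n,m}$ sends the configuration attached to the right $n$ wires precisely to the configuration attached to the left $m$ wires, with the index sets landing in $\{n+1,\dots,n+m\}$ as stated and the side conditions transferring verbatim. This is routine, so I do not expect a real obstacle; the statement is recorded as a corollary precisely because it is this block-swap symmetry of Proposition~\ref{addpipairmulcommutprop30c}. Equivalently, one could re-run the derivation of Proposition~\ref{addpipairmulcommutprop30c} with the two blocks of wires interchanged throughout, obtaining the same conclusion.
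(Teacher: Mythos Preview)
Your proposal is correct and matches the paper's approach: the paper records the statement as a corollary of Proposition~\ref{addpipairmulcommutprop30c} without giving a separate proof, implicitly relying on exactly the left--right block swap symmetry you spell out (compare the analogous corollaries in the paper, e.g.\ Corollary~\ref{propo1cro2} or Corollary~\ref{addpipair2sidecommutcro}, which are justified by swapping blocks of wires). Your explicit conjugation by $\sigma_{n,m}$ just makes precise what the paper takes for granted.
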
   

 \section{Completeness of ZX-calculus over $\mathcal{R}$ via elementary transformations}
 In this section, we prove the main theorem totally following \cite{wangalgnorm2020}. For simplicity, we omit the corresponding details in \cite{wangalgnorm2020} if they still hold in this paper.

 \begin{theorem} \label{main}  
 The ZX-calculus over an arbitrary ring $\mathcal{R}$ is complete with respect to the rules in Figure \ref{figurealgebra1} and Figure \ref{figurealgebra2}.
   \end{theorem}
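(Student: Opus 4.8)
The plan is to follow the normal-form strategy of \cite{wangalgnorm2020} essentially verbatim, adapting only those steps that rely on scalars unavailable in a bare commutative ring. Recall that to each matrix $M$ over $\mathcal R$ of type $2^m\times 2^n$, viewed as a morphism $n\to m$, one associates a canonical ZX diagram $NF_M$ in \emph{normal form}, built entirely from green and red spiders together with the $\pi$-copied green triangles labelled by ring elements, arranged so that the diagram literally records the entries of $M$ through the elementary row operations ``add $a$ times one row to another'' (the $\begin{pmatrix}1 & a\\ 0 & 1\end{pmatrix}$-gadgets) and ``multiply a row by $a$'' (green $a$-dots). Since $NF_M$ is defined directly from the entries of $M$, the assignment $M\mapsto NF_M$ is manifestly a function (one normal form per matrix). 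First I would check universality, $\denote{NF_M}=M$ for all $M$, by induction on the size of $M$ via the elementary-operation construction; this is a routine computation with the standard interpretation, and the ring elements $M_{ij}$ enter only as labels on green dots and triangles, never requiring their inverses.

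The core of the argument is the \emph{reduction lemma}: every diagram $D\colon n\to m$ is provably equal to $NF_{\denote D}$ using only the rules of Figures~\ref{figurealgebra1} and~\ref{figurealgebra2}. I would prove this by structural induction on the construction of $D$ from the generators by $\circ$ and $\otimes$, which splits into three tasks: (i) each generator of Table~\ref{qbzxgenerator}, and the empty diagram, is provably equal to its own normal form; (ii) $NF_M\circ NF_N$ is provably equal to $NF_{M\circ N}$; and (iii) $NF_M\otimes NF_N$ is provably equal to $NF_{M\otimes N}$. Task (i) is a finite check, computing the normal forms of $H$, $\sigma$, $\mathbb I$, $P$, $C_a$, $C_u$, $T$, $T^{-1}$, $R_{Z,a}$ and $R_X$ with the help of the derived equalities of Sections~3--4 --- spider fusion and the (Bas0)/(Bas1) rules, the Hopf law (Lemma~\ref{hopfnslm}), the triangle--Hopf identities (Lemma~\ref{trianglehopflm}), Lemmas~\ref{gdothredlm}, \ref{gpidotredpilm}, \ref{b3ringlm}, and the distribution law. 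Task (iii) is comparatively easy, since the normal form is designed to interleave additively across tensor factors.

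The main obstacle is task (ii), diagrammatic matrix multiplication. One must show that plugging two normal forms together and rewriting recovers the normal form of the composite; because an entry $(M\circ N)_{ik}=\sum_j M_{ij}N_{jk}$ is a sum of products, this forces one to push one normal form through the other, multiply the scalar labels pairwise (distribution, Corollary~\ref{distribute2} and Proposition~\ref{TR15}), collect the resulting terms into single additively combined triangles (the (AD) rule together with Lemma~\ref{raddcomplex}), and finally re-sort all the resulting gadgets into the canonical order of $NF_{M\circ N}$. The re-sorting is precisely what the long sequence of commutation results of Section~5 is for (Propositions~\ref{addcommutatgen}--\ref{addpipairmulcommutprop30c} and their corollaries), while the bialgebra and $\pi$-copy machinery (Lemma~\ref{andbial}, Proposition~\ref{picntcommut}) governs the interaction between the red ``routing'' part of one normal form and the green ``amplitude'' part of the other. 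Throughout, the one genuinely ring-specific point is that no step may appeal to scalar invertibility: negatives are available (the $-1$ entries of $H$ and of $\denote{T^{-1}}$, and the bit flip provided by the red $\pi$ gate $P$), but nothing beyond the units of $\mathcal R$, so every place where the $\mathbb C$-proof of \cite{wangalgnorm2020} used a nonzero non-unit scalar must be routed through the scalar-free substitutes established in Sections~3--4; once this is verified the reduction lemma goes through unchanged.

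Completeness then follows immediately: if $\denote{D_1}=\denote{D_2}=M$, the reduction lemma gives that $D_1$ and $D_2$ are both provably equal to the single diagram $NF_M$, hence provably equal to each other using only the rules of Figures~\ref{figurealgebra1} and~\ref{figurealgebra2}.
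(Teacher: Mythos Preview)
Your high-level strategy --- define a unique normal form, then show every diagram reduces to it --- is exactly the paper's, and you correctly identify the technical machinery (the commutation propositions of Section~4, the distribution and addition rules, the $\pi$-copy and bialgebra laws) as the engine driving the reduction. Where you diverge is in the decomposition of the induction. You propose to define $NF_M$ for an arbitrary morphism $n\to m$ and then close normal forms under sequential composition and tensor. The paper instead works entirely with \emph{states}: the normal form~(\ref{normalfring}) is defined only for vectors $0\to m$, and general morphisms are handled by map-state duality (bending inputs down with caps). Consequently the paper's inductive obligations are not ``composition preserves normal form'' but rather (a)~the tensor of two state normal forms rewrites to a normal form, (b)~a single \emph{self-plugging} (one cup applied to two outputs of a state normal form) rewrites to a normal form, and (c)~each generator, bent into a state, rewrites to a normal form.

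This buys a real simplification: self-plugging is one cup at a time, so the hard step is a local rewrite on a known shape, whereas your task~(ii) asks for the full matrix-multiplication identity $NF_M\circ NF_N = NF_{MN}$ in one go, and you never actually say what $NF_M$ is for a non-state $M$. If you intend $NF_M$ to be the bent state normal form of the vectorisation of $M$, then your ``composition'' step \emph{is} a sequence of self-pluggings and the two approaches coincide; if you intend a genuinely morphism-level normal form, you owe a definition and the composition proof becomes substantially harder than what the paper needs. Either way the paper's route is the cleaner one, and your remark that tensor (task~(iii)) is ``comparatively easy'' is optimistic --- in the paper the tensor step already consumes most of the commutation results of Section~4 (Corollaries~\ref{nlinestensornormalform}--\ref{nlinestensormmultiply}, Proposition~\ref{addpipair2sidecommutprop} et seq.).
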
      
     
  \subsection{Normal form}  
Suppose that $a_j\in \mathcal{R}, 0\leq j_1< \cdots < j_s \leq m-1, 1\leq s \leq m$. Then the following diagram
 \begin{equation}\label{rowaddrepresentation}
	\beginpgfgraphicnamed{TikZit//rowaddrepresent}
	\InputIfFileExists{TikZit//rowaddrepresent.tikz}{}{\input{./figures/TikZit//rowaddrepresent.tikz}}%
	\endpgfgraphicnamed

 \end{equation}
where $a_j$ connects to  $ j_1, \cdots,  j_s $ via red dots, represents the $2^m\times 2^m$ row-addition elementary matrix: 

 \[
A_j=\begin{blockarray}{cccccl}
\begin{block}{(ccccc)l}
     1 & \cdots & 0 &\cdots & 0 &r_0\\
     \vdots    & \ddots & &&  \vdots&  \\
        0   & \cdots & 1 & \cdots& a_j&r_j\\
       \vdots    & &  & \ddots&  \vdots &  \\
        0   & \cdots &0 & \cdots& 1&r_{2^m-1}\\
\end{block}
\end{blockarray}
 \]
 where $a_j$ lies in the $r_j$ row, $j=2^m-1-(2^{j_1}+\cdots+2^{j_s})$.
 
Moreover, let $a_{2^m-1}\in \mathcal{R}$, then the following diagram 
\[
	\beginpgfgraphicnamed{TikZit//rowmultring}
	\InputIfFileExists{TikZit//rowmultring.tikz}{}{\input{./figures/TikZit//rowmultring.tikz}}%
	\endpgfgraphicnamed

\]
represents the row-multiplication matrix:
 \[
M=\begin{blockarray}{cccccl}
\begin{block}{(ccccc)l}
     1 & \cdots & 0 &\cdots & 0 &r_0\\
     \vdots    & \ddots & &&  \vdots&  \\
        0   & \cdots & 1 & \cdots& 0&r_k\\
       \vdots    & &  & \ddots&  \vdots &  \\
        0   & \cdots &0 & \cdots& a_{2^m-1}&r_{2^m-1}\\
\end{block}
\end{blockarray}
 \]

Any vector $(a_0, a_1, \cdots, a_{2^m-1})^T$ with $a_i \in\mathcal{R}, m\geq 1$  can be uniquely represented by the following normal form:
 \begin{equation}\label{normalfring}
	\beginpgfgraphicnamed{TikZit//normalform3}
	\InputIfFileExists{TikZit//normalform3.tikz}{}{\input{./figures/TikZit//normalform3.tikz}}%
	\endpgfgraphicnamed

 \end{equation}	
where for those diagrams which represent row additions, $a_i$ connects to wires with pink nodes depending on $i$, and all possible connections are included in the normal form. Actually, one can check that there are $\binom{m}{1}+\binom{m}{2} +\cdots + \binom{m}{m}=2^m-1$ row additions in the normal form. By Proposition \ref{addcommutatgencont}, all the  row addition diagrams are commutative with each other.

In the case of $m=0$, for any element $a\in \mathcal{R}$, its normal form is defined as
  $$ %
	\beginpgfgraphicnamed{TikZit//scalarnorm}
	\begin{tikzpicture}
	\begin{pgfonlayer}{nodelayer}
		\node [style=rn] (0) at (0, 0.25) {$\pi$};
		\node [style=gbox, scale=1] (1) at (0, -0.25) {$a$};
	\end{pgfonlayer}
	\begin{pgfonlayer}{edgelayer}
		\draw (0) to (1);
	\end{pgfonlayer}
\end{tikzpicture}}%
	\endpgfgraphicnamed
$$
where 
 $$\left\llbracket%
	\beginpgfgraphicnamed{TikZit//scalarnorm}
	}%
	\endpgfgraphicnamed
\right\rrbracket=a$$  
By the map-state duality as given in (\ref{maptostate}), we obtain the universality of ZX-calculus over  $\mathcal{R}$:  any $2^m \times 2^n$ matrix $A$ with $m, n \geq 0$ can be represented by a ZX diagram.

  \subsection{Proof of completeness}
  
 Completeness means for any two diagrams  $D_1$ and $D_2$ of ZX-calculus over $\mathcal{R}$, if $\left\llbracket D_1 \right\rrbracket = \left\llbracket D_2 \right\rrbracket $, then $D_1=D_2$ can be derived from the ZX rules. As shown in \cite{wangalgnorm2020}, to prove completeness we need to prove the following statements:
 \begin{enumerate}
 \item the juxtaposition of any two diagrams in normal form can be rewritten
into a normal form.
 \item a self-plugging on a diagram in normal
form can be rewritten into a normal form.
 \item all generators bended in state diagrams or already being state diagrams can be rewritten into normal forms.
  \end{enumerate}

  \subsubsection{ Rewrite the tensor product of  two normal forms 
into a normal form}  
It has been simply shown in \cite{wangalgnorm2020} that the tensor product of two scalar diagrams can be rewritten into a normal form as follows:
   $$ %
	\beginpgfgraphicnamed{TikZit//scalarsum}
	\InputIfFileExists{TikZit//scalarsum.tikz}{}{\input{./figures/TikZit//scalarsum.tikz}}%
	\endpgfgraphicnamed
$$

Given the following two norm forms such that

 \begin{equation}\label{normalformeq}
\left\llbracket%
	\beginpgfgraphicnamed{TikZit//normalform3}
	\InputIfFileExists{TikZit//normalform3.tikz}{}{\input{./figures/TikZit//normalform3.tikz}}%
	\endpgfgraphicnamed
\right\rrbracket = \begin{pmatrix}
        a_0  \\  a_1\\
        \vdots \\ a_{2^m-2}\\
        a_{2^m-1} \end{pmatrix}
 \end{equation}	
 and
  \begin{equation}\label{normalformbeq}
	\left\llbracket%
	\beginpgfgraphicnamed{TikZit//normalform3b}
	\InputIfFileExists{TikZit//normalform3b.tikz}{}{\input{./figures/TikZit//normalform3b.tikz}}%
	\endpgfgraphicnamed
\right\rrbracket = \begin{pmatrix}
        b_0  \\  b_1\\
        \vdots \\ b_{2^n-2}\\
        b_{2^n-1} \end{pmatrix}
 \end{equation}	
 
where $m, n $ are positive integers, $a_i, b_j \in \mathcal{R}$, we have 
 \begin{proposition}
The following tensor product of two normal forms can be rewritten into a single normal form:
 \begin{equation}\label{normalformtensoreq}
	\beginpgfgraphicnamed{TikZit//normalform3}
	\InputIfFileExists{TikZit//normalform3.tikz}{}{\input{./figures/TikZit//normalform3.tikz}}%
	\endpgfgraphicnamed
\quad %
	\beginpgfgraphicnamed{TikZit//normalform3b}
	\InputIfFileExists{TikZit//normalform3b.tikz}{}{\input{./figures/TikZit//normalform3b.tikz}}%
	\endpgfgraphicnamed

 \end{equation}	
  \end{proposition}
  The proof of this proposition is the same as that of \cite{wangalgnorm2020}.

\subsubsection{ Self-plugging on a normal form}
In this subsection, we prove the following result:
 \begin{theorem} \label{selfplug}  
 A self-plugged normal form can be rewritten into a normal form. 
   \end{theorem}  
  The proof given in \cite{wangalgnorm2020} still holds here. In the following we show again the properties used for this proof, where only one proposition need a new proof which works for the ring case.
   


 \begin{proposition}\cite{wangalgnorm2020}\label{rule10}  
  \begin{equation}\label{TR17}
	\beginpgfgraphicnamed{TikZit//phaseaddition}
	\InputIfFileExists{TikZit//phaseaddition.tikz}{}{\input{./figures/TikZit//phaseaddition.tikz}}%
	\endpgfgraphicnamed

 \end{equation}
  \end{proposition}

   \begin{corollary}\cite{wangalgnorm2020}\label{rule10exten}
\[    %
	\beginpgfgraphicnamed{TikZit//phaseadditionexten}
	\InputIfFileExists{TikZit//phaseadditionexten.tikz}{}{\input{./figures/TikZit//phaseadditionexten.tikz}}%
	\endpgfgraphicnamed
 \]
  \end{corollary}


    \begin{proposition}\cite{wangalgnorm2020}\label{rule12th}
    \[ %
	\beginpgfgraphicnamed{TikZit//rule12th}
	\InputIfFileExists{TikZit//rule12th.tikz}{}{\input{./figures/TikZit//rule12th.tikz}}%
	\endpgfgraphicnamed

     \]
         \end{proposition}
      \begin{proof}
      $$  %
	\beginpgfgraphicnamed{TikZit//rule12thprfring}
	\InputIfFileExists{TikZit//rule12thprfring.tikz}{}{\input{./figures/TikZit//rule12thprfring.tikz}}%
	\endpgfgraphicnamed
 $$
     $$  %
	\beginpgfgraphicnamed{TikZit//rule12thprf2ring}
	\InputIfFileExists{TikZit//rule12thprf2ring.tikz}{}{\input{./figures/TikZit//rule12thprf2ring.tikz}}%
	\endpgfgraphicnamed
 $$
  \end{proof}
  
    \begin{corollary}\cite{wangalgnorm2020}\label{rule12thexten}
    Suppose $a$ is connected to the $i$-th line via a pink node ($i>1$). Then
      \begin{equation}\label{rule12thexteneq}
	\beginpgfgraphicnamed{TikZit//rule12thextendm}
	\InputIfFileExists{TikZit//rule12thextendm.tikz}{}{\input{./figures/TikZit//rule12thextendm.tikz}}%
	\endpgfgraphicnamed
 
     \end{equation}
    where on the left side of  (\ref{rule12thexteneq}) the node $a$ is connected to the $i$-th line and the 1st line via two pink nodes.
         \end{corollary}

      \begin{proposition}\cite{wangalgnorm2020}\label{rule12extengen}
       Suppose $a$ is connected to the $i_1, \cdots, i_s$  via  pink nodes ($i_1>1$). Then one more connection can be added on the right most line:
         \begin{equation}\label{rule12extengeneq}
	\beginpgfgraphicnamed{TikZit//rule12extengendm}
	\InputIfFileExists{TikZit//rule12extengendm.tikz}{}{\input{./figures/TikZit//rule12extengendm.tikz}}%
	\endpgfgraphicnamed
 
       \end{equation}
         \end{proposition}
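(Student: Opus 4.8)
The plan is to induct on $s$, the number of wires to which the $a$-labelled triangle is attached via pink nodes. The base case is essentially already available: when $s = 1$, so that $a$ is connected to a single wire $i_1 > 1$, conjugating by swaps $\sigma$ to relocate that leg reduces the claim to Corollary~\ref{rule12thexten}, which says precisely that one more pink connection — here on the right-most line — may be introduced. The hypothesis $i_1 > 1$ is what guarantees that line $1$ stays free, and this freedom is what lets the swaps be performed without disturbing the rest of the diagram.

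For the inductive step I would peel off one of the existing connections — say the leg on wire $i_s$ — and apply the hypothesis to the remainder. Concretely, using the general bialgebra identity (Corollary~\ref{generalbialgebra}) together with AND-copy (Corollary~\ref{andcopy}), the sub-diagram built from the $a$-triangle and its $s$ pink legs can be rewritten so that a copy of the $a$-triangle carrying only the $s-1$ legs on $i_1, \ldots, i_{s-1}$ appears as an isolated factor, with the $i_s$-leg absorbed into an auxiliary green-dot / red-$\pi$ configuration. The inductive hypothesis then adds the desired pink connection on the right-most line to that isolated factor; it remains to re-fuse the pieces with spider fusion (S1) and the distribution rule (Corollary~\ref{distribute2}), to migrate the auxiliary red-$\pi$ pairs back through the green connections using Proposition~\ref{multiplypimulticommutg} and Proposition~\ref{addpimultiplycommutg}, and finally to exchange the two copies of the $a$-triangle, now both bearing the new leg, via Proposition~\ref{addcommutatgencont}, collecting terms by (AD)/(Dis).

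The hard part will be the bookkeeping of the red-$\pi$ pairs that the bialgebra and copy steps generate. Each commutation result invoked carries side conditions on the index sets involved — disjointness, non-equality, or cardinality at least two — and one must verify these hold at every stage of the peeling and re-fusing. The free wire $1$, available because $i_1 > 1$, is exactly the ``spare'' index that several of these lemmas require, so most of the effort goes into arranging the index sets so that this spare wire is always positioned correctly; once that combinatorial skeleton is in place, the individual rewrites are routine, and the overall argument runs exactly as the corresponding one in \cite{wangalgnorm2020}.
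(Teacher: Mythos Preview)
The paper does not reproduce a proof of this proposition at all: it simply cites \cite{wangalgnorm2020} and moves on, consistent with its stated policy of only rewriting those proofs from \cite{wangalgnorm2020} that actually change over a general ring. So there is no ``paper's own proof'' to compare against beyond the bare deferral to the reference --- which, in fact, is also how your proposal ultimately concludes.

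On the substance of your sketch: the base case $s=1$ is fine and is exactly Corollary~\ref{rule12thexten}, modulo swaps. The inductive step, however, is where your outline becomes hand-wavy. The claim that the $a$-triangle with $s$ pink legs can be rewritten so that an $a$-triangle with $s-1$ legs ``appears as an isolated factor'' is not something that Corollary~\ref{generalbialgebra} and Corollary~\ref{andcopy} give you directly; those results duplicate AND-structure, they do not strip a leg off a single row-addition gadget while leaving a clean smaller copy behind. Likewise, the list of commutation results you invoke (Propositions~\ref{multiplypimulticommutg}, \ref{addpimultiplycommutg}, \ref{addcommutatgencont}) each have index-set side conditions that you have not checked against the specific configuration produced by your decomposition, and it is not clear they are all met simultaneously. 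In short, the inductive skeleton is reasonable, but the decomposition step is asserted rather than argued, and the chain of commutations is a wish-list rather than a verified sequence. Since both you and the paper ultimately point to \cite{wangalgnorm2020} for the details, the honest summary is that your proposal neither diverges from nor independently establishes the result.
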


   \subsubsection{ Rewriting generators into normal form }
   In this section, we prove that all the generators bended in state diagrams or already being state diagrams can be rewritten into normal forms.
   
  We only need to rewrite the generators $R_{X}^{(n,m)}$ and $P$ in Table  \ref{qbzxgenerator} into a normal form, since the other generators have been rewritten into normal form in \cite{wangalgnorm2020}.
  
  \begin{lemma}\label{redspidertonormalfm1lma} 
   \[   %
	\beginpgfgraphicnamed{TikZit//redspidertonormalfm1lm}
	\InputIfFileExists{TikZit//redspidertonormalfm1lm.tikz}{}{\input{./figures/TikZit//redspidertonormalfm1lm.tikz}}%
	\endpgfgraphicnamed
\]  
    \end{lemma} 
      \begin{proof}
   \[   %
	\beginpgfgraphicnamed{TikZit//redspidertonormalfm1}
	\InputIfFileExists{TikZit//redspidertonormalfm1.tikz}{}{\input{./figures/TikZit//redspidertonormalfm1.tikz}}%
	\endpgfgraphicnamed
\]  
    \end{proof} 
    
      \begin{lemma}\label{redspidertonormalfm2lma} 
   \[   %
	\beginpgfgraphicnamed{TikZit//redspidertonormalfm2lm}
	\InputIfFileExists{TikZit//redspidertonormalfm2lm.tikz}{}{\input{./figures/TikZit//redspidertonormalfm2lm.tikz}}%
	\endpgfgraphicnamed
\]  
    \end{lemma} 
     \begin{proof}
     \[   %
	\beginpgfgraphicnamed{TikZit//redspidertonormalfm2}
	\InputIfFileExists{TikZit//redspidertonormalfm2.tikz}{}{\input{./figures/TikZit//redspidertonormalfm2.tikz}}%
	\endpgfgraphicnamed
\]  
      \end{proof}

         \begin{lemma}\label{redspidertonormalfm4lma} 
           \[   %
	\beginpgfgraphicnamed{TikZit//redspidertonormalfm4lm}
	\InputIfFileExists{TikZit//redspidertonormalfm4lm.tikz}{}{\input{./figures/TikZit//redspidertonormalfm4lm.tikz}}%
	\endpgfgraphicnamed
\]  
    \end{lemma}
     \begin{proof}
      \[   %
	\beginpgfgraphicnamed{TikZit//redspidertonormalfm4}
	\InputIfFileExists{TikZit//redspidertonormalfm4.tikz}{}{\input{./figures/TikZit//redspidertonormalfm4.tikz}}%
	\endpgfgraphicnamed
\]  
        \end{proof}  
        
            \begin{lemma}\label{redspidertonormalfm3lma} 
   \[   %
	\beginpgfgraphicnamed{TikZit//redspidertonormalfm3lm}
	\InputIfFileExists{TikZit//redspidertonormalfm3lm.tikz}{}{\input{./figures/TikZit//redspidertonormalfm3lm.tikz}}%
	\endpgfgraphicnamed
\]  
    \end{lemma} 
         \begin{proof}
      \[   %
	\beginpgfgraphicnamed{TikZit//redspidertonormalfm3}
	\InputIfFileExists{TikZit//redspidertonormalfm3.tikz}{}{\input{./figures/TikZit//redspidertonormalfm3.tikz}}%
	\endpgfgraphicnamed
\]  
      \end{proof}  

   \begin{enumerate}   
   \item For $R_{X}^{(n,m)}$, it is reduced to the following two cases:
    \[   %
	\beginpgfgraphicnamed{TikZit//ketinorm}
	\InputIfFileExists{TikZit//ketinorm.tikz}{}{\input{./figures/TikZit//ketinorm.tikz}}%
	\endpgfgraphicnamed
\]  
     \[   %
	\beginpgfgraphicnamed{TikZit//redcopynorm}
	\InputIfFileExists{TikZit//redcopynorm.tikz}{}{\input{./figures/TikZit//redcopynorm.tikz}}%
	\endpgfgraphicnamed
\]  
    
  \item For the generator  $P$, we have 
       \[   %
	\beginpgfgraphicnamed{TikZit//redpigatenorm}
	\InputIfFileExists{TikZit//redpigatenorm.tikz}{}{\input{./figures/TikZit//redpigatenorm.tikz}}%
	\endpgfgraphicnamed
\]  
   \end{enumerate}

\subsubsection{ Rewriting scalars into normal form }
Given an arbitrary scalar diagram over a ring $\mathcal{R}$, due to Lemma \ref{gdothredlm}, it can be seen as a state diagram on top plugged with cups or %
	\beginpgfgraphicnamed{TikZit//redcodot}
	\begin{tikzpicture}
	\begin{pgfonlayer}{nodelayer}
		\node [style=rn] (0) at (0, 0) {};
		\node [style=none] (1) at (0, 0.25) {};
	\end{pgfonlayer}
	\begin{pgfonlayer}{edgelayer}
		\draw [style=none] (0) to (1.center);
	\end{pgfonlayer}
\end{tikzpicture}}%
	\endpgfgraphicnamed
 from the bottom. Such a state diagram has been shown in the previous subsection that it can be rewritten into a normal form. In addition, a normal form plugged with  %
	\beginpgfgraphicnamed{TikZit//redcodot}
	}%
	\endpgfgraphicnamed
 can be bended up, so at the end any scalar diagram is a non-scalar normal form with outputs connected only by cups. 
Since normal form with more than 2 outputs plugged with cups can be reduced to  a normal form with outputs as proved in Theorem \ref{selfplug},
 we just need to show that a normal form with exactly 2 outputs plugged with a cup can be written into a scalar normal form  %
	\beginpgfgraphicnamed{TikZit//scalarnorm}
	}%
	\endpgfgraphicnamed
, which has also been obtained in Theorem \ref{selfplug}.

\section{ ZX-calculus over commutative semirings}
Given an arbitrary commutative semiring $\mathcal{S}$, it is clear that one could not not have a Hadamard node or an inverse of the triangle node any more, due to a short of negative elements. Bearing this in mind,
we give the generators of ZX-calculus over $\mathcal{S}$ in the following table.

\begin{table}[!h]
\begin{center} 
\begin{tabular}{|r@{~}r@{~}c@{~}c|r@{~}r@{~}c@{~}c|}
\hline
$R_{Z,a}^{(n,m)}$&$:$&$n\to m$ & %
	\beginpgfgraphicnamed{TikZit//generalgreenspider}
	\InputIfFileExists{TikZit//generalgreenspider.tikz}{}{\input{./figures/TikZit//generalgreenspider.tikz}}%
	\endpgfgraphicnamed
  & $R_{X}^{(n,m)}$&$:$&$n\to m$ & %
	\beginpgfgraphicnamed{TikZit//redspider0p}
	\InputIfFileExists{TikZit//redspider0p.tikz}{}{\input{./figures/TikZit//redspider0p.tikz}}%
	\endpgfgraphicnamed
\\\hline
  $T$&$:$&$1\to 1$&%
	\beginpgfgraphicnamed{TikZit//triangle}
	\InputIfFileExists{TikZit//triangle.tikz}{}{\input{./figures/TikZit//triangle.tikz}}%
	\endpgfgraphicnamed
 
 &  $\sigma$&$:$&$ 2\to 2$& %
	\beginpgfgraphicnamed{TikZit//swap}
	\InputIfFileExists{TikZit//swap.tikz}{}{\input{./figures/TikZit//swap.tikz}}%
	\endpgfgraphicnamed
\\\hline
   $\mathbb I$&$:$&$1\to 1$&%
	\beginpgfgraphicnamed{TikZit//Id}
	\InputIfFileExists{TikZit//Id.tikz}{}{\input{./figures/TikZit//Id.tikz}}%
	\endpgfgraphicnamed
 & $P$&$:$&$1\to 1$&%
	\beginpgfgraphicnamed{TikZit//redpigate}
	\InputIfFileExists{TikZit//redpigate.tikz}{}{\input{./figures/TikZit//redpigate.tikz}}%
	\endpgfgraphicnamed
 \\\hline
   $C_a$&$:$&$ 0\to 2$& %
	\beginpgfgraphicnamed{TikZit//cap}
	\InputIfFileExists{TikZit//cap.tikz}{}{\input{./figures/TikZit//cap.tikz}}%
	\endpgfgraphicnamed
 &$ C_u$&$:$&$ 2\to 0$&%
	\beginpgfgraphicnamed{TikZit//cup}
	\InputIfFileExists{TikZit//cup.tikz}{}{\input{./figures/TikZit//cup.tikz}}%
	\endpgfgraphicnamed
 \\\hline
\end{tabular}\caption{Generators of ZX-calculus,where $m,n\in \mathbb N$, $ a  \in \mathcal{S}$, and $e$ represents an empty diagram.} \label{qbzxgeneratorsemi}
\end{center}
\end{table}
\FloatBarrier

There is a standard interpretation $\left\llbracket \cdot \right\rrbracket$ for the ZX diagrams over  $\mathcal{S}$:
\[
\left\llbracket %
	\beginpgfgraphicnamed{TikZit//generalgreenspider}
	\InputIfFileExists{TikZit//generalgreenspider.tikz}{}{\input{./figures/TikZit//generalgreenspider.tikz}}%
	\endpgfgraphicnamed
 \right\rrbracket=\ket{0}^{\otimes m}\bra{0}^{\otimes n}+a\ket{1}^{\otimes m}\bra{1}^{\otimes n},
\]

\[
\left\llbracket %
	\beginpgfgraphicnamed{TikZit//redspider0p}
	\InputIfFileExists{TikZit//redspider0p.tikz}{}{\input{./figures/TikZit//redspider0p.tikz}}%
	\endpgfgraphicnamed
 \right\rrbracket=\sum_{\substack{ i_1, \cdots, i_m,  j_1, \cdots, j_n\in\{0, 1\} \\ i_1+\cdots+ i_m\equiv  j_1+\cdots +j_n(mod~ 2)}}\ket{i_1, \cdots, i_m}\bra{j_1, \cdots, j_n},
\]

\[
 \left\llbracket%
	\beginpgfgraphicnamed{TikZit//triangle}
	}%
	\endpgfgraphicnamed
\right\rrbracket=\begin{pmatrix}
        1 & 1 \\
        0 & 1
 \end{pmatrix}, \quad
  \left\llbracket%
	\beginpgfgraphicnamed{TikZit//emptysquare}
	\InputIfFileExists{TikZit//emptysquare.tikz}{}{\input{./figures/TikZit//emptysquare.tikz}}%
	\endpgfgraphicnamed
\right\rrbracket=1, \quad
\left\llbracket%
	\beginpgfgraphicnamed{TikZit//Id}
	}%
	\endpgfgraphicnamed
\right\rrbracket=\begin{pmatrix}
        1 & 0 \\
        0 & 1
 \end{pmatrix},  \quad
\left\llbracket%
	\beginpgfgraphicnamed{TikZit//redpigate}
	\begin{tikzpicture}
	\begin{pgfonlayer}{nodelayer}
		\node [style=none] (0) at (0, -0.5) {};
		\node [style=none] (1) at (0, 0.5) {};
		\node [style=rn] (2) at (0, 0) {$\pi$};
	\end{pgfonlayer}
	\begin{pgfonlayer}{edgelayer}
		\draw (0.center) to (1.center);
	\end{pgfonlayer}
\end{tikzpicture}}%
	\endpgfgraphicnamed
\right\rrbracket=\begin{pmatrix}
        0 & 1 \\
        1 & 0
 \end{pmatrix}, 
   \]

\[
 \left\llbracket%
	\beginpgfgraphicnamed{TikZit//swap}
	\InputIfFileExists{TikZit//swap.tikz}{}{\input{./figures/TikZit//swap.tikz}}%
	\endpgfgraphicnamed
\right\rrbracket=\begin{pmatrix}
        1 & 0 & 0 & 0 \\
        0 & 0 & 1 & 0 \\
        0 & 1 & 0 & 0 \\
        0 & 0 & 0 & 1 
 \end{pmatrix}, \quad
  \left\llbracket%
	\beginpgfgraphicnamed{TikZit//cap}
	}%
	\endpgfgraphicnamed
\right\rrbracket=\begin{pmatrix}
        1  \\
        0  \\
        0  \\
        1  \\
 \end{pmatrix}, \quad
   \left\llbracket%
	\beginpgfgraphicnamed{TikZit//cup}
	}%
	\endpgfgraphicnamed
\right\rrbracket=\begin{pmatrix}
        1 & 0 & 0 & 1 
         \end{pmatrix},
   \]   

\[  \llbracket D_1\otimes D_2  \rrbracket =  \llbracket D_1  \rrbracket \otimes  \llbracket  D_2  \rrbracket, \quad 
 \llbracket D_1\circ D_2  \rrbracket =  \llbracket D_1  \rrbracket \circ  \llbracket  D_2  \rrbracket,
  \]

where 
$$ a  \in \mathcal{S}, \quad \ket{0}= \begin{pmatrix}
        1  \\
        0  \\
 \end{pmatrix}, \quad 
 \bra{0}=\begin{pmatrix}
        1 & 0 
         \end{pmatrix},
 \quad  \ket{1}= \begin{pmatrix}
        0  \\
        1  \\
 \end{pmatrix}, \quad 
  \bra{1}=\begin{pmatrix}
     0 & 1 
         \end{pmatrix}, \quad %
	\beginpgfgraphicnamed{TikZit//emptysquare}
	\InputIfFileExists{TikZit//emptysquare.tikz}{}{\input{./figures/TikZit//emptysquare.tikz}}%
	\endpgfgraphicnamed

 $$ denotes the empty diagram.


 \begin{remark}
If $ \mathcal{S}=\mathbb C$, then the interpretation of the red spider we defined here is just the normal red spider \cite{CoeckeDuncan}  written in terms of computational basis with all the coefficients being $1$. To see this, one just need to notice that the red spider can be generated by the monoid pair (and its flipped version) %
	\beginpgfgraphicnamed{TikZit//comonoid}
	\InputIfFileExists{TikZit//comonoid.tikz}{}{\input{./figures/TikZit//comonoid.tikz}}%
	\endpgfgraphicnamed
 corresponding to matrices
 $\begin{pmatrix}
    1 & 0  &   0 & 1 \\
       0 & 1&    1 & 0
 \end{pmatrix}$ and 
$ \begin{pmatrix}
         1 \\
         0
 \end{pmatrix}$ respectively, which means the red spider defined in this way is the same as the normal red spider (see e.g. \cite{CoeckeDuncan} ) up to a scalar depending on the number of inputs and outputs of the spider.
 \end{remark} 
 
 For simplicity, we make the following conventions: 
\[
	\beginpgfgraphicnamed{TikZit//spider0denotesemiring2}
	\InputIfFileExists{TikZit//spider0denotesemiring2.tikz}{}{\input{./figures/TikZit//spider0denotesemiring2.tikz}}%
	\endpgfgraphicnamed
 
\]

 Now we give rules for ZX-calculus over an arbitrary commutative semiring  $\mathcal{S}$.
   \begin{figure}[!h]
\begin{center} 
\[
\quad \qquad\begin{array}{|cccc|}
\hline
	\beginpgfgraphicnamed{TikZit//generalgreenspiderfusesym}
	\InputIfFileExists{TikZit//generalgreenspiderfusesym.tikz}{}{\input{./figures/TikZit//generalgreenspiderfusesym.tikz}}%
	\endpgfgraphicnamed
&(S1) &%
	\beginpgfgraphicnamed{TikZit//s2new2}
	\InputIfFileExists{TikZit//s2new2.tikz}{}{\input{./figures/TikZit//s2new2.tikz}}%
	\endpgfgraphicnamed
 &(S2)\\
	\beginpgfgraphicnamed{TikZit//induced_compact_structure}
	\InputIfFileExists{TikZit//induced_compact_structure.tikz}{}{\input{./figures/TikZit//induced_compact_structure.tikz}}%
	\endpgfgraphicnamed
&(S3) & %
	\beginpgfgraphicnamed{TikZit//redpispiderfusionring}
	\InputIfFileExists{TikZit//redpispiderfusionring.tikz}{}{\input{./figures/TikZit//redpispiderfusionring.tikz}}%
	\endpgfgraphicnamed
  &(S4) \\
	\beginpgfgraphicnamed{TikZit//b1ring}
	\InputIfFileExists{TikZit//b1ring.tikz}{}{\input{./figures/TikZit//b1ring.tikz}}%
	\endpgfgraphicnamed
&(B1)  & %
	\beginpgfgraphicnamed{TikZit//b2ring}
	\InputIfFileExists{TikZit//b2ring.tikz}{}{\input{./figures/TikZit//b2ring.tikz}}%
	\endpgfgraphicnamed
&(B2)\\ 
	\beginpgfgraphicnamed{TikZit//b3ring}
	\InputIfFileExists{TikZit//b3ring.tikz}{}{\input{./figures/TikZit//b3ring.tikz}}%
	\endpgfgraphicnamed
  &(B1')& %
	\beginpgfgraphicnamed{TikZit//rpicopyns}
	\InputIfFileExists{TikZit//rpicopyns.tikz}{}{\input{./figures/TikZit//rpicopyns.tikz}}%
	\endpgfgraphicnamed
&(B3) \\
    & &&\\ 
	\beginpgfgraphicnamed{TikZit//rdotaempty}
	\InputIfFileExists{TikZit//rdotaempty.tikz}{}{\input{./figures/TikZit//rdotaempty.tikz}}%
	\endpgfgraphicnamed
&(Ept) &%
	\beginpgfgraphicnamed{TikZit//zerotoredring}
	\InputIfFileExists{TikZit//zerotoredring.tikz}{}{\input{./figures/TikZit//zerotoredring.tikz}}%
	\endpgfgraphicnamed
&(Zero)\\
    & &&\\ 
  		  		\hline  
  		\end{array}\]      
  	\end{center}
  	\caption{ZX rules I, over an arbitrary commutative semiring  $\mathcal{S}, m \geq 0, a, b \in  \mathcal{S},  \sigma, \tau \in \{0,~\pi\}$, $+$ is a modulo $2\pi$ addition in (S4). The upside-down flipped versions of the rules are assumed to hold as well.}\label{figurealgebrasemi1}
  \end{figure}
 \FloatBarrier

 \begin{figure}[!h]
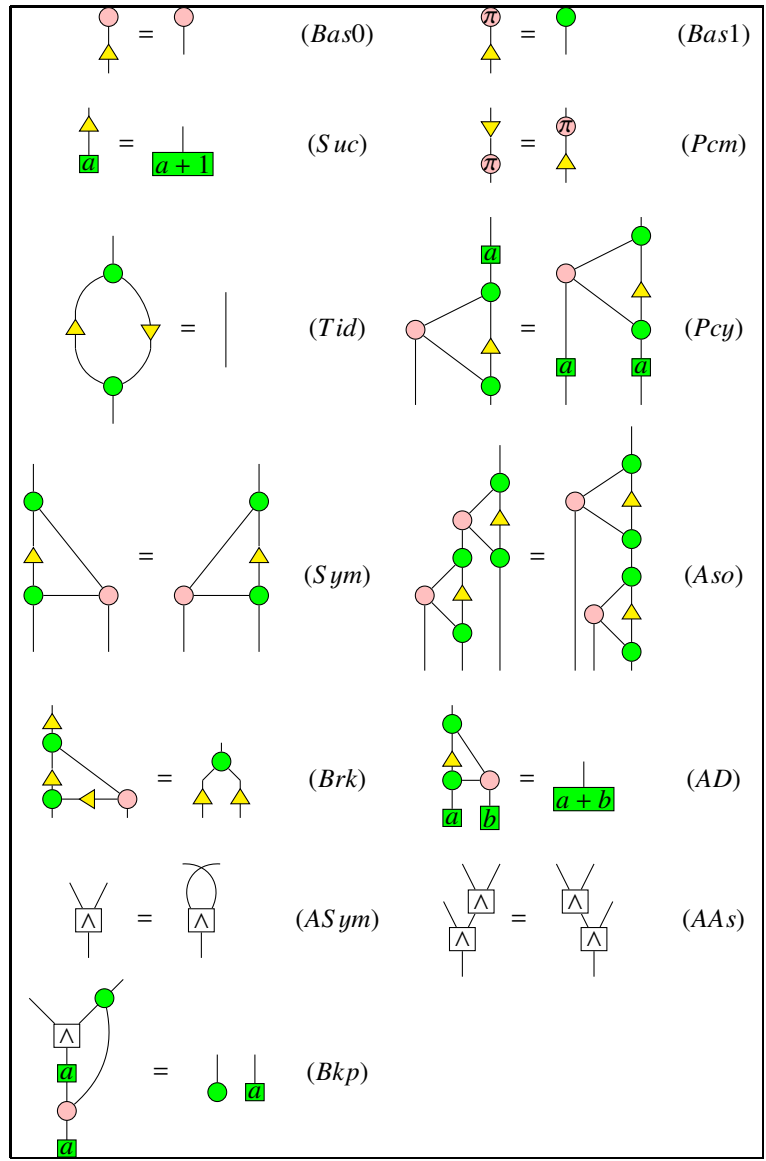

\begin{center}
\[
\quad \qquad\begin{array}{|cccc|}
\hline
	\beginpgfgraphicnamed{TikZit//triangleocopy}
	\InputIfFileExists{TikZit//triangleocopy.tikz}{}{\input{./figures/TikZit//triangleocopy.tikz}}%
	\endpgfgraphicnamed
 &(Bas0) &%
	\beginpgfgraphicnamed{TikZit//trianglepicopyring}
	\InputIfFileExists{TikZit//trianglepicopyring.tikz}{}{\input{./figures/TikZit//trianglepicopyring.tikz}}%
	\endpgfgraphicnamed
&(Bas1)\\
   & &&\\ 
	\beginpgfgraphicnamed{TikZit//plus1}
	\InputIfFileExists{TikZit//plus1.tikz}{}{\input{./figures/TikZit//plus1.tikz}}%
	\endpgfgraphicnamed
&(Suc)& %
	\beginpgfgraphicnamed{TikZit//zx2e}
	\InputIfFileExists{TikZit//zx2e.tikz}{}{\input{./figures/TikZit//zx2e.tikz}}%
	\endpgfgraphicnamed
  & (Pcm) \\
   & &&\\ 
	\beginpgfgraphicnamed{TikZit//tr4g}
	\InputIfFileExists{TikZit//tr4g.tikz}{}{\input{./figures/TikZit//tr4g.tikz}}%
	\endpgfgraphicnamed
&(Tid)&%
	\beginpgfgraphicnamed{TikZit//TR1314combine2}
	\InputIfFileExists{TikZit//TR1314combine2.tikz}{}{\input{./figures/TikZit//TR1314combine2.tikz}}%
	\endpgfgraphicnamed
 &(Pcy)\\
%
	\beginpgfgraphicnamed{TikZit//lemma4}
	\InputIfFileExists{TikZit//lemma4.tikz}{}{\input{./figures/TikZit//lemma4.tikz}}%
	\endpgfgraphicnamed
&(Sym) &  %
	\beginpgfgraphicnamed{TikZit//associate}
	\InputIfFileExists{TikZit//associate.tikz}{}{\input{./figures/TikZit//associate.tikz}}%
	\endpgfgraphicnamed
 &(Aso)\\ 
 & &&\\ 
	\beginpgfgraphicnamed{TikZit//anddflipsemiring}
	\InputIfFileExists{TikZit//anddflipsemiring.tikz}{}{\input{./figures/TikZit//anddflipsemiring.tikz}}%
	\endpgfgraphicnamed
&(Brk) & %
	\beginpgfgraphicnamed{TikZit//additiongbx}
	\InputIfFileExists{TikZit//additiongbx.tikz}{}{\input{./figures/TikZit//additiongbx.tikz}}%
	\endpgfgraphicnamed
&(AD) \\ 
  & &&\\ 
	\beginpgfgraphicnamed{TikZit//anddsym}
	\InputIfFileExists{TikZit//anddsym.tikz}{}{\input{./figures/TikZit//anddsym.tikz}}%
	\endpgfgraphicnamed
&(ASym) & %
	\beginpgfgraphicnamed{TikZit//AAs}
	\InputIfFileExists{TikZit//AAs.tikz}{}{\input{./figures/TikZit//AAs.tikz}}%
	\endpgfgraphicnamed
&(AAs) \\ 
	\beginpgfgraphicnamed{TikZit//brkpsmr}
	\InputIfFileExists{TikZit//brkpsmr.tikz}{}{\input{./figures/TikZit//brkpsmr.tikz}}%
	\endpgfgraphicnamed
& (Bkp)&&  \\ 
  		  		\hline
  		\end{array}\]      
  	\end{center}
  	\caption{ZX rules II, over an arbitrary commutative semiring  $\mathcal{S},  a, b \in  \mathcal{S}.$ The upside-down flipped versions of the rules are assumed to hold as well.}\label{figurealgebrasemi2}
  \end{figure}
 \FloatBarrier

  Due to the associative rule (AAs) for the AND gate, the following notation is well defined and will be called  (AAs) as well:
 $$%
	\beginpgfgraphicnamed{TikZit//andmultiple}
	\InputIfFileExists{TikZit//andmultiple.tikz}{}{\input{./figures/TikZit//andmultiple.tikz}}%
	\endpgfgraphicnamed
$$
 
It is a routine check that these rules are sound in the sense that they still hold under the standard interpretation $\left\llbracket \cdot \right\rrbracket$.

\section{Derivable equalities  in semirings}
 In this section, we list equalities from rewriting rules in Figure \ref{figurealgebrasemi1} and \ref{figurealgebrasemi2}.  For those equalities that have been derived  in the ring case while still hold in the case of semirings, we just list their lemma numbers in order here, only those need to derive in the semiring case will be given proofs in this section.
 
 Lemma \ref{hopfnslm},  Lemma \ref{trianglehopflm}, Lemma \ref{2mprf}, Lemma \ref{triangleonreddotlm}.
    
 
 
 \begin{lemma}
Suppose $m \geq 0$. Then
	\beginpgfgraphicnamed{TikZit//pimultiplecp}
	\InputIfFileExists{TikZit//pimultiplecp.tikz}{}{\input{./figures/TikZit//pimultiplecp.tikz}}%
	\endpgfgraphicnamed
  (Pic)
\end{lemma}
 \begin{proof}
\[  %
	\beginpgfgraphicnamed{TikZit//pimultiplecpprf2}
	\InputIfFileExists{TikZit//pimultiplecpprf2.tikz}{}{\input{./figures/TikZit//pimultiplecpprf2.tikz}}%
	\endpgfgraphicnamed
  \]
The general case follows directly from the above two special cases.    \end{proof}  

Lemma \ref{trianglehopflip},  Lemma \ref{2trianglebw2gnlm}.
 
 
 
  \begin{lemma} \cite{wangalgnorm2020}\label{andcopysemiring}
	\beginpgfgraphicnamed{TikZit//andcopymet}
	\InputIfFileExists{TikZit//andcopymet.tikz}{}{\input{./figures/TikZit//andcopymet.tikz}}%
	\endpgfgraphicnamed

  \end{lemma}
 \begin{proof}
\[  %
	\beginpgfgraphicnamed{TikZit//andcopymetsmrprf}
	\InputIfFileExists{TikZit//andcopymetsmrprf.tikz}{}{\input{./figures/TikZit//andcopymetsmrprf.tikz}}%
	\endpgfgraphicnamed
  \]
  \end{proof}  

Lemma \ref{Hopfgtr}, Corollary \ref{2triangledeloopnopiflipnslm}, Lemma \ref{trianglerpidotlm}, Lemma \ref{zeroprime},

    
    
  
  \begin{lemma}\cite{wangalgnorm2020}\label{tr5primelm}
\[%
	\beginpgfgraphicnamed{TikZit//tr5prime2}
	\InputIfFileExists{TikZit//tr5prime2.tikz}{}{\input{./figures/TikZit//tr5prime2.tikz}}%
	\endpgfgraphicnamed
\]
\end{lemma}
  \begin{proof}
$$ %
	\beginpgfgraphicnamed{TikZit//tr5primeprf2smr}
	\InputIfFileExists{TikZit//tr5primeprf2smr.tikz}{}{\input{./figures/TikZit//tr5primeprf2smr.tikz}}%
	\endpgfgraphicnamed
  $$
The second equality can be obtained by symmetry. 
    \end{proof} 
    
Lemma \ref{1tricpto2redlm}.

 
   \begin{lemma}\cite{wangalgnorm2020}\label{trianglecopylrlmsmr}
\[  %
	\beginpgfgraphicnamed{TikZit//trianglecopylrsmr}
	\InputIfFileExists{TikZit//trianglecopylrsmr.tikz}{}{\input{./figures/TikZit//trianglecopylrsmr.tikz}}%
	\endpgfgraphicnamed
 \]
 \end{lemma}    
   \begin{proof}
\[ %
	\beginpgfgraphicnamed{TikZit//trianglecopylrprf2smr}
	\InputIfFileExists{TikZit//trianglecopylrprf2smr.tikz}{}{\input{./figures/TikZit//trianglecopylrprf2smr.tikz}}%
	\endpgfgraphicnamed
  \]
The other equality can be obtained by symmetry.
       \end{proof}

  \begin{lemma}\cite{wangalg2020}\label{trianglecopylrlmsmr2}
\[  %
	\beginpgfgraphicnamed{TikZit//trianglecopylrsmr2}
	\InputIfFileExists{TikZit//trianglecopylrsmr2.tikz}{}{\input{./figures/TikZit//trianglecopylrsmr2.tikz}}%
	\endpgfgraphicnamed
 \]
 \end{lemma}    
   \begin{proof}
\[ %
	\beginpgfgraphicnamed{TikZit//trianglecopylrprf2smr2}
	\InputIfFileExists{TikZit//trianglecopylrprf2smr2.tikz}{}{\input{./figures/TikZit//trianglecopylrprf2smr2.tikz}}%
	\endpgfgraphicnamed
  \]
The other equality can be obtained by symmetry.
       \end{proof}

  \begin{lemma}\cite{wangalgnorm2020}
   \begin{equation*}
	\beginpgfgraphicnamed{TikZit//equivalentaddsemiring2}
	\InputIfFileExists{TikZit//equivalentaddsemiring2.tikz}{}{\input{./figures/TikZit//equivalentaddsemiring2.tikz}}%
	\endpgfgraphicnamed
 \quad (AD'') 
   \end{equation*}    
     \end{lemma}
    \begin{proof}
 $$%
	\beginpgfgraphicnamed{TikZit//equivalentaddruleprfsemiring}
	\InputIfFileExists{TikZit//equivalentaddruleprfsemiring.tikz}{}{\input{./figures/TikZit//equivalentaddruleprfsemiring.tikz}}%
	\endpgfgraphicnamed
  $$
   \end{proof}   

Lemma \ref{1iprf}.

  \begin{lemma}\cite{wangalgnorm2020}\label{andgate2vsmr}
	\beginpgfgraphicnamed{TikZit//andgate2vssmr}
	\InputIfFileExists{TikZit//andgate2vssmr.tikz}{}{\input{./figures/TikZit//andgate2vssmr.tikz}}%
	\endpgfgraphicnamed
 
 \end{lemma}
   \begin{proof}
   $$  %
	\beginpgfgraphicnamed{TikZit//andgate2vsprfsmr}
	\InputIfFileExists{TikZit//andgate2vsprfsmr.tikz}{}{\input{./figures/TikZit//andgate2vsprfsmr.tikz}}%
	\endpgfgraphicnamed
 $$
   \end{proof}  
   
     \begin{corollary}\cite{wangalgnorm2020}\label{andgate2vsmrcro}
Suppose $n\geq 2$. Then %
	\beginpgfgraphicnamed{TikZit//andgate2vssmrcro}
	\InputIfFileExists{TikZit//andgate2vssmrcro.tikz}{}{\input{./figures/TikZit//andgate2vssmrcro.tikz}}%
	\endpgfgraphicnamed
 
 \end{corollary}
  \begin{proof}
  This can be proved by induction. We just show the case $n=3$ here.
   $$  %
	\beginpgfgraphicnamed{TikZit//andgate2vssmrcroprf}
	\InputIfFileExists{TikZit//andgate2vssmrcroprf.tikz}{}{\input{./figures/TikZit//andgate2vssmrcroprf.tikz}}%
	\endpgfgraphicnamed
 $$
   \end{proof}  
   
   Lemma \ref{2kprf}.
   
  \begin{lemma}\cite{wangalgnorm2020}\label{andbialsmr}
	\beginpgfgraphicnamed{TikZit//andbiasmring}
	\InputIfFileExists{TikZit//andbiasmring.tikz}{}{\input{./figures/TikZit//andbiasmring.tikz}}%
	\endpgfgraphicnamed
 (BiA)
 \end{lemma}
 
 Corollary \ref{generalbialgebra}.
 
 \begin{corollary}\label{generalbialgebra}
   For any $k\geq 2$, we have   
$$%
	\beginpgfgraphicnamed{TikZit//generalBiAvariant}
	\InputIfFileExists{TikZit//generalBiAvariant.tikz}{}{\input{./figures/TikZit//generalBiAvariant.tikz}}%
	\endpgfgraphicnamed
 $$
or by  Corollary \ref{andgate2vsmrcro} equivalently,
$$%
	\beginpgfgraphicnamed{TikZit//appendixL32eqvsmr}
	\InputIfFileExists{TikZit//appendixL32eqvsmr.tikz}{}{\input{./figures/TikZit//appendixL32eqvsmr.tikz}}%
	\endpgfgraphicnamed
 $$
 \end{corollary}

 Corollary \ref{andadditionco}.

 \begin{lemma}\cite{wangalgnorm2020}\label{distributesmrcr}
	\beginpgfgraphicnamed{TikZit//distributesmr}
	\InputIfFileExists{TikZit//distributesmr.tikz}{}{\input{./figures/TikZit//distributesmr.tikz}}%
	\endpgfgraphicnamed
 (Dis)
 \end{lemma}

 Corollary \ref{distribute2genral}, Proposition \ref{picntcommut}, Corollary\ref{picntcommutcro}, Proposition \ref{picntcommutesam}, Proposition \ref{picntcommutesamgrn}, Corollary\ref{picntcommutcro2}.

  \begin{proposition}\cite{wangalgnorm2020}\label{picntcommuteand}
     $$%
	\beginpgfgraphicnamed{TikZit//picntcommuteandgt}
	\InputIfFileExists{TikZit//picntcommuteandgt.tikz}{}{\input{./figures/TikZit//picntcommuteandgt.tikz}}%
	\endpgfgraphicnamed
$$
   \end{proposition}  
    \begin{proof}
   $$  %
	\beginpgfgraphicnamed{TikZit//picntcommuteandgtprf}
	\InputIfFileExists{TikZit//picntcommuteandgtprf.tikz}{}{\input{./figures/TikZit//picntcommuteandgtprf.tikz}}%
	\endpgfgraphicnamed
 $$
   \end{proof}

    \begin{corollary}\cite{wangalgnorm2020}\label{picntcommuteandcr1}
     $$%
	\beginpgfgraphicnamed{TikZit//picntcommuteandgtcr1}
	\InputIfFileExists{TikZit//picntcommuteandgtcr1.tikz}{}{\input{./figures/TikZit//picntcommuteandgtcr1.tikz}}%
	\endpgfgraphicnamed
$$
   \end{corollary}  
   
    \begin{lemma}\label{pitrinandlm}
$$ %
	\beginpgfgraphicnamed{TikZit//pitrinand}
	\InputIfFileExists{TikZit//pitrinand.tikz}{}{\input{./figures/TikZit//pitrinand.tikz}}%
	\endpgfgraphicnamed
$$
 \end{lemma}
      \begin{proof}
   $$  %
	\beginpgfgraphicnamed{TikZit//pitrinandprf}
	\InputIfFileExists{TikZit//pitrinandprf.tikz}{}{\input{./figures/TikZit//pitrinandprf.tikz}}%
	\endpgfgraphicnamed
 $$
   \end{proof}  
   
    \begin{lemma}\cite{wangalgnorm2020}\label{hopfvar2}
$$ %
	\beginpgfgraphicnamed{TikZit//hopfvariant2}
	\InputIfFileExists{TikZit//hopfvariant2.tikz}{}{\input{./figures/TikZit//hopfvariant2.tikz}}%
	\endpgfgraphicnamed
$$
 \end{lemma}
   \begin{proof}
   $$  %
	\beginpgfgraphicnamed{TikZit//hopfvariant2prfsmr}
	\InputIfFileExists{TikZit//hopfvariant2prfsmr.tikz}{}{\input{./figures/TikZit//hopfvariant2prfsmr.tikz}}%
	\endpgfgraphicnamed
 $$
  \end{proof} 
  \begin{lemma}\cite{wangalgnorm2020}\label{ruletensorad}
$$%
	\beginpgfgraphicnamed{TikZit//ruletensoradd}
	\InputIfFileExists{TikZit//ruletensoradd.tikz}{}{\input{./figures/TikZit//ruletensoradd.tikz}}%
	\endpgfgraphicnamed
 $$
 \end{lemma}
 
\begin{proposition}\cite{wangalgnorm2020}\label{prop1}
 For any $k\geq 1$, we have
    \begin{equation}\label{prop1eq}
	\beginpgfgraphicnamed{TikZit//propo1}
	\InputIfFileExists{TikZit//propo1.tikz}{}{\input{./figures/TikZit//propo1.tikz}}%
	\endpgfgraphicnamed

 \end{equation}
  \end{proposition}
 Corollary \ref{propo1cro1},  Corollary \ref{propo1cro2},  Corollary \ref{propo1cro3}, Corollary \ref{nlinestensornormalform}, Corollary \ref{normalformtensornlines}, Proposition \ref{propadprime}, Corollary \ref{propadprimecro}.
 
   \begin{lemma}\label{andpieliminatelm}
	\beginpgfgraphicnamed{TikZit//andpieliminate}
	\InputIfFileExists{TikZit//andpieliminate.tikz}{}{\input{./figures/TikZit//andpieliminate.tikz}}%
	\endpgfgraphicnamed

 \end{lemma}
  \begin{proof}
   $$  %
	\beginpgfgraphicnamed{TikZit//andpieliminateprf}
	\InputIfFileExists{TikZit//andpieliminateprf.tikz}{}{\input{./figures/TikZit//andpieliminateprf.tikz}}%
	\endpgfgraphicnamed
 $$
  \end{proof}

   \begin{lemma}\label{ruletensorLsimplersmrlm}
	\beginpgfgraphicnamed{TikZit//ruletensorLsimsmr}
	\InputIfFileExists{TikZit//ruletensorLsimsmr.tikz}{}{\input{./figures/TikZit//ruletensorLsimsmr.tikz}}%
	\endpgfgraphicnamed

 \end{lemma}
  \begin{proof}
   $$  %
	\beginpgfgraphicnamed{TikZit//ruletensorLsimsmrprf}
	\InputIfFileExists{TikZit//ruletensorLsimsmrprf.tikz}{}{\input{./figures/TikZit//ruletensorLsimsmrprf.tikz}}%
	\endpgfgraphicnamed
 $$
  \end{proof}

  \begin{lemma}\cite{wangalgnorm2020}\label{ruletensorsmrlm}
	\beginpgfgraphicnamed{TikZit//ruletensorLsmr}
	\InputIfFileExists{TikZit//ruletensorLsmr.tikz}{}{\input{./figures/TikZit//ruletensorLsmr.tikz}}%
	\endpgfgraphicnamed
 
 \end{lemma}
  
   \begin{proof}
   $$  %
	\beginpgfgraphicnamed{TikZit//ruletensorLsmrprf}
	\InputIfFileExists{TikZit//ruletensorLsmrprf.tikz}{}{\input{./figures/TikZit//ruletensorLsmrprf.tikz}}%
	\endpgfgraphicnamed
 $$
  \end{proof}
Proposition \ref{itensorand},  Corollary \ref{nlinestensornormalformadd}, Corollary \ref{nlinestensormmultiply}.
    \begin{lemma}\cite{wangalgnorm2020}\label{raddcomplex}
	\beginpgfgraphicnamed{TikZit//addcommutation}
	\InputIfFileExists{TikZit//addcommutation.tikz}{}{\input{./figures/TikZit//addcommutation.tikz}}%
	\endpgfgraphicnamed

 \end{lemma}
  \begin{proof}
  $$  %
	\beginpgfgraphicnamed{TikZit//raddcomutecomplexprfsemiring}
	\InputIfFileExists{TikZit//raddcomutecomplexprfsemiring.tikz}{}{\input{./figures/TikZit//raddcomutecomplexprfsemiring.tikz}}%
	\endpgfgraphicnamed
 $$
  $$  %
	\beginpgfgraphicnamed{TikZit//raddcomutecomplexprfsemiring4}
	\InputIfFileExists{TikZit//raddcomutecomplexprfsemiring4.tikz}{}{\input{./figures/TikZit//raddcomutecomplexprfsemiring4.tikz}}%
	\endpgfgraphicnamed
 $$
  \end{proof}
  Proposition \ref{addcommutatgen}, Proposition \ref{addcommutatgencont},  Proposition \ref{multiplypimulticommutg}, Corollary \ref{multiplypimulticommutgcro}, Corollary \ref{multiplypimulticommutgcro2}, Proposition \ref{andpicomt}, Proposition \ref{addpidoublecom}, Proposition \ref{multipidoublecom}.
  
    \begin{proposition}\label{addpimultiplycommutsmrlm}
	\beginpgfgraphicnamed{TikZit//addpimultiplycommutesmr}
	\InputIfFileExists{TikZit//addpimultiplycommutesmr.tikz}{}{\input{./figures/TikZit//addpimultiplycommutesmr.tikz}}%
	\endpgfgraphicnamed
 
 \end{proposition}
  
   Proposition \ref{addpimultiplycommutg},   Proposition \ref{addpipairmultiplycommutgp}, 
    Proposition \ref{TR15},   Proposition \ref{pimultiaddcombinepro},   Proposition \ref{pitopaddpipaircommutprop},   Proposition \ref{multiplypimulticommute}, Corollary \ref{multiplypimulticommutecro},  Proposition \ref{addpipair2sidecommutprop},   Corollary \ref{addpipair2sidecommutcro},  Lemma \ref{cnotscomutelm},  Proposition \ref{addpipair2sidecommutprop28}, Proposition \ref{addpipair2sidecommutprop29},    Proposition \ref{addpipair2sidecommutprop29b},   Proposition \ref{addpipairmulcommutprop30a},  Proposition \ref{addpipairmulcommutprop30b}, Corollary \ref{addpipairmulcommutprop30bcro},  Proposition \ref{addpipairmulcommutprop30c},  Corollary \ref{addpipairmulcommutprop30ccro}.

 \section{Completeness of ZX-calculus over arbitrary commutative semiring $\mathcal{S}$}
 In this section, we prove the completeness of ZX-calculus over an arbitrary commutative semiring $\mathcal{S}$  following the proof for rings. For simplicity, we omit the corresponding details in the ring case if they still hold for semirings.

 \begin{theorem} \label{mainsmr}  
 The ZX-calculus over an arbitrary ring $\mathcal{S}$ is complete with respect to the rules in Figure \ref{figurealgebrasemi1} and Figure \ref{figurealgebrasemi2}.
   \end{theorem}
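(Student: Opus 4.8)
The plan is to replay the proof of Theorem~\ref{main} step for step, substituting for every ingredient that used the Hadamard node $H$ or the triangle inverse $T^{-1}$ a counterpart that survives over a semiring. The normal form itself transfers essentially unchanged: the only matrix operations needed to reduce an arbitrary vector to the shape~(\ref{normalfring}) are ``add an $\mathcal{S}$-multiple of one row to another'' and ``multiply a row by an element of $\mathcal{S}$'', and neither uses a negative, so both the row-addition diagrams and the row-multiplication diagram carry the same interpretation and the same defining equations over $\mathcal{S}$ as over $\mathcal{R}$, including the $m=0$ scalar normal form. Together with map--state duality this already delivers universality: every $2^m\times 2^n$ matrix over $\mathcal{S}$ is the interpretation of some ZX diagram. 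Moreover, by the semiring version of Proposition~\ref{addcommutatgencont} all the row-addition diagrams still commute, so the normal form is well defined up to reordering.

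As in the ring case, completeness then reduces to three rewriting facts: (i) the tensor product of two normal forms rewrites to a normal form; (ii) a self-plugging of a normal form rewrites to a normal form; (iii) every generator, bent into a state diagram, rewrites to a normal form. For (i) and (ii) I would reuse the ring proofs verbatim — they are purely diagrammatic, and their hypotheses are exactly the derivable equalities collected in the preceding section, from Lemma~\ref{hopfnslm} down to Corollary~\ref{addpipairmulcommutprop30ccro}. Each of those is either quoted from \cite{wangalgnorm2020} or given a fresh semiring proof there, and in particular the auxiliary results behind Theorem~\ref{selfplug} (Propositions~\ref{rule10}, \ref{rule12th}, \ref{rule12extengen} and their corollaries) remain available, so the statement of Theorem~\ref{selfplug} goes through unchanged.

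For (iii) there is in fact less to prove than over a ring, since the semiring generator set (Table~\ref{qbzxgeneratorsemi}) drops $H$ and $T^{-1}$: the green spiders, $T$, $\sigma$, $\mathbb I$, $C_a$, $C_u$ are handled as in \cite{wangalgnorm2020}, leaving only $R_X^{(n,m)}$ and $P$. For these I would use the decomposition Lemmas~\ref{redspidertonormalfm1lma}--\ref{redspidertonormalfm3lma} to split an arbitrary red spider into ``$\ket{\,}$-type'' and ``copy-type'' pieces, rewrite each to a normal form, and then dispatch $P$ directly; the scalar case is finished as in the ring argument, a scalar diagram being a state diagram plugged from below with cups (and possibly a red co-spider), whose state part rewrites to a normal form by the previous step, with any excess outputs removed by self-plugging and a two-output normal form closed by a cup becoming the scalar normal form. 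The main obstacle, as one would expect, is not this bookkeeping but the re-derivation over $\mathcal{S}$ of the Hopf-type and AND-gate identities that over a ring relied on the $\pi$-phase acting as an antipode to cancel terms; the replacement strategy is to arrange that the unwanted terms never appear, which is precisely why the semiring axiomatisation adds the structural rules (ASym), (AAs), (Bkp) and why Lemma~\ref{andcopysemiring}, Lemma~\ref{andgate2vsmr}, Lemma~\ref{pitrinandlm} and Lemma~\ref{andpieliminatelm} receive new proofs. Once these are in hand, the arguments for (i)--(iii) are routine and, modulo the substitutions above, identical to the ring case, yielding $\llbracket D_1\rrbracket=\llbracket D_2\rrbracket\Rightarrow D_1=D_2$.
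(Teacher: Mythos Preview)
Your proposal is correct and follows essentially the same route as the paper: transfer the ring normal form unchanged, reduce completeness to the three rewriting facts (tensor product, self-plugging, generators-to-normal-form), and supply semiring-specific proofs only where the ring argument invoked $H$, $T^{-1}$, or an antipode-style cancellation. The paper differs only in minor bookkeeping---it records explicit semiring replacements for Propositions~\ref{rule10} and~\ref{rule12th} rather than quoting them verbatim, and in the scalar step it allows plugging by a green co-spider as well as a red one---but the architecture is identical to what you describe.
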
      

 \subsection{Normal form}  
 First we introduce the diagrams for some elementary matrices over $\mathcal{S}$. 
 \[  \left\llbracket%
	\beginpgfgraphicnamed{TikZit//rowaddrepresent}
	\InputIfFileExists{TikZit//rowaddrepresent.tikz}{}{\input{./figures/TikZit//rowaddrepresent.tikz}}%
	\endpgfgraphicnamed
\right\rrbracket=\begin{blockarray}{cccccl}
\begin{block}{(ccccc)l}
     1 & \cdots & 0 &\cdots & 0 &r_0\\
     \vdots    & \ddots & &&  \vdots&  \\
        0   & \cdots & 1 & \cdots& a_j&r_j\\
       \vdots    & &  & \ddots&  \vdots &  \\
        0   & \cdots &0 & \cdots& 1&r_{2^m-1}\\
\end{block}
\end{blockarray}\] where $a_j\in \mathcal{S}$ lies in the $r_j$ row, $j=2^m-1-(2^{j_1}+\cdots+2^{j_s})$.
This  represents a matrix row addition from the bottom row.
 $$  \left\llbracket%
	\beginpgfgraphicnamed{TikZit//rowmultsemir}
	\InputIfFileExists{TikZit//rowmultsemir.tikz}{}{\input{./figures/TikZit//rowmultsemir.tikz}}%
	\endpgfgraphicnamed
\right\rrbracket=\begin{blockarray}{cccccl}
\begin{block}{(ccccc)l}
     1 & \cdots & 0 &\cdots & 0 &r_0\\
     \vdots    & \ddots & &&  \vdots&  \\
        0   & \cdots & 1 & \cdots& 0&r_k\\
       \vdots    & &  & \ddots&  \vdots &  \\
        0   & \cdots &0 & \cdots& a_{2^m-1}&r_{2^m-1}\\
\end{block}
\end{blockarray}$$
This represents a matrix row multiplication on the bottom row  multiplied by $a_{2^m-1} \in \mathcal{S}$.

Similar to the ring case, any  vector $(a_0, a_1, \cdots, a_{2^m-1})^T$ with $a_i \in \mathcal{S}$ can be uniquely represented by the following normal form:
 \begin{equation}\label{normalfmsemr}
	\beginpgfgraphicnamed{TikZit//normalformsemir}
	\InputIfFileExists{TikZit//normalformsemir.tikz}{}{\input{./figures/TikZit//normalformsemir.tikz}}%
	\endpgfgraphicnamed

 \end{equation}	
where for those diagrams which represent row additions, $a_i$ connects to wires with pink nodes depending on $i$, and all possible connections are included in the normal form. Actually, one can check that there are $\binom{m}{1}+\binom{m}{2} +\cdots + \binom{m}{m}=2^m-1$ row additions in the normal form. By Proposition \ref{addcommutatgencont}, all the  row addition diagrams are commutative with each other.

In the case of $m=0$, for any element $a\in \mathcal{S}$, its normal form is defined as
  $$ %
	\beginpgfgraphicnamed{TikZit//scalarnorm}
	}%
	\endpgfgraphicnamed
$$
where 
 $$\left\llbracket%
	\beginpgfgraphicnamed{TikZit//scalarnorm}
	}%
	\endpgfgraphicnamed
\right\rrbracket=a$$  
By the map-state duality as given in (\ref{maptostate}), we obtain the universality of ZX-calculus over  $\mathcal{S}$:  any $2^m \times 2^n$ matrix $A$ with $m, n \geq 0$ can be represented by a ZX diagram. 
  \subsection{Proof of completeness}
The proof of completeness for semirings is very similar to the ring case, we just give details for those which need to be modified in the semiring case.

  \subsubsection{ Rewrite the tensor product of  two normal forms 
into a normal form}  
 \begin{proposition}
The following tensor product of two normal forms can be rewritten into a single normal form:
 \begin{equation}\label{normalformtensoreqsmr}
	\beginpgfgraphicnamed{TikZit//normalformsemir}
	\InputIfFileExists{TikZit//normalformsemir.tikz}{}{\input{./figures/TikZit//normalformsemir.tikz}}%
	\endpgfgraphicnamed
\quad %
	\beginpgfgraphicnamed{TikZit//normalform3bsmr}
	\InputIfFileExists{TikZit//normalform3bsmr.tikz}{}{\input{./figures/TikZit//normalform3bsmr.tikz}}%
	\endpgfgraphicnamed

 \end{equation}	
 where $m, n $ are positive integers, $a_i, b_j \in \mathcal{S}.$
  \end{proposition}
 
 \subsubsection{ Self-plugging on a normal form}
In this subsection, we prove the following result:
 \begin{theorem} \label{selfplug}  
 A self-plugged normal form can be rewritten into a normal form. 
   \end{theorem}  
  In the following we only cite the lemma or proposition needed for the proof except for those which have to be modified in the semiring case.  
  
   \begin{proposition}\label{rule10smr}  
\[
	\beginpgfgraphicnamed{TikZit//phaseadditionsmr}
	\InputIfFileExists{TikZit//phaseadditionsmr.tikz}{}{\input{./figures/TikZit//phaseadditionsmr.tikz}}%
	\endpgfgraphicnamed

\]
  \end{proposition}
 \begin{proof}
\[  %
	\beginpgfgraphicnamed{TikZit//phaseadditionsmrprf}
	\InputIfFileExists{TikZit//phaseadditionsmrprf.tikz}{}{\input{./figures/TikZit//phaseadditionsmrprf.tikz}}%
	\endpgfgraphicnamed
 \]
 \end{proof}
 Corollary \ref{rule10exten}.
   \begin{proposition}\label{rule12thsmrlm}
    \[ %
	\beginpgfgraphicnamed{TikZit//rule12thsmr}
	\InputIfFileExists{TikZit//rule12thsmr.tikz}{}{\input{./figures/TikZit//rule12thsmr.tikz}}%
	\endpgfgraphicnamed

     \]
         \end{proposition}
    \begin{proof}
      $$  %
	\beginpgfgraphicnamed{TikZit//rule12thprfsemiring}
	\InputIfFileExists{TikZit//rule12thprfsemiring.tikz}{}{\input{./figures/TikZit//rule12thprfsemiring.tikz}}%
	\endpgfgraphicnamed
 $$
     $$  %
	\beginpgfgraphicnamed{TikZit//rule12thprf2semiring}
	\InputIfFileExists{TikZit//rule12thprf2semiring.tikz}{}{\input{./figures/TikZit//rule12thprf2semiring.tikz}}%
	\endpgfgraphicnamed
 $$
  \end{proof}
      \begin{corollary}\label{rule12thextensmrlm}
    Suppose $a$ is connected to the $i$-th line via a pink node ($i>1$). Then
      \begin{equation}\label{rule12thexteneqsmr}
	\beginpgfgraphicnamed{TikZit//rule12thextendmsmr}
	\InputIfFileExists{TikZit//rule12thextendmsmr.tikz}{}{\input{./figures/TikZit//rule12thextendmsmr.tikz}}%
	\endpgfgraphicnamed
 
     \end{equation}
    where on the left side of  (\ref{rule12thexteneqsmr}) the node $a$ is connected to the $i$-th line and the 1st line via two pink nodes.
         \end{corollary}
  
  Proposition \ref{rule12extengen}.
  
   \subsubsection{ Rewriting generators into normal form }
   
    \begin{theorem}
   Each  generator in Table  \ref{qbzxgeneratorsemi}  can be rewritten into a normal form when bended as a state diagram or already being a state diagram.
     \end{theorem}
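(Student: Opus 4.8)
The plan is to mirror the ring case almost verbatim, so that the only real work is checking that every step survives the passage to the weaker rule set of Figures~\ref{figurealgebrasemi1}--\ref{figurealgebrasemi2} (which lacks the Hadamard node and the inverse triangle). By the map--state duality recalled in~(\ref{maptostate}) it suffices, for each generator $G$ of Table~\ref{qbzxgeneratorsemi}, to rewrite the state obtained by bending all inputs of $G$ down with caps into the normal form~(\ref{normalfmsemr}). Moreover, since the tensor product of two normal forms is again a normal form, since a self-plugged normal form reduces to a normal form (Theorem~\ref{selfplug}), and since a normal form composed with a \tikzfig{TikZit//redcodot} effect can be bent back to a state, it is enough to treat the fully bent, ``maximal'' state form of each generator.

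First I would dispose of $\mathbb I$, $\sigma$, $C_a$, $C_u$, $T$ and $R_{Z,a}^{(n,m)}$. These are precisely the generators already handled in \cite{wangalgnorm2020}, and the derivations there use only spider fusion, the (co)unit conventions \tikzfig{TikZit//spider0denotesemiring2}, the bialgebra and Hopf laws, and the triangle rules (Bas0), (Bas1), (Suc) --- all of which are present, or have already been re-derived, in the semiring fragment (see the list of semiring-valid lemmas opening the previous section). Concretely, a bent green spider is already a normal form up to (S1) and the conventions above, a bent triangle is the $m=1$ normal form, and the remaining three are immediate.

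The substance is $R_X^{(n,m)}$ and $P$, and here I would proceed exactly as in the ring case. Using red-spider fusion and the generation of the red spider by the red (co)monoid and its flip (the Remark following the interpretation over $\mathcal S$), the fully bent $R_X^{(n,m)}$ is rewritten as a tree of red comultiplications fed by red units, which reduces the claim to the two base diagrams \tikzfig{TikZit//ketinorm} and \tikzfig{TikZit//redcopynorm}; the first is the trivial one-output normal form, and the second is brought to normal form by the semiring analogues of Lemmas~\ref{redspidertonormalfm1lma}, \ref{redspidertonormalfm2lma}, \ref{redspidertonormalfm4lma} and \ref{redspidertonormalfm3lma}. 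For the gate $P$ one shows directly that the bent red $\pi$ gate equals the normal form \tikzfig{TikZit//redpigatenorm}. In each of these steps I would replace any appeal to $T^{-1}$ or $H$ that appeared in the ring proofs by an argument using only (B1)--(B3), the triangle rules, (Brk), (Bkp), (AD), the AND-gate laws (ASym), (AAs), and the copy/absorption lemmas already established for semirings, e.g. Lemma~\ref{trianglecopylrlmsmr}, Lemma~\ref{pitrinandlm} and Lemma~\ref{andpieliminatelm}.

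The main obstacle I anticipate is bookkeeping rather than a new idea: the ``copy'' base case \tikzfig{TikZit//redcopynorm} is where the ring proof leans hardest on cancelling a triangle against $T^{-1}$, and in the semiring setting every such move must be recast as commuting triangles past $\pi$'s and past the bialgebra structure without ever invoking invertibility. That is precisely why the extra rules (ASym), (AAs), (Bkp) and the semiring versions of the triangle-copy and $\pi$-triangle-AND lemmas were set up earlier, so the crux is to verify that this package genuinely closes the argument --- i.e. that, once the base cases are unwound, no hidden use of $H$ or $T^{-1}$ remains.
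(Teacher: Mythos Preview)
Your proposal is correct and follows the same approach as the paper, which literally writes only ``The proof can be referenced to that of the ring case.'' You are in fact more explicit than the paper: you spell out which generators are inherited from \cite{wangalgnorm2020}, reduce $R_X^{(n,m)}$ to the same two base cases \tikzfig{TikZit//ketinorm} and \tikzfig{TikZit//redcopynorm}, handle $P$ via \tikzfig{TikZit//redpigatenorm}, and flag the need to avoid $H$ and $T^{-1}$ --- exactly the reason the semiring section re-establishes the triangle-copy, AND-gate, and $\pi$-commutation lemmas before this point.
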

  The proof can be referenced to that of the ring case.
  
\subsubsection{ Rewriting scalars into normal form }
Given an arbitrary scalar diagram over a ring $\mathcal{S}$, due to Lemma \ref{gdothredlm}, it can be seen as a state diagram on top plugged with cups,  %
	\beginpgfgraphicnamed{TikZit//greencodot}
	\begin{tikzpicture}
	\begin{pgfonlayer}{nodelayer}
		\node [style=none] (0) at (0, 0.25) {};
		\node [style=gn] (1) at (0, 0) {};
	\end{pgfonlayer}
	\begin{pgfonlayer}{edgelayer}
		\draw (1) to (0.center);
	\end{pgfonlayer}
\end{tikzpicture}}%
	\endpgfgraphicnamed
 or %
	\beginpgfgraphicnamed{TikZit//redcodot}
	}%
	\endpgfgraphicnamed
 from the bottom. Such a state diagram has been shown in the previous subsection that it can be rewritten into a normal form. In addition, a normal form plugged with  %
	\beginpgfgraphicnamed{TikZit//greencodot}
	}%
	\endpgfgraphicnamed
 or %
	\beginpgfgraphicnamed{TikZit//redcodot}
	}%
	\endpgfgraphicnamed
 can be bended up, so at the end any scalar diagram is a non-scalar normal form with outputs connected only by cups. 
Since normal form with more than 2 outputs plugged with cups can be reduced to  a normal form with outputs as proved in Theorem \ref{selfplug},
 we just need to show that a normal form with exactly 2 outputs plugged with a cup can be written into a scalar normal form  %
	\beginpgfgraphicnamed{TikZit//scalarnorm}
	}%
	\endpgfgraphicnamed
, which has also been obtained in Theorem \ref{selfplug}.

 \section{Conclusion and further work}    
In this paper, we generalise ZX-calculus over the field of complex numbers to ZX-calculus over arbitrary commutative rings and semirings, by providing the corresponding generators and rewriting rules. Furthermore, we  follow the method in \cite{wangalgnorm2020} to prove that ZX-calculus over an arbitrary commutative ring is complete for matrices over the same ring.   

 There could be many applications for the framework of ZX-calculus over arbitrary commutative rings or semirings. One interesting thing could be to do elementary number theory in the framework of ZX-calculus over the ring of integers or semiring of natural numbers. Furthermore, since we have presented al the elementary matrices in ZX form in \cite{qwangslides}, one could try to do linear algebra in an arbitrary commutative ring  with string diagrams. Finally, we could also generalise ZX-calculus over rings or semirings to higher dimensional cases.

 \section*{Acknowledgements} 

This work is supported by AFOSR grant FA2386-18-1-4028. The author would like to thank Aleks Kissinger for useful discussions. 

\bibliographystyle{eptcs}
\bibliography{generic}





\end{document}